\tikzset{>=latex}
\setlist{nosep}
\newtheorem{defn}{Definition}
\newtheorem{thm}{Theorem}
\newtheorem{lem}{Lemma}
\newtheorem{prop}{Proposition}
\newtheorem{rmk}{Remark}
\newtheorem{claim}{Claim}
\newcommand{\bP}{\mathbb{P}}
\newcommand{\bQ}{\mathbb{Q}}
\newcommand{\bR}{\mathbb{R}}
\newcommand{\bZ}{\mathbb{Z}}
\newcommand{\cF}{\mathcal{F}}
\newcommand{\cI}{\mathcal{I}}
\newcommand{\cM}{\mathcal{M}}
\newcommand{\cP}{\mathscr{P}}
\newcommand{\cU}{\mathcal{U}}
\newcommand{\cW}{\mathcal{W}}
\DeclareMathOperator{\supp}{supp}
\DeclareMathOperator{\Br}{Br}
\DeclareMathOperator*{\argmax}{arg\,max}
\newcommand\ehat[1]{%
\savestack{\tmpbox}{\stretchto{%
  \scaleto{%
    \scalerel*[\widthof{\ensuremath{#1}}]{\kern-.6pt\bigwedge\kern-.6pt}%
    {\rule[-\textheight/2]{1ex}{\textheight}}
  }{\textheight}%
}{0.5ex}}%
\stackon[1pt]{#1}{\tmpbox}%
}
\begin{document}
\title{Robust perfect equilibrium in large games\thanks{The authors are grateful to Wei He, Xiao Luo, and Satoru Takahashi for helpful discussions. Part of the results in this paper were presented at the 18th SAET Conference, Taipei, June 11--13, 2018, the 4th PKU-NUS Annual International Conference on Quantitative Finance and Economics, Shenzhen, May 11--12, 2019, and the 12th World Congress of the Econometric Society, August 17--21, 2020; and we thank the participants for their constructive comments.
Xiang Sun's research is supported by NSFC (No. 71773087), Fok Ying Tong Education Foundation (No. 171076), and Youth Scholar Team in Social Sciences of Wuhan University ``New perspectives for development economics research''.
Enxian~Chen and Yeneng Sun acknowledge the financial support from NUS Grant R-146-000-286-114.
}}
\author{Enxian~Chen\thanks{Department of Mathematics, National University of Singapore, 10 Lower Kent Ridge Road, Singapore, 119076. E-mail: \href{mailto:e0046840@u.nus.edu}{e0046840@u.nus.edu}.}
\and
Lei~Qiao\thanks{School of Economics, Shanghai University of Finance and Economics, 777 Guoding Road, Shanghai, 200433, China. E-mail: \href{mailto:qiao.lei@mail.shufe.edu.cn}{qiao.lei@mail.shufe.edu.cn}.}
\and
Xiang~Sun\thanks{Center for Economic Development Research, Economics and Management School, Wuhan University, 299 Bayi Road, Wuhan, 430072, China. E-mail: \href{mailto:xiangsun.econ@gmail.com}{xiangsun.econ@gmail.com}.}
\and
Yeneng~Sun\thanks{Departments of Economics and Mathematics, National University of Singapore, Singapore, 119076. E-mail: \href{mailto:ynsun@nus.edu.sg}{ynsun@nus.edu.sg}.}
}
\date{This version: \today}
\maketitle

\begin{abstract}
This paper proposes a new equilibrium concept ``robust perfect equilibrium'' for non-cooperative games with a continuum of players, incorporating three types of perturbations. Such an equilibrium is shown to exist (in symmetric mixed strategies and in pure strategies) and satisfy the important properties of admissibility, aggregate robustness, and ex post robust perfection. These properties strengthen relevant equilibrium results in an extensive literature on strategic interactions among a large number of agents. Illustrative applications to congestion games and potential games are presented. In the particular case of a congestion game with strictly increasing cost functions, we show that there is a unique symmetric robust perfect equilibrium.

\bigskip
\textbf{JEL classification}: C62; C65; C72; D84

\bigskip
\textbf{Keywords}: Robust perfect equilibrium, admissibility, aggregate robustness, ex post robust perfection, large games, congestion games.
\end{abstract}

\newpage
\tableofcontents
\setlength{\parskip}{2pt}


\section{Introduction}
\label{sec:intro}

In the study of interactions with a large number of agents, it is a common behavioral assumption that an individual agent chooses an optimal action with negligible effect on the aggregate. One classical example involves the idea of perfect competition. Namely, a large population of buyers and sellers ensures that supply meets demand in the market, while each individual agent cannot influence the market prices. Another classical example concerns strategic interactions among a large number of agents, which is the focus of this paper.\footnote{Economies and games with a large number of agents are also called large economies and large games respectively, which are surveyed, for example, in \cite{Hildenbrand1974}, \cite{Anderson1994} and \cite{KS2002}. For recent developments, see \cite{MP2002}, \cite{Kalai2004}, \cite{KP2009}, \cite{Yannelis2009}, \cite{Yu2014}, \cite{CP2014, CP2020}, \cite{BH2015}, \cite{Hammond2015}, \cite{HY2016}, \cite{KS2018}, \cite{Hellwig2019} and \cite{Yang2021} among others.} A particular case is the study of transportation. Travelers choose routes that they perceive to have the least cost, under the prevailing traffic conditions that cannot be influenced by an individual traveler. Such an idea of a congestion game had already been considered in \cite{Pigou1920} and \cite{Knight1924},\footnote{Some papers trace back the literature on congestion games as early as to the work of \cite{Kohl1841}.} and was formalized as an equilibrium concept in \cite{Wardrop1952}, the so-called Wardrop equilibrium.\footnote{There is a large literature on Wardrop equilibrium in the study of transportation and telecommunication networks. Google Scholar shows that \cite{Wardrop1952} has been cited in over six thousand articles. See \cite{BMW1956} for an early mathematical formulation of Wardrop equilibrium, and \citet[Section 1.4.5]{FP2003} for a modern treatment.} It is clear that Wardrop equilibrium is simply Nash equilibrium for games with a continuum of players.\footnote{See, for example, Page 5 of the lecture notes of Acemoglu and Ozdaglar at \url{https://economics.mit.edu/files/4712}.} Despite wide applications of large games in various areas,\footnote{Besides the applications in transportation and telecommunication networks, other examples include \cite{AJ2015} in macroeconomics, \cite{CMP1992} in public economics, \cite{DGG2005} in finance, \cite{Peters2010} in labor economics, \cite{Sandholm2015} in population games, \cite{OS2016} in contest games, \cite{HMC2006}, \cite{GLL2011}, \cite{CD2018} in mean-field games.} the existing literature on such games hardly goes beyond Nash equilibrium. The main purpose of this paper is to introduce and to study a new equilibrium concept (called ``robust perfect equilibrium'') which incorporates three types of perturbations. Such an equilibrium is shown to have the important properties of admissibility, aggregate robustness, and ex post robust perfection. These properties strengthen relevant equilibrium results in an extensive literature on large games. As a particular example, we obtain a unique symmetric robust perfect equilibrium in a congestion game with strictly increasing cost functions, going beyond the classical work on Wardrop equilibrium.

It is well known that a Nash equilibrium in a finite game is not necessarily admissible in the sense that players could choose weakly dominated strategies with positive probability.\footnote{See, for example, \citet[p.~248]{OR1994}. \cite{BFK2008} provide a characterization of admissibility.} The following is a variation of the classical example in \cite{Pigou1920}, which provides a simple congestion game showing the failure of admissibility of a Nash/Wardrop equilibrium. Drivers travel from the origin node $o$ to the destination node $t$ through two different Paths $a$ and $b$.
\begin{figure}[htb!]
\centering
\includegraphics{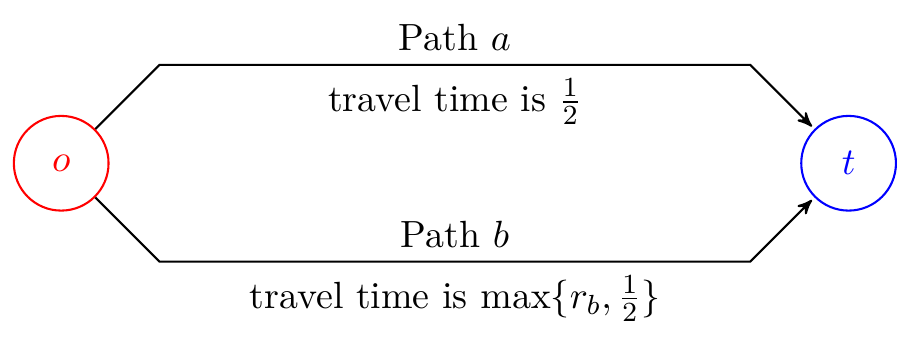}
\caption{Modified Pigou's game}
\label{fig:intro}
\end{figure}
Let $r_b$ denote the proportion of drivers choosing Path $b$. Path $a$ is a broad highway on which the transit time is always half an hour with no congestion. Path $b$ is a relatively narrow roadway on which (1) there is no congestion if no more than half of the drivers choose this roadway, and in this case the travel time is also fixed at half an hour; (2) there is congestion if at least half of the drivers choose it ($r_b \ge \frac{1}{2}$), and the travel time is $r_b$ hour. As in \cite{Pigou1920}, it is assumed that an individual driver has no influence on the proportions of the drivers on Paths $a$ and $b$, i.e., an infinite number of drivers is assumed implicitly. It is easy to identify a Nash equilibrium (or a Wardrop equilibrium) in this game (denoted by $f$): each path is taken by exactly half of the drivers. Since Path $a$ is as good as Path $b$ when $r_b\le \frac{1}{2}$ and strictly better when $r_b > \frac{1}{2}$ (Path $b$ is weakly dominated by Path $a$), there is little reason for a driver to choose Path $b$ over Path $a$. Thus, the above equilibrium, which violates admissibility, is not desirable from a behavioral point of view.

It is a widely adopted behavioral assumption that players could make mistakes with small probabilities. To reflect such an idea in a finite game, \cite{Selten1975} introduced the concept of trembling hand perfection. In particular, the strategy taken by each player is required to be a best response to a sequence of full-support perturbations on the other players' strategies.\footnote{For such an idea of perfection in the dynamic setting, see \cite{KW1982} and \cite{MR2020}.} Since the externality component in a large game usually depends on the aggregate action distributions (instead of the action profiles for all the other agents), a natural property (called ``aggregate robustness'') would require each player to choose a best response to a sequence of full-support perturbations on the aggregate action distributions. In the particular case of modified Pigou's game, any full-support perturbation on the aggregate will assign positive probability to the set of action distributions with more drivers on Path $b$ than on Path $a$. Thus, the expected travel time on Path $b$ with respect to any full-support perturbation will be more than $\frac{1}{2}$ hour. Since the expected travel time on Path $a$ is always $\frac{1}{2}$ hour, we know that a mixed strategy with positive probability on Path $b$ cannot be a best response to a sequence of full-support perturbations on the aggregate. In particular, the equilibrium $f$ in the modified Pigou's game, which requires half of the drivers to take Path $b$, violates the property of aggregate robustness. On the other hand, $f$ satisfies the idea of trembling hand perfection using full-support perturbations on the strategies of individual players as in \cite{Selten1975}, which means that the undesirable equilibrium $f$ is ruled out by aggregate robustness, but not by the usual idea of trembling hand perfection.\footnote{One can take a sequence of perturbed strategy profiles $\{f^n\}_{n=1}^\infty$ as follows: for any positive integer $n$, those drivers who choose Path $a$ in equilibrium $f$ choose Path $a$ with probability $1 - \frac{1}{n}$ and Path $b$ with probability $\frac{1}{n}$, while those drivers who choose Path $b$ in equilibrium $f$ choose Path $b$ with probability $1 - \frac{1}{n}$ and Path $a$ with probability $\frac{1}{n}$. The law of large numbers suggests that each path is taken by exactly half of the drivers. Thus, Paths $a$ and $b$ remain best responses to the sequence of perturbed strategy profiles.}

In the modified Pigou's example, Path $a$ is always a best response to any perturbation on the aggregate action distribution since Path $a$ weakly dominates Path $b$. Hence, the unique strategy profile satisfying admissibility and aggregate robustness is that all the drivers choose Path $a$. By adding a third Path $c$, we have a new congestion game with a non-dominant mixed strategy profile satisfying admissibility and aggregate robustness.
\begin{figure}[htb!]
\centering
\includegraphics{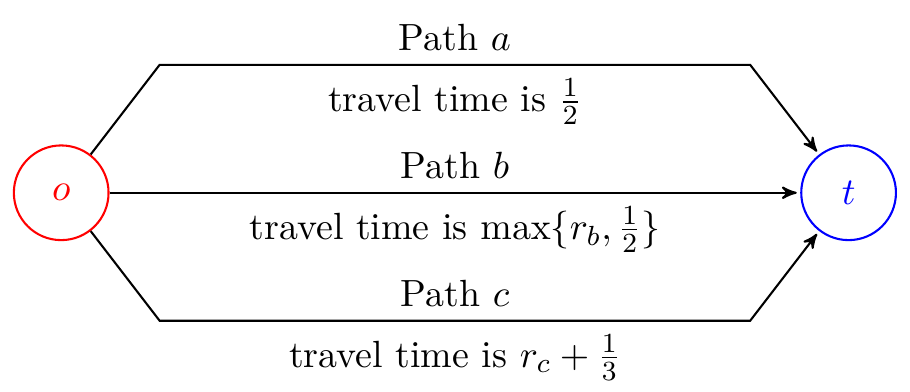}
\caption{Three-path game}
\label{fig:example}
\end{figure}
Let $r_c$ denote the proportion of drivers choosing Path $c$. Path $c$ is assumed to be a narrow shortcut with travel time $(r_c + \frac{1}{3})$ hour. Consider the symmetric strategy profile $g^0$ in which every driver uses the same mixed strategy $\sigma^0$ such that Paths $a$ and $c$ are chosen with probabilities $\frac{5}{6}$ and $\frac{1}{6}$ respectively. Intuitively, the law of large numbers suggests that the realized traffic flow induced by $g^0$ is $\tau^0 = (\frac{5}{6}, 0, \frac{1}{6})$, which means that $\frac{5}{6}$ (resp. $\frac{1}{6}$) of drivers choose Path $a$ (resp. $c$).\footnote{Such a claim is guaranteed by the exact law of large numbers as shown in \cite{Sun2006}, which is reproduced as Lemma~\ref{lem:ELLN} below.} It is easy to see that $g^0$ is a Nash/Wardrop equilibrium.\footnote{If one particular driver takes path $p$ with all the other drivers taking the mixed strategy $\sigma^0$, then her expected travel time is always $\frac{1}{2}$ no matter what $p$ is. Hence, the mixed strategy $\sigma^0$ is the best response of this driver when all the other drivers take the mixed strategy $\sigma^0$, i.e., the strategy profile $g^0$ is a Nash/Wardrop  equilibrium.} Since neither Path $a$ nor Path $c$ is weakly dominated, $g^0$ satisfies admissibility. We can also verify that $g^0$ satisfies aggregate robustness.\footnote{Let $\varepsilon_n = \frac{1}{6n}$ for any positive integer $n$. Consider a full-support distribution on the simplex $\Delta$ in $\bR^3$: $\ehat{\tau^n} = (1-\varepsilon_n)\delta_{(x_n, y_n, z_n)} + \varepsilon_n \eta$, where $x_n = \frac{5}{6}$, $y_n=\tfrac{\varepsilon_n}{6-6\varepsilon_n}$, $z_n = \frac{1-2\varepsilon_n}{6-6\varepsilon_n}$, $\delta_{(x_n, y_n, z_n )}$ is the Dirac measure at $(x_n, y_n, z_n)$, and $\eta$ is the uniform distribution on $\Delta$. The sequence $\{\ehat{\tau^n}\}_{n \in \bZ_+}$ provides full-support perturbations to the traffic flow $\tau^0$. Fix any perturbation $\ehat{\tau^n}$. The expected travel time for a driver to choose Path $a$ is $\tfrac{1}{2}$ hour. It is straightforward to check that if a driver chooses Path $c$, then her expected travel time is still $\tfrac{1}{2}$ hour. The same argument as in the modified Pigou's example implies that the expected travel time on Path $b$ with respect to any full-support perturbation will be more than $\tfrac{1}{2}$ hour. Hence, both $a$ and $c$ are best responses with respect to each perturbation $\ehat{\tau^n}$; so is the mixed strategy $\sigma^0 = (\frac{5}{6}, 0, \frac{1}{6})$. This concludes that $g^0$ satisfies aggregate robustness.}

It will be shown in Subsection \ref{sec:example:revisit} that $\tau^0$ is the only aggregate action distribution induced by a Nash equilibrium with admissibility and aggregate robustness in the above three-path game. Given the aggregate action distribution $\tau^0$, every path is optimal for each driver (with the same expected travel time $\tfrac{1}{2}$ hour). This also means that any pure strategy profile with $\frac{5}{6}$ of drivers choosing Path $a$ and $\frac{1}{6}$ of drivers choosing Path $c$ will be a Nash equilibrium, which will lead to a continuum of equilibrium satisfying admissibility and aggregate robustness. Thus, each individual driver faces the coordination problem with other drivers to play any such pure strategy equilibrium, unless the equilibrium action for every player is specified  by a social planner. On the other hand, if every driver takes the same mixed strategy $\sigma^0$, then such a symmetric Nash equilibrium is simple to ``implement'' without the coordination problem.\footnote{It is obvious that the three-path game has no symmetric pure strategy Nash equilibrium.} That is, every individual driver can roll (independently from other drivers) a six-sided dice to determine which path to take: Path $c$ if the outcome is one, and Path $a$ otherwise. As mentioned above, the law of large numbers suggests that the realized traffic flow of $g^0$ is still $\tau^0 = (\frac{5}{6}, 0, \frac{1}{6})$, which implies that the realized pure strategy profile is also a Nash equilibrium satisfying admissibility and aggregate robustness. Thus, the Nash equilibrium property as well as the properties of admissibility and aggregate robustness on $g^0$ are preserved after every driver takes the dice-determined path.

Since large games have wide applications, a natural question is whether one can formulate a new equilibrium concept refining Nash equilibrium and satisfying the desirable properties of admissibility and aggregate robustness. When one works with a mixed strategy profile in a large game, each player chooses a pure strategy according to some randomization device (independently across players) as specified by her mixed strategy (like rolling a dice as above). As discussed in the above paragraph, it will be useful for the new equilibrium in mixed strategies (especially in symmetric mixed strategies) to have the ex post preservation property in the sense that the realized pure strategy profile conforms to the same equilibrium concept. Working with such a symmetric mixed strategy equilibrium in a general large game can alleviate the coordination difficulty by individual players, and generate pure strategy profiles with the same equilibrium properties after realizations of the mixed strategies.

In this paper, we introduce the notion of robust perfect equilibrium by considering the following perturbations: (i) perturbations on the player space---we allow a small number of players to be ``irrational''; (ii) perturbations in players' strategies---this captures the idea that every player may ``tremble'' and fail to choose a best response; (iii) perturbations in aggregate action distributions---it allows that the reported societal summaries may be somewhat inaccurate. Theorem~\ref{thm:robust} shows that any robust perfect equilibrium in a general large game is a Nash equilibrium and satisfies the desirable properties of admissibility, aggregate robustness and ex post preservation (called ``ex post robust perfection'' in this particular setting). Theorem~\ref{thm:exist} establishes the existence of robust perfect equilibrium in symmetric mixed strategies (i.e., players with the same payoff take the same strategy) and also in pure strategies.

As illustrative applications of our main results, we consider general atomless congestion games and potential games.\footnote{In addition to several classical references mentioned earlier, recent applications of congestion games include \cite{Milchtaich2000, Milchtaich2004}, \cite{RT2002}, \cite{AO2007}, \cite{FILRS2016}, and \cite{AMMO2018}. Potential games are also widely studied in economics; see the references in \cite{CM2014} (which considers perfect equilibrium in the setting of finite-player potential games).} It follows from our Theorem~\ref{thm:exist} that pure strategy robust perfect equilibria exist in such games. In the specific setting of congestion games, our results indicate that the large literature on Wardrop (Nash) equilibrium can be strengthened to robust perfect equilibrium with the important properties of admissibility, aggregate robustness and ex post robust perfection. We also propose an optimization method for finding $\varepsilon$-robust perfect equilibria. In the case that the cost functions are strictly increasing in a congestion game, we show that there is a unique symmetric robust perfect equilibrium and the traffic flow induced by any (possibly non-symmetric) robust perfect equilibria is unique; there is no coordination problem at all for the drivers since the drivers can follow the unique symmetric  robust perfect equilibrium. Those results enrich the classical result on the uniqueness of Wardrop equilibrium as in \citet[Section~3.1.3, p.~3.13]{BMW1956}. It means that the unique Wardrop equilibrium satisfies not only the individual optimality condition as in the literature, but also the additional important properties of admissibility and aggregate robustness as considered in this paper. Furthermore, we show that an admissible Nash equilibrium must be robust perfect in any two-path congestion game.

The rest of the paper is organized as follows. In Section~\ref{sec:mixPE}, we present a model for large games, formulate the two properties and robust perfect equilibrium rigorously, and state various results about robust perfect equilibrium. Section~\ref{sec:example} revisits the three-path game and provides two additional examples to clarify the concepts of admissibility, aggregate robustness, and robust perfect equilibrium. Section~\ref{sec:application} considers robust perfect equilibrium in general atomless congestion games and potential games. All the proofs are collected in Section~\ref{sec:appen}.

\section{Robust perfect equilibria}
\label{sec:mixPE}

The modified Pigou's example in Section~\ref{sec:intro} shows that a Nash/Wardrop equilibrium may not satisfy any of the properties of admissibility and aggregate robustness. However, the three-path game in Section~\ref{sec:intro} also indicates that both properties can be satisfied nontrivially. In this section, we introduce and study the concept of robust perfect equilibrium with the two desirable properties. Some notations and basic definitions about large games are presented in Subsection~\ref{sec:mixPE:game}, while the admissibility and aggregate robustness are formally stated in Subsection~\ref{sec:mixPE:properties}. In Subsection~\ref{sec:mixPE:robustPE}, we introduce the notion of robust perfect equilibrium. Such an equilibrium is shown to exist with the properties of admissibility and aggregate robustness. Moreover, a robust perfect equilibrium satisfies ex post robust perfection in the sense that almost every realized pure strategy profile is a robust perfect equilibrium.

\subsection{Some basic definitions}
\label{sec:mixPE:game}

In this subsection, we introduce some notations and definitions for large games. Let $(I, \cI, \lambda)$ be an atomless probability space denoting the set of players,\footnote{Throughout this paper, we follow the convention that a probability space is complete and countably additive.} and $A = \{ a_1, a_2, \ldots, a_K \}$ a finite set representing the common action set for each player.\footnote{In this paper, we focus on large games with finite actions for simplicity. All the main results can be generalized to large games with general heterogeneous (and possibly infinite) action sets.} Let $\Delta$ denote the set of probability measures on $A$, which can be identified with the unit simplex in $\bR^K$
$$ \Delta = \{(x_1, x_2, \ldots, x_K) \mid x_1 + x_2 + \cdots + x_K = 1, x_k \ge 0, \forall k = 1, 2, \dots, K\}, $$
where $x_k$ in the vector $(x_1, x_2, \ldots, x_K)$ represents the probability at $a_k$. Given an action profile for the agents, the action distribution which specifies the proportions of agents taking the individual actions in $A$ (also called a societal summary) can be viewed an element in $\Delta$. Each player's payoff is a continuous function on $A \times \Delta$, which means that her payoff depends on her own actions and societal summaries continuously. Let $\cU$ be the space of continuous functions on $A \times \Delta$ endowed with the sup-norm topology and the resulting Borel $\sigma$-algebra.

A large game $G$ is a measurable mapping from $(I, \cI, \lambda)$ to $\cU$. A \emph{pure strategy profile} $f$ is a measurable function from $(I, \cI, \lambda)$ to $A$. Let $\lambda f^{-1}$ be the societal summary (also denoted by $s(f)$), which is the action distribution of all the players. That is, $s(f) (a_k)$ (or the $k$-th component of $s(f)$ when viewed as a vector in the simplex $\Delta$) is the proportion of players taking action $a_k$ for any $1 \le k \le K$.

Let $(\Omega, \cF, \bP)$ be a sample probability space that captures all the uncertainty associated with the individual randomization when all the players play mixed strategies. Throughout the paper, we assume that the player space $(I, \cI, \lambda)$ together with the sample space $(\Omega, \cF, \bP)$ allows a rich Fubini extension.\footnote{A Fubini extension extends the usual product probability space and retains the Fubini property, which is used in \cite{Sun2006} to address the issue of joint measurability and to prove the exact law of large numbers for a continuum of independent random variables. A Fubini extension is rich if it has a continuum of independent and identically distributed random variables with the uniform distribution on $[0, 1]$; see \citet[Section 5]{Sun2006}, \cite{SZ2009}, and \cite{Podczeck2010} for constructions of rich Fubini extensions. The formal definitions are stated in Definition~\ref{defn:fubini} below. Hereafter, any measurable function on $I \times \Omega$ is meant to be measurable with respect to a rich Fubini extension.} For each player, a mixed strategy $\xi$ is a measurable function from $\Omega$ to $A$. The induced action distribution of the strategy $\xi$ is denoted by $\bP \xi^{-1}$, i.e., $\bP \xi^{-1} (a_k) = \bP \bigl( \xi^{-1} (a_k) \bigr)$ is the probability of taking action $a_k$ under $\xi$ for any $1 \le k \le K$. A \emph{mixed strategy profile} is a measurable function $g \colon I \times \Omega \to A$ with \emph{essentially pairwise independent} random actions in the sense that for $\lambda$-almost all $i \in I$, the random actions $g(i, \cdot)$ and $g(j, \cdot)$ are independent for $\lambda$-almost all $j \in I$.\footnote{A mixed strategy profile captures the idea of independent randomizations of individual agents' mixed strategies explicitly. Given a mixed strategy profile $g$, $i \mapsto \bP g_i^{-1}$ defines a measurable function from $I$ to $\Delta$, which is also called a randomized strategy profile in Subsection \ref{sec:appen:randomized}. Namely, a randomized strategy profile indicates only the action distributions of individual agents without modeling the cross-agent independent randomizations explicitly.} Moreover, a mixed strategy profile $g$ is said to be \emph{symmetric} if for any two players $i$ and $i'$, $\bP g_i^{-1} = \bP g_{i'}^{-1}$ whenever $u_i = u_{i'}$.

A mixed strategy profile $g$ is said to be a \emph{Nash equilibrium} if for $\lambda$-almost all player $i \in I$,
$$ \int_\Omega u_i \bigl( g_i(\omega), s(g_\omega) \bigr) \dif \bP(\omega)  \ge  \int_\Omega u_i \bigl( \xi(\omega), s(g_\omega) \bigr) \dif \bP(\omega) \text{ for any mixed strategy } \xi, $$
where $s(g_\omega) = \lambda g_\omega^{-1}$ is the societal summary of $g_\omega$. The inequality above requires that almost every player's equilibrium strategy maximizes her expected payoff, which takes account of other players' strategies. A pure strategy profile $f \colon I \to A$ is said to be a \emph{Nash equilibrium} if the degenerated mixed strategy profile $g \colon I \times \Omega \to A$ with $g(i, \omega) = f(i)$ for all $(i, \omega) \in I \times \Omega$ is a Nash equilibrium.

\subsection{Admissibility and aggregate robustness}
\label{sec:mixPE:properties}

In this subsection, we rigorously define the desirable properties: admissibility and aggregate robustness. From now onwards, when there is no confusion, a strategy (profile) usually refers to a mixed strategy (profile).

We first introduce the property of admissibility. A strategy profile satisfying admissibility should preclude weakly dominated strategies. For a player $i \in I$, a strategy $\xi$, and a societal summary $\tau \in \Delta$, we abuse the notation by using $u_i(\xi, \tau)$ to denote the expected payoff $\int_\Omega u_i \bigl( \xi(\omega), \tau \bigr) \dif \bP (\omega)$.

\begin{defn}[Admissibility]
\label{defn:admissibility}
\rm
A pure strategy $a \in A$ for player $i$ is said to be a \emph{weakly dominated strategy} if there exists a strategy $\xi \colon \Omega \to A$ such that $u_i(a, \tau) \le u_i(\xi, \tau)$ for each $\tau \in \Delta$ and $u_i(a, \tau') < u_i(\xi, \tau')$ for some $\tau' \in \Delta$.

A strategy profile $g \colon I \times \Omega \to A$ is said to be \emph{admissible} if for $\lambda$-almost all $i \in I$, the induced distribution $\bP g_i^{-1}$ puts zero probability on any weakly dominated strategy.
\end{defn}

Next, we describe the property of aggregate robustness. Intuitively, a strategy profile is aggregate robust if almost all player's strategy is a best response to a sequence of full-support perturbations on the societal summary. Let $\cM(\Delta)$ be the space of Borel probability measures on the simplex $\Delta$ endowed with the topology of weak convergence of measures, and $\ehat{\cM}(\Delta)$ the subset of Borel probability measures with full support on $\Delta$. For any player $i \in I$, action $a \in A$, and $\ehat{\tau} \in \cM(\Delta)$, we abuse the notation by using $u_i (a, \ehat{\tau})$ to denote the expected payoff $\int_{\Delta} u_i(a, \tau') \dif\ehat{\tau}(\tau')$, when player $i$ takes action $a$ and views the societal summary as a probability distribution on $\Delta$. A \emph{perturbation function} $\varphi$ is a Borel measurable mapping from $\Delta$ to $\ehat{\cM}(\Delta)$.

\begin{defn}[Aggregate robustness]
\label{defn:agg-robust}
\rm
A strategy profile $g \colon I \times \Omega \to A$ is said to be \emph{aggregate robust} if for $\lambda$-almost all player $i \in I$, there exists a sequence of perturbation functions $\{\varphi^n\}_{n \in \bZ_+}$ such that
\begin{itemize}
\item for any societal summary $\tau \in \Delta$, $\varphi^n (\tau)$ weakly converges to the Dirac measure $\delta_\tau$ at $\tau$;
\item for each $n \in \bZ_+$,
$$ \int_\Omega u_i \Bigl( g_i(\omega), \varphi^n \bigl( s(g_\omega) \bigr) \Bigr) \dif \bP(\omega) \ge \int_\Omega u_i \Bigl( \xi(\omega), \varphi^n \bigl( s(g_\omega) \bigr) \Bigr) \dif \bP(\omega) \text{ for any strategy } \xi, $$
where $s(g_\omega) = \lambda (g_\omega)^{-1} \in \Delta$ is the aggregate action distribution of $g_\omega$.
\end{itemize}
\end{defn}

The aggregate robustness requires that there is a sequence of perturbation functions converging to the societal summary, and almost all player's original strategy is always a best response to those perturbed societal summaries. As noted in the proof of Theorem~\ref{thm:robust} below, our definition of aggregate robustness for large games is strong enough to imply admissibility.

\subsection{Robust perfect equilibrium: concept and results}
\label{sec:mixPE:robustPE}

In this subsection, we introduce the notion of robust perfect equilibrium. Such an equilibrium is shown to exist with the properties of admissibility, aggregate robustness, and ex post robust perfection.

We first state the notion of $\varepsilon$-robust perfect equilibrium.

\begin{defn}[$\varepsilon$-robust perfect equilibrium]
\label{defn:e-mixPE}
\rm
A strategy profile $g^\varepsilon \colon I \times \Omega \to A$ is said to be an \emph{$\varepsilon$-robust perfect equilibrium} if
\begin{enumerate}[label=(\roman*)]
\item for $\lambda$-almost all $i \in I$, $\bP (g_i^\varepsilon)^{-1}$ has a full support on $A$,
\item there exist a measurable set of players $I^\varepsilon \subseteq I$ and a perturbation function $\varphi^\varepsilon$ such that
    \begin{itemize}
    \item $\lambda(I^\varepsilon) > 1-\varepsilon$, and $\varphi^\varepsilon(\tau)$ puts $(1-\varepsilon)$-weight on $\delta_\tau$ for each $\tau \in \Delta$,
    \item for each player $i \in I^\varepsilon$, for any $a_m$ and $a_\ell$ in $A$,
        \begin{equation}
        \label{eq:e-mixPE}
        \int_\Omega u_i \Bigl( a_m, \varphi^\varepsilon \bigl( s(g_\omega^\varepsilon) \bigr) \Bigr) \dif \bP(\omega)
        < \int_\Omega u_i \Bigl( a_\ell, \varphi^\varepsilon \bigl( s(g_\omega^\varepsilon) \bigr) \Bigr) \dif \bP(\omega)  \Rightarrow  \bP (g_i^\varepsilon)^{-1}(a_m) \le \varepsilon.
        \end{equation}
    \end{itemize}
\end{enumerate}
\end{defn}

In Condition (i), $\bP (g_i^\varepsilon)^{-1}$ describes player $i$'s probabilities of choosing actions under her $\varepsilon$-equilibrium strategy $g_i^\varepsilon$. The requirement of full support captures the idea that player $i$ may ``tremble'' and play any one of her actions. In Condition (ii), we introduce $I^\varepsilon$ to denote the set of ``rational'' players, which is called a \emph{perturbed player space}. To guarantee a strategy profile to be an $\varepsilon$-robust perfect equilibrium, it requires that most players are rational, and each of them assigns a relatively small probability to non-best-response actions with respect to the perturbed societal summary.

Then we define the robust perfect equilibrium as the ``limit'' of a sequence of $\varepsilon$-robust perfect equilibria.

\begin{defn}[Robust perfect equilibrium]
\label{defn:mixPE}
\rm
A strategy profile $g \colon I \times \Omega \to A$ is said to be a \emph{robust perfect equilibrium} if there exist a sequence of strategy profiles $\{g^n\}_{n \in \bZ_+}$ and a sequence of positive constants $\{\varepsilon_n\}_{n \in \bZ_+}$ such that
\begin{enumerate}[label=(\roman*)]
\item each $g^n$ is an $\varepsilon_n$-robust perfect equilibrium associated with a perturbed player space $I^n$ and a perturbation function $\varphi^n$, and $\varepsilon_n \to 0$ as $n$ goes to infinity,
\item for $\lambda$-almost all $i \in I$, there exists a subsequence $\{g^{n_k}\}_{k=1}^{\infty}$ such that $i \in \cap_{k=1}^\infty I^{n_k}$ and $\lim\limits_{k \to \infty} \bP (g^{n_k}_i)^{-1} = \bP g_i^{-1}$,
\item $\lim\limits_{n \to \infty} \int_I \bP (g^n_i)^{-1} \dif\lambda(i) = \int_I \bP g_i^{-1} \dif\lambda(i)$.
\end{enumerate}
\end{defn}

Recall that each $I^n$ is a set of rational players for $g^n$. Condition (ii) simply means that almost all player's equilibrium strategy can be approximated in distribution by a sequence of her ``rational'' strategies. Condition (iii) says that the average action distributions in the perturbed sequence converge to the average equilibrium action distribution.

A pure strategy profile $f \colon I \to A$ is said to be a \emph{pure strategy robust perfect equilibrium} if the degenerated mixed strategy profile $g \colon I \times \Omega \to A$ with $g(i, \omega) = f(i)$ for all $(i, \omega) \in I \times \Omega$ is a robust perfect equilibrium.

Note that the notion of robust perfect equilibrium allows three types of perturbations: (i) perturbations on the player space---a small number of players are allowed to be ``irrational''; (ii) perturbations in players' strategies---individual players may make mistakes; (iii) perturbations in societal summaries---reported societal summaries may be inaccurate.

We are now ready to present our first main result.

\begin{thm}
\label{thm:robust}
Any robust perfect equilibrium is admissible and aggregate robust. Moreover, we have the following properties.
\begin{enumerate}[label=(\arabic*)]
\item Any robust perfect equilibrium $g$ is ex post robust perfect in the sense that for $\bP$-almost all $\omega \in \Omega$, $g_\omega \colon I \to A$ is a pure strategy robust perfect equilibrium.
\item Any robust perfect equilibrium is a Nash equilibrium.
\end{enumerate}
\end{thm}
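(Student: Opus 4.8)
The plan is to push everything through the exact law of large numbers. By Lemma~\ref{lem:ELLN}, for $\bP$-almost all $\omega$ one has $s(g_\omega)=\bar g:=\int_I\bP g_i^{-1}\,\dif\lambda$ and, simultaneously for all $n\in\bZ_+$, $s(g^n_\omega)=\bar g^n:=\int_I\bP(g^n_i)^{-1}\,\dif\lambda$, so that Condition~(iii) of Definition~\ref{defn:mixPE} reads $\bar g^n\to\bar g$. The engine for all four assertions is the following fact: for $\lambda$-almost every $i$, if $\{n_k\}$ is the subsequence supplied by Condition~(ii) (so $i\in I^{n_k}$ and $\bP(g^{n_k}_i)^{-1}\to\bP g_i^{-1}$), then every action $a\in\supp(\bP g_i^{-1})$ is a maximizer of $u_i(\cdot,\varphi^{n_k}(\bar g^{n_k}))$ for all large $k$. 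Indeed $\bP(g^{n_k}_i)^{-1}(a)\to\bP g_i^{-1}(a)>0$, hence eventually exceeds $\varepsilon_{n_k}$, and the contrapositive of \eqref{eq:e-mixPE} (after using $s(g^{n_k}_\omega)=\bar g^{n_k}$) then forbids any $a_\ell$ from strictly beating $a$ under $\varphi^{n_k}(\bar g^{n_k})$.

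Aggregate robustness is then immediate. For such an $i$ define perturbation functions $\psi^k$ by $\psi^k(\bar g):=\varphi^{n_k}(\bar g^{n_k})$ and $\psi^k(\tau):=(1-\tfrac1k)\delta_\tau+\tfrac1k\eta$ for $\tau\ne\bar g$, with $\eta$ a fixed full-support measure; each $\psi^k(\tau)$ has full support and converges weakly to $\delta_\tau$ (at $\tau=\bar g$ because $\bar g^{n_k}\to\bar g$ and $\varepsilon_{n_k}\to0$), and since $s(g_\omega)=\bar g$ a.e.\ the fact above makes $g_i$ a best response under each $\psi^k$. Admissibility follows at once: a weakly dominated $a\in\supp(\bP g_i^{-1})$ would be beaten strictly by its dominating strategy on a nonempty open subset of $\Delta$, which every full-support $\psi^k(\bar g)$ charges, contradicting the optimality of $a$. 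Finally, letting $\psi^k(\bar g)\to\delta_{\bar g}$ and passing the best-response inequality to the limit by continuity of $u_i$ shows that $g_i$ is optimal against $\bar g=s(g_\omega)$ itself, i.e.\ $g$ is a Nash equilibrium; this gives Assertion~(2).

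For Assertion~(1) fix a good $\omega$ and set $f:=g_\omega$, a pure profile with $s(f)=\bar g$. By the Fubini property of the rich extension, $g(i,\omega)\in\supp(\bP g_i^{-1})$ for $\lambda$-almost all $i$, so by the fact above each realized action is a robust best response. I would then produce a sequence of $\varepsilon'_m$-robust perfect equilibria $h^m$ converging in distribution to the degenerate profile of $f$: take $h^m$ concentrated on $g(\cdot,\omega)$ with vanishing full-support trembles, steering the trembles so that $\int_I\bP(h^m_i)^{-1}\,\dif\lambda=\bar g^{n(m)}$ for a well-chosen index $n(m)$, and take the perturbation function $\varphi^{\prime m}(\tau):=(1-\varepsilon_{n(m)})\delta_\tau+\varepsilon_{n(m)}\mu^{n(m)}$, where $\varphi^{n(m)}(\bar g^{n(m)})=(1-\varepsilon_{n(m)})\delta_{\bar g^{n(m)}}+\varepsilon_{n(m)}\mu^{n(m)}$. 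Then $\varphi^{\prime m}$ evaluated at the societal summary $\bar g^{n(m)}$ of $h^m$ equals $\varphi^{n(m)}(\bar g^{n(m)})$ verbatim, so the optimality of the realized actions transfers exactly and, since only $g(i,\omega)$ carries mass above $\varepsilon'_m$, each $h^m$ meets Definition~\ref{defn:e-mixPE}; as $h^m\to\delta_{g(\cdot,\omega)}$ this exhibits $f$ as a pure strategy robust perfect equilibrium.

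The step I expect to be hardest is the \emph{uniformization} buried in the last construction. Definition~\ref{defn:e-mixPE} insists on a single perturbation function and a single perturbed player space of measure $>1-\varepsilon'_m$, whereas the robustness above is witnessed only along \emph{player-dependent} subsequences $\{n_k\}$, and the requirement that $\varphi^{\prime m}(\tau)$ place $(1-\varepsilon'_m)$-weight on $\delta_\tau$ pins the perturbation's centre to the realized summary, forcing the exact match $\int_I\bP(h^m_i)^{-1}\,\dif\lambda=\bar g^{n(m)}$ to be achieved within the tremble budget. I plan to control this through $p^n_i:=\bP g_i^{-1}\{a:\bP(g^n_i)^{-1}(a)>\varepsilon_n\}$, for which the fact above gives $\limsup_n p^n_i=1$ almost everywhere; the work is to combine this with the aggregate convergence $\bar g^n\to\bar g$ of Condition~(iii)---which rules out the balanced-oscillation pathologies that would otherwise keep every $\int_I p^n_i\,\dif\lambda$ below $1$---to select a single index $n(m)$ with $\int_I p^{n(m)}_i\,\dif\lambda$ near $1$, and then to convert ``for a.e.\ $i$, a.e.\ $\omega$'' into ``for a.e.\ $\omega$, a set of players of measure $>1-\varepsilon'_m$'' by Fubini. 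Reconciling this selection with the exact summary-matching, in the likely absence of a uniform modulus of continuity for the family $\{u_i\}$, is the delicate point; I expect a Lusin/Egorov uniformization over players, absorbing the exceptional set into the $\varepsilon'_m$-slack, to close the gap.
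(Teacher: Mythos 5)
Your treatment of aggregate robustness, admissibility, and the Nash property is correct and essentially identical to the paper's: the paper also extracts the ``engine'' fact that every $a\in\supp\bP g_i^{-1}$ eventually satisfies $\bP(g^{n_k}_i)^{-1}(a)>\varepsilon_{n_k}$ along the player-dependent subsequence, hence is optimal against $\varphi^{n_k}(\bar g^{n_k})$; it then splices $\varphi^{n_k}(\bar g^{n_k})$ into a new perturbation function at the point $\bar g$, derives admissibility from full support of the perturbations, and passes to the limit for the Nash property.

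The ex post robust perfection step is where your route genuinely diverges, and it contains a gap that your proposed fix does not close. You build the approximating profiles around the realized \emph{limit} profile $g_\omega$, which forces you to find, for each target accuracy, a \emph{single} index $n(m)$ such that $g_\omega(i)\in\Br_i\bigl(\varphi^{n(m)}(\bar g^{n(m)})\bigr)$ for a set of players of measure $>1-\varepsilon'_m$. The hypotheses only certify optimality along player-dependent subsequences, and your hope that Condition~(iii) ``rules out the balanced-oscillation pathologies'' is unfounded: Condition~(iii) constrains only the average action distribution $\int_I\bP(g^n_i)^{-1}\dif\lambda$, and is perfectly consistent with the certifying index rotating among players so that $\lambda\{i: g_\omega(i)\in\Br_i(\varphi^n(\bar g^n))\}$ stays bounded away from $1$ for every $n$ (the relevant measure-theoretic inequality, $\mu(\limsup_n Q^n)\ge\limsup_n\mu(Q^n)$, points the wrong way, so full measure of $\limsup_n Q^n$ gives you nothing about any single $Q^n$). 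The paper avoids this uniformization problem entirely by a different construction: it takes $f^n_\omega(i)=(1-\varepsilon_n)\delta_{g^n_\omega(i)}+\varepsilon_n h^n(i)$, built around the realized \emph{perturbed} profiles $g^n_\omega$, so that ``most players rational at index $n$'' follows from the exact law of large numbers applied to $h^n$ at each fixed $n$ (giving $\lambda(I^n_\omega)\ge(1-K\varepsilon_n)^2$); the remaining issue --- that $f^{n_r}_\omega(i)\to\delta_{g_\omega(i)}$ along a player-dependent subsequence with $i\in I^{n_r}_\omega$ --- is then handled by first invoking Lemma~\ref{lem:universality} to re-realize the $g^n$ so that $\{g^n_i\}_{n\in\bZ_+}$ are independent \emph{across} $n$ for almost every $i$, and then applying the second Borel--Cantelli lemma to the events $\{g^{n_k}_i(\omega)\in\{a\}\cap\Br_i(\ehat{s(h^{n_k})})\}$, which have probabilities bounded below. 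This cross-$n$ independence device is the missing idea in your argument; without it (or the change of backbone from $g_\omega$ to $g^n_\omega$), the Lusin/Egorov uniformization you sketch has no leverage, because the obstruction is not a measurability or continuity issue but the absence of any hypothesis tying a common index to a common large set of players.
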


Theorem \ref{thm:robust} (1) indicates that the concept of robust perfect equilibrium has the ex post preservation property, which establishes the relationship between a robust perfect equilibrium and its realized pure strategy profiles. Thus, Theorem~\ref{thm:robust} shows that a robust perfect equilibrium has all the desired properties as discussed earlier. The next question is whether such an equilibrium exists, especially in symmetric mixed strategies. The following theorem shows the existence of robust perfect equilibria in symmetric mixed strategies and in pure strategies.

\begin{thm}
\label{thm:exist}
The following existence results hold:
\begin{enumerate}[label=(\arabic*)]
\item Every large game has a symmetric mixed strategy robust perfect equilibrium.
\item Every large game has a pure strategy robust perfect equilibrium.
\end{enumerate}
\end{thm}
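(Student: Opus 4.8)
The plan is to prove the symmetric mixed-strategy part (1) by first constructing, for each $\varepsilon>0$, a symmetric $\varepsilon$-robust perfect equilibrium through a fixed-point argument, and then extracting a robust perfect equilibrium as a limit along a sequence $\varepsilon_n\to 0$; the pure-strategy part (2) will then follow essentially for free from part (1) together with the ex post robust perfection established in Theorem~\ref{thm:robust}(1).

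For a fixed $\varepsilon>0$, I would fix the perturbation function $\varphi^\varepsilon(\tau)=(1-\varepsilon)\delta_\tau+\varepsilon\eta$, where $\eta\in\ehat{\cM}(\Delta)$ is the uniform (full-support) distribution on $\Delta$; this is Borel measurable in $\tau$ and puts $(1-\varepsilon)$-weight on $\delta_\tau$ as required. Working at the level of payoff types $u\in\cU$ (so as to build in symmetry from the start), I would write the perturbed expected payoffs $\bar u_\tau(a)=\int_\Delta u(a,\tau')\,\dif\varphi^\varepsilon(\tau)(\tau')$ and define the correspondence of admissible perturbed responses
$$R^\varepsilon_u(\tau)=\bigl\{\sigma\in\Delta:\ \sigma(a_k)\ge\eta_0\ \forall k,\ \ \sigma(a_m)\le\varepsilon\ \text{ whenever } \bar u_\tau(a_m)<\max_\ell\bar u_\tau(a_\ell)\bigr\},$$
where $\eta_0=\eta_0(\varepsilon)>0$ is a small lower bound chosen so that the set is nonempty while remaining consistent with the $\varepsilon$-cap (e.g.\ $\eta_0\le\varepsilon$ and $K\eta_0\le 1$). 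Each $R^\varepsilon_u(\tau)$ is convex and compact, and $\tau\mapsto R^\varepsilon_u(\tau)$ has closed graph because the strict inequality defining a non-best response is an open condition and $\tau\mapsto\bar u_\tau(a)$ is continuous. I would then form the aggregate response correspondence $\Psi^\varepsilon(\tau)=\int_\cU R^\varepsilon_u(\tau)\,\dif(\lambda G^{-1})(u)$, an Aumann integral over the type distribution. Since the integrand is convex-valued, $\Psi^\varepsilon$ is convex-valued irrespective of atoms in $\lambda G^{-1}$, and it is nonempty- and compact-valued and upper hemicontinuous by standard integration-of-correspondence results. Kakutani's theorem yields a fixed point $\tau^\varepsilon\in\Psi^\varepsilon(\tau^\varepsilon)$, hence a measurable selection $\beta^\varepsilon\colon\cU\to\Delta$ with $\beta^\varepsilon(u)\in R^\varepsilon_u(\tau^\varepsilon)$ and $\int_\cU\beta^\varepsilon\,\dif(\lambda G^{-1})=\tau^\varepsilon$. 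Using the rich Fubini extension, I would realize $i\mapsto\beta^\varepsilon(u_i)$ as an essentially pairwise independent mixed strategy profile $g^\varepsilon$ with $\bP(g^\varepsilon_i)^{-1}=\beta^\varepsilon(u_i)$; the exact law of large numbers (Lemma~\ref{lem:ELLN}) gives $s(g^\varepsilon_\omega)=\tau^\varepsilon$ for $\bP$-almost all $\omega$, so the expected payoffs in \eqref{eq:e-mixPE} reduce to $\bar u_{\tau^\varepsilon}$ and $g^\varepsilon$ is a symmetric $\varepsilon$-robust perfect equilibrium with perturbed player space $I^\varepsilon=I$.

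For the limit, I would take $\varepsilon_n\to 0$, write $\beta^n=\beta^{\varepsilon_n}$ and $\tau^n=\tau^{\varepsilon_n}$, and pass to a subsequence with $\tau^n\to\tau^\ast$ in the compact simplex $\Delta$. The key step is to produce a single symmetric limit profile satisfying both Condition~(ii) (an a.e.\ subsequential limit of the individual distributions) and Condition~(iii) (convergence of the aggregates) of Definition~\ref{defn:mixPE}. I would consider the cluster-set correspondence $u\mapsto\Ls_n\beta^n(u)\subseteq\Delta$, which is measurable, nonempty- and compact-valued, and invoke a Fatou-type lemma—applied to the atomless part of $\lambda G^{-1}$, with the countably many atoms handled by a diagonal subsequence—to obtain a measurable selection $\beta(u)\in\Ls_n\beta^n(u)$ whose integral equals the limit aggregate, $\int_\cU\beta\,\dif(\lambda G^{-1})=\tau^\ast$. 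Because the selection is taken from a correspondence depending only on the type $u$, it is automatically symmetric. Realizing $\beta$ on the rich Fubini extension as a mixed strategy profile $g$ then gives a symmetric robust perfect equilibrium: Condition~(i) holds since $\varepsilon_n\to0$, Condition~(ii) holds because $\beta(u)$ is a cluster point of $\beta^n(u)$ (with $I^n=I$, so $i\in\cap_k I^{n_k}$ trivially), and Condition~(iii) holds by the integral identity.

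Finally, part (2) is immediate: the symmetric mixed-strategy robust perfect equilibrium $g$ from part (1) is, by Theorem~\ref{thm:robust}(1), ex post robust perfect, so the realized pure strategy profile $g_\omega\colon I\to A$ is a pure strategy robust perfect equilibrium for $\bP$-almost all $\omega$; choosing any such $\omega$ yields the desired pure equilibrium. The main obstacle is the limit step: reconciling the pointwise requirement (Condition~(ii)) with the aggregate requirement (Condition~(iii)) forces one through a Fatou-type selection theorem over the atomless player space rather than a naive barycenter or weak-$L^1$ limit (which would secure (iii) but destroy (ii)), and this must be carried out while preserving symmetry and the joint measurability needed to realize the limit as a genuine mixed strategy profile on the Fubini extension.
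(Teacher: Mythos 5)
Your proposal is correct and follows essentially the same route as the paper: a Kakutani fixed-point argument on the aggregate of a perturbed, full-support-constrained best-response correspondence indexed by payoff type (the paper uses the quotient player space $\bar I=I/\!\sim$ and the correspondence $(1-\varepsilon)\argmax+\varepsilon\nu$ where you use the type distribution $\lambda G^{-1}$ on $\cU$ and an inequality-constrained response set, which is immaterial), followed by the multidimensional Fatou-type selection lemma (the paper cites \citet[Lemma~3, p.~69]{Hildenbrand1974}) to reconcile Conditions (ii) and (iii) in the limit, lifting to the Fubini extension via Proposition~\ref{prop:equivalence}, and deducing part (2) from ex post robust perfection. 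Your extra care about atoms of $\lambda G^{-1}$ in the Fatou step is harmless but unnecessary, since that lemma does not require atomlessness for simplex-valued functions.
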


\section{Examples}
\label{sec:example}

In this section, we first revisit the three-path game in Section \ref{sec:intro}. It will be shown that the game has a unique symmetric robust perfect equilibrium and the traffic flow induced by any (possibly non-symmetric) robust perfect equilibria is also unique. Two additional examples are provided to clarify the concepts of admissibility, aggregate robustness, and robust perfect equilibrium. In particular, Subsection \ref{sec:example:ad-ro} presents an admissible Nash equilibrium in a large game, which is not aggregate robust. A further example of a large game is given in Subsection \ref{sec:example:non-robust} showing that a Nash equilibrium with the properties of admissibility and aggregate robustness may not be a robust perfect equilibrium.

\subsection{A revisit of the three-path game in Section \ref{sec:intro}}
\label{sec:example:revisit}

As an illustration of robust perfect equilibrium, we revisit the three-path game considered in Section \ref{sec:intro}. Let $\tau = \bigl( \tau(a), \tau(b), \tau(c) \bigr)$ be a traffic flow (i.e., a distribution of drivers' choices), where each $\tau(p)$ describes the proportion of drivers choosing path $p$. Based on the time cost of each path, we define the payoff function $u$ of drivers, which depends on the chosen path and the traffic flow:
$$ u(a, \tau) = -\tfrac{1}{2}, \ u(b, \tau) = - \max\{\tau(b), \tfrac{1}{2}\}, \ u(c, \tau) = - \tau(c) - \tfrac{1}{3}. $$
Clearly, this payoff function is a decreasing function.

\paragraph{Determining all Nash equilibrium traffic flows.} Now we try to determine all the traffic flows induced by (mixed strategy) Nash equilibria. Let $g$ be a strategy profile from $I \times \Omega$ to $A = \{a, b, c\}$, and $\tau = \bigl( \tau(a), \tau(b), \tau(c) \bigr) = \int_I \bP g_i^{-1} \dif\lambda$ the average action distribution induced by $g$. It follows from the exact law of large numbers as in \cite{Sun2006} (see Lemma~\ref{lem:ELLN} below) that the societal action distribution $\lambda g_\omega^{-1}$ is $\tau$ for $\bP$-almost all $\omega \in \Omega$.

We determine the distribution $\tau$ as follows. Firstly, suppose $\tau(b) > \frac{1}{2}$ and hence $u(b, \tau) < -\tfrac{1}{2} = u(a, \tau)$, which implies that those drivers choosing $b$ with positive probabilities will have incentives to deviate by shifting the probabilities from choosing $b$ to $a$. Next, suppose $\tau(c) > \frac{1}{6}$ and hence $u(c, \tau) < -\tfrac{1}{2} = u(a, \tau)$, which means each driver choosing $c$ with positive probability will be better off if she shifts the probability from choosing $c$ to $a$. Finally, suppose $\tau(c) < \frac{1}{6}$ and $u(c, \tau) > -\tfrac{1}{2} = u(a, \tau) \ge u(b, \tau)$, then all the drivers who choose $a$ or $b$ can shift their probabilities from choosing $a$ or $b$ to $c$ to gain higher expected payoffs. Hence, in order for $g$ to be a Nash equilibrium, we must have $\tau(c) = \frac{1}{6}$ and $\tau(b) \in [0, \frac{1}{2}]$.

When $\tau(b) \in [0, \frac{1}{2}]$, it is a Nash equilibrium that each driver chooses the strategy $\bigl( \frac{5}{6}-\tau(b), \tau(b), \frac{1}{6}\bigr)$, since each action leads to the same payoff $-\frac{1}{2}$ given the action distribution $\tau$.

Therefore, the traffic flows $\bigl( \frac{5}{6}-\tau(b), \tau(b), \frac{1}{6} \bigr)$ with $\tau(b) \in [0, \frac{1}{2}]$ form all the Nash equilibrium traffic flows.

\paragraph{Unique Nash equilibrium traffic flow with admissibility and aggregate robustness.} For each $\tau(b) \in (0, \frac{1}{2}]$, the Nash equilibrium $g$ with average action distribution $\tau = \bigl( \frac{5}{6}-\tau(b), \tau(b), \frac{1}{6} \bigr)$ is shown to violate the properties of admissibility and aggregate robustness.

First, for each $\tau(b) \in (0, \frac{1}{2}]$, some drivers choose Path $b$ with positive probabilities in equilibrium. However, Path $b$ is weakly dominated by Path $a$ for each driver. Thus, the property of admissibility is violated in such an equilibrium.

Next, for each $\tau(b) \in (0, \frac{1}{2}]$, let $\ehat{\tau} \in \ehat{\cM}(\Delta)$ be any full-support perturbation of $\tau$. Since Path $b$ is weakly dominated by Path $a$, we have that $u(a, \ehat{\tau}) = \int_\Delta u(a, \tau') \dif\ehat{\tau}(\tau') > \int_\Delta u(b, \tau') \dif\ehat{\tau}(\tau') = u(b, \ehat{\tau})$. Hence, the strategies taken by those drivers choosing $b$ with positive probabilities will not be best responses to any sequence of full-support perturbations. It implies that $g$ violates the property of aggregate robustness.

\paragraph{The uniqueness of symmetric robust perfect equilibrium.} Now we turn to the traffic flow $(\frac{5}{6}, 0, \frac{1}{6})$. We have seen in Section~\ref{sec:intro} that the strategy profile $g^0$ in which each driver uses the strategy $\sigma^0 = (\frac{5}{6}, 0, \frac{1}{6})$ satisfies admissibility and aggregate robustness. Here we strengthen this result by showing that $g^0 \colon I \times \Omega \to A$ is a robust perfect equilibrium. Given the uniqueness of Nash equilibrium traffic flow with admissibility and aggregate robustness shown above, it follows  that $g^0$ is the unique symmetric robust perfect equilibrium. Under the strategy profile $g^0$, the strategic uncertainty ensures that each driver $i$ chooses $a$ with probability $\frac{5}{6}$ and $c$ with probability $\frac{1}{6}$.

For each $\varepsilon \in (0, \frac{1}{3})$, we consider a strategy profile $g^\varepsilon$ such that for every driver $i \in I$,
$$ \bP (g^\varepsilon_i)^{-1} = (\tfrac{5}{6}-\varepsilon, \varepsilon, \tfrac{1}{6}). $$
By the exact law of large numbers, for $\bP$-almost all $\omega \in \Omega$, the induced societal summary $s(g_\omega^\varepsilon) = \lambda (g^\varepsilon_\omega)^{-1}$ is
$$ s(g_\omega^\varepsilon) = \lambda (g^\varepsilon_\omega)^{-1} = \int_I \bP (g^\varepsilon_i)^{-1} \dif\lambda(i) = (\tfrac{5}{6}-\varepsilon, \varepsilon, \tfrac{1}{6}). $$
We use $\tau^\varepsilon$ to denote the societal summary $(\tfrac{5}{6}-\varepsilon, \varepsilon, \tfrac{1}{6})$.

Let $\varphi \colon \Delta \to \ehat{\cM}(\Delta)$ be a perturbation function on societal summaries such that $\varphi(\tau) = (1-\varepsilon) \delta_\tau + \frac{\varepsilon}{2} \delta_{(1,0,0)} +\frac{\varepsilon}{2} \eta$, where $\delta_\tau$ is the Dirac measure at $\tau \in \Delta$ (especially, $\delta_{(1,0,0)}$ is the Dirac measure at $(1,0,0)$), and $\eta$ is the uniform probability distribution on $\Delta$. For simplicity, we write $\varphi(\tau)$ as $\ehat{\tau}$.

Given $g^\varepsilon$, each driver $i$'s expected payoff, when choosing path $p$, is
$$ \int_\Omega u \bigl( p, \ehat{s(g_\omega^\varepsilon)} \bigr) \dif \bP(\omega)  =  \int_\Omega u \bigl( p, \ehat{\tau^\varepsilon} \bigr) \dif \bP(\omega)  =  u \bigl( p, \ehat{\tau^\varepsilon} \bigr). $$
If driver $i$ chooses Path $a$, then the expected payoff is $-\tfrac{1}{2}$ since $u ( a, \tau)$ is the constant $-\tfrac{1}{2}$. Similarly, if driver $i$ chooses Path $c$, then the expected payoff is
\begin{align*}
u \bigl( c, \ehat{\tau^\varepsilon} \bigr)
    & = (1-\varepsilon) u \bigl( c, (\tfrac{5}{6}-\varepsilon, \varepsilon, \tfrac{1}{6}) \bigr) + \tfrac{\varepsilon}{2} u \bigl( c, (1, 0, 0) \bigr) + \tfrac{\varepsilon}{2} \int_\Delta \bigl( - \tau' (c) -  \tfrac{1}{3} \bigr) \dif \eta(\tau')  \\
    & = (1-\varepsilon) ( -\tfrac{1}{6} - \tfrac{1}{3} ) + \tfrac{\varepsilon}{2} ( -\tfrac{1}{3}) + \tfrac{\varepsilon}{2} ( -\tfrac{1}{3} - \tfrac{1}{3} ) = -\tfrac{1}{2};
\end{align*}
recall that $\int_\Delta \tau' (c) \dif \eta(\tau') = \tfrac{1}{3}$ by symmetry. Finally, if driver $i$ chooses Path $b$, then the expected payoff is
\begin{align*}
u \bigl( b, \ehat{\tau^\varepsilon} \bigr)
    & = (1-\varepsilon) u \bigl( b, (\tfrac{5}{6}-\varepsilon, \varepsilon, \tfrac{1}{6}) \bigr) + \tfrac{\varepsilon}{2} u \bigl( b, (1, 0, 0) \bigr) + \tfrac{\varepsilon}{2} \int_\Delta u \bigl( b, \tau' \bigr) \dif \eta(\tau')   \\
    & = (1-\varepsilon) (-\tfrac{1}{2}) + \tfrac{\varepsilon}{2} (-\tfrac{1}{2}) + \tfrac{\varepsilon}{2} \int_\Delta u \bigl( b, \tau' \bigr) \dif \eta(\tau')    \\
    & < (1-\varepsilon) (-\tfrac{1}{2}) + \tfrac{\varepsilon}{2} (-\tfrac{1}{2}) + \tfrac{\varepsilon}{2} \int_\Delta \bigl(-\tfrac{1}{2} \bigr) \dif \eta(\tau') = -\tfrac{1}{2}.
\end{align*}
The inequality follows from the fact that $u \bigl( b, \tau' \bigr) \le -\tfrac{1}{2}$ for any $\tau'$ and $u \bigl( b, \tau' \bigr) = - \tau'(b) < -\tfrac{1}{2}$ if $\tau'(b) > \tfrac{1}{2}$.

Therefore, $a$ and $c$ are the best responses for each driver $i$. Since $\bP (g^\varepsilon_i)^{-1}$ has $(1-\varepsilon)$-weight on the set of best responses for each driver $i \in I$, $g^\varepsilon$ is an $\varepsilon$-robust perfect equilibrium (with the perturbed driver space being $I$).

For each $n \in \bZ_+$, we define a strategy profile $g^n$ from $g^\varepsilon$ by letting $\varepsilon = \frac{1}{6n}$. Then, each $g^n$ is a $\frac{1}{6n}$-robust perfect equilibrium (with the perturbed driver space being $I$). Now, letting $n$ approach infinity, we have that $\lim\limits_{n \to \infty} \bP (g_i^n)^{-1} = \lim\limits_{n \to \infty} (\frac{5}{6}-\frac{1}{6n}, \frac{1}{6n}, \tfrac{1}{6}) = (\frac{5}{6}, 0, \tfrac{1}{6}) = \bP (g_i^0)^{-1}$ for each $i \in I$ and $\lim\limits_{n \to \infty} \int_I \bP (g_i^n)^{-1} \dif\lambda(i) = \int_I \bP (g_i^0)^{-1} \dif\lambda(i)$. That is, $g^0$ is a robust perfect equilibrium.

Since each robust perfect equilibrium has the property of ex post robust perfection, for $\bP$-almost all $\omega \in \Omega$, the realized pure strategy profile $g_\omega^0$ is also a robust perfect equilibrium. Moreover, by the exact law of large numbers, we conclude that in the pure strategy robust perfect equilibrium $g^0_\omega$, $\frac{5}{6}$ of the drivers choose Path $a$ (the broad highway) and $\frac{1}{6}$ of the drivers choose Path $c$ (the narrow shortcut).

\subsection{An admissible Nash equilibrium without aggregate robustness}
\label{sec:example:ad-ro}

In the modified Pigou's game and the three-path game in Section~\ref{sec:intro}, we obtain equilibria with the desired properties of admissibility and aggregate robustness by eliminating the weakly dominated strategy. As mentioned in the end of Subsection \ref{sec:mixPE:properties}, aggregate robustness  implies admissibility.
In this subsection, we present a large game possessing an admissible Nash equilibrium without aggregate robustness.

We modify the game in \citet[Section~2.1]{AHP2007} as follows. 
Agents choose simultaneously among three actions: they can fiercely attack (action $a$), or mildly attack (action $b$), or refrain from attacking (action $c$). For each player $i$, her payoff function is given by
$$ u(a, \tau) = \theta_1 \bigl( \tau(a) - \tfrac{1}{2} \bigr), \ u(b, \tau) = \theta_2 \bigl( \tau(b) - \tfrac{1}{2} \bigr), \ u(c, \tau) = 0, $$
where $\theta_1$ and $\theta_2$ exogenously parameterize the relative cost of an attack, with $1 > \theta_1 > \theta_2 > 0$.

Consider a symmetric strategy profile $g^0$ in which every player uses the same mixed strategy $\sigma^0 = (\tfrac{1}{2},\tfrac{1}{2},0)$. The law of large numbers implies that the induced action distribution $\tau^0 = (\tfrac{1}{2}, \tfrac{1}{2}, 0)$. Since $u(a, \tau^0) = u(b, \tau^0) = u(c, \tau^0) = 0$, $g^0$ is a Nash equilibrium. Moreover, $g^0$ is admissible as no action is weakly dominated. However, we will show that $g^0$ does not satisfy the aggregate robustness below.

We prove this argument by contradiction. Suppose that $g^0$ satisfies the aggregate robustness. For each player $i$, her strategy $\sigma^0$ is a best response with respect to a sequence of full-support perturbations $\{\ehat{\tau^n}\}_{n\in\bZ_+}$, where $\ehat{\tau^n}$ weakly converges to $\delta_{\tau^0}$. Therefore, both $a$ and $b$ should be best responses for each perturbation $\ehat{\tau^n}$:
$$ \int_\Delta u(a, \tau') \dif\ehat{\tau^n}(\tau') = \int_\Delta u(b, \tau') \dif\ehat{\tau^n}(\tau') \ge \int_\Delta u(c, \tau') \dif\ehat{\tau^n}(\tau') = 0. $$
Then we have
$$ \int_\Delta \theta_1 \bigl(\tau'(a) - \tfrac{1}{2}\bigr) \dif\ehat{\tau^n}(\tau') \ge 0,  \text{ or }  \int_\Delta \bigl(\tau'(a) - \tfrac{1}{2}\bigr) \dif\ehat{\tau^n}(\tau') \ge 0. $$
Thus,
$$ \int_{\{\tau' \mid \tau'(a) < \frac{1}{2}\}} \bigl(\tau'(a) - \tfrac{1}{2}\bigr) \dif\ehat{\tau^n}(\tau')  +  \int_{\{\tau' \mid \tau'(a) \ge \frac{1}{2}\}} \bigl(\tau'(a) - \tfrac{1}{2}\bigr) \dif\ehat{\tau^n}(\tau')  \ge  0. $$
Since $\ehat{\tau^n}$ has a full support on $\Delta$ and $\{\tau' \mid \tau'(a) < \frac{1}{2}\}$ is the union of disjoint sets $\{\tau' \mid \tau'(b) > \frac{1}{2}\}$ and $\{\tau' \mid \tau'(a) < \frac{1}{2}, \tau'(b) \le \frac{1}{2}\}$, we know that
\begin{align*}
	& \int_{\{\tau' \mid \tau'(a) < \frac{1}{2}\}} \bigl(\tfrac{1}{2} - \tau'(a)\bigr) \dif\ehat{\tau^n}(\tau')   \\
={}	& \int_{\{\tau' \mid \tau'(b) > \frac{1}{2}\}} \bigl(\tfrac{1}{2} - \tau'(a)\bigr) \dif\ehat{\tau^n}(\tau')  +  \int_{\{\tau' \mid \tau'(a) < \frac{1}{2}, \tau'(b) \le \frac{1}{2}\}} \bigl(\tfrac{1}{2} - \tau'(a)\bigr) \dif\ehat{\tau^n}(\tau')   \\
>{}	& \int_{\{\tau' \mid \tau'(b) > \frac{1}{2}\}} \bigl(\tfrac{1}{2} - \tau'(a)\bigr) \dif\ehat{\tau^n}(\tau').
\end{align*}
Therefore,
\begin{align*}
\int_{\{\tau' \mid \tau'(a) \ge \frac{1}{2}\}} \bigl(\tau'(a) - \tfrac{1}{2}\bigr) \dif\ehat{\tau^n}(\tau')
	& \ge \int_{\{\tau' \mid \tau'(a) < \frac{1}{2}\}} \bigl(\tfrac{1}{2} - \tau'(a)\bigr) \dif\ehat{\tau^n}(\tau')	\\
	& > \int_{\{\tau' \mid \tau'(b) > \frac{1}{2}\}} \bigl(\tfrac{1}{2} - \tau'(a)\bigr) \dif\ehat{\tau^n}(\tau')	\\
	& \ge \int_{\{\tau' \mid \tau'(b) > \frac{1}{2}\}} \Bigl(\tfrac{1}{2} - \bigl(1-\tau'(b)\bigr)\Bigr) \dif\ehat{\tau^n}(\tau')	\\
	& = \int_{\{\tau' \mid \tau'(b) > \frac{1}{2}\}} \bigl(\tau'(b)- \tfrac{1}{2} \bigr) \dif\ehat{\tau^n}(\tau')	\\
	& = \int_{\{\tau' \mid \tau'(b) \ge \frac{1}{2}\}} \bigl(\tau'(b)- \tfrac{1}{2} \bigr) \dif\ehat{\tau^n}(\tau'),
\end{align*}
where the third inequality is due to $\tau'(a) \le 1 - \tau'(b)$. So we conclude that
$$ \int_{\{\tau' \mid \tau'(a) \ge \frac{1}{2}\}} \bigl(\tau'(a) - \tfrac{1}{2}\bigr) \dif\ehat{\tau^n}(\tau') > \int_{\{\tau' \mid \tau'(b) \ge \frac{1}{2}\}} \bigl(\tau'(b)- \tfrac{1}{2} \bigr) \dif\ehat{\tau^n}(\tau'). $$

Since action $b$ is also a best response with respect to $\ehat{\tau^n}$, the similar arguments show that
$$ \int_{\{\tau' \mid \tau'(b) \ge \frac{1}{2}\}} \bigl(\tau'(b)- \tfrac{1}{2} \bigr) \dif\ehat{\tau^n}(\tau') > \int_{\{\tau' \mid \tau'(a) \ge \frac{1}{2}\}} \bigl(\tau'(a) - \tfrac{1}{2}\bigr) \dif\ehat{\tau^n}(\tau'), $$
which leads to a contradiction.

\subsection{A non-robust perfect Nash equilibrium with admissibility and aggregate robustness}
\label{sec:example:non-robust}

Theorem~\ref{thm:robust} indicates that a robust perfect equilibrium is a Nash equilibrium with the properties of admissibility and aggregate robustness. A natural question is whether the converse is true. The following example presents a Nash equilibrium with admissibility and aggregate robustness, which is not a robust perfect equilibrium.

Let $I_1$, $I_2$ and $I_3$ be disjoint sets in $\cI$ with union being $I$ and equal measure $\frac{1}{3}$, and $A = \{ a, b, c \}$ the common action space for all the players. The common payoff function is given as follows:
$$ u(a, \tau) = 0; $$
$$ u(b, \tau) = \begin{cases}
-1,				& \text{if } 0 \le \tau(b) < 0.2,	\\
10\tau(b)-3,	& \text{if } 0.2 \le \tau(b) < 0.3,	\\
0,				& \text{if } 0.3 \le \tau(b) < 0.8,	\\
5\tau(b)-4,		& \text{if } 0.8 \le \tau(b) \le 1;
\end{cases}
\qquad u(c, \tau) = \begin{cases}
-1,				& \text{if } 0 \le \tau(c) < 0.2,	\\
10\tau(c)-3,	& \text{if } 0.2 \le \tau(c) < 0.3,	\\
0,				& \text{if } 0.3 \le \tau(c) < 0.8,	\\
5\tau(c)-4,		& \text{if } 0.8 \le \tau(c) \le 1.
\end{cases} $$

\noindent Let $f$ be a pure strategy profile such that for each $i \in I$,
$$ f(i) = \begin{cases}
a,  & \text{if } i \in I_1,  \\
b,  & \text{if } i \in I_2,  \\
c,  & \text{if } i \in I_3.
\end{cases}$$

\begin{claim}
\label{claim:example:non-robust}
The pure strategy profile $f$ is a Nash equilibrium with the properties of admissibility and aggregate robustness. However, it is not a robust perfect equilibrium.
\end{claim}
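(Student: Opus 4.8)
The plan is to dispatch the three positive properties by direct computation and then to rule out robust perfection by means of a single structural inequality on the payoffs. Throughout, write $\tau^f=(\tfrac13,\tfrac13,\tfrac13)$ for the summary induced by $f$; by the exact law of large numbers (Lemma~\ref{lem:ELLN}) the realized summary equals $\tau^f$ for $\bP$-almost all $\omega$, and since $\tfrac13\in[0.3,0.8)$ one reads off $u(a,\tau^f)=u(b,\tau^f)=u(c,\tau^f)=0$, so every action is a best response at $\tau^f$ and $f$ is a Nash equilibrium. For admissibility I would verify that no pure action is weakly dominated by testing against the vertices: at $(1,0,0)$ one has $u(b)=u(c)=-1$, so any $\xi$ weakly dominating $a$ must place no weight on $b$ or $c$, forcing $\xi=a$; at $(0,1,0)$ one has $u(b)=1>u(a)=0>u(c)=-1$, so $b$ is undominated, and symmetrically for $c$. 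For aggregate robustness the key observation is that the definition lets each player use her \emph{own} perturbation sequence: for the $a$-players take $\varphi^n(\tau)=(1-\tfrac1n)\delta_\tau+\tfrac1n\eta$ with $\eta$ uniform, so that $u(b,\varphi^n(\tau^f))=\tfrac1n\int_\Delta u(b,\tau')\dif\eta(\tau')<0$ and likewise for $c$, keeping $a$ a strict best response; for the $b$-players replace $\eta$ by a full-support measure concentrated near $(0,1,0)$, making $u(b)>0>u(c)$, and symmetrically for the $c$-players near $(0,0,1)$.

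The heart of the argument is that $f$ is \emph{not} robust perfect, and the mechanism is the inequality
\begin{equation*}
u(b,\tau')+u(c,\tau')\le 0\quad\text{for every }\tau'\in\Delta,
\end{equation*}
with equality only at the two vertices $(0,1,0)$ and $(0,0,1)$. This is a short case check: at most one of $\tau'(b),\tau'(c)$ can exceed $0.8$, so at most one of $u(b),u(c)$ is positive, and when $\tau'(b)>0.8$ one has $\tau'(c)<0.2$, whence $u(b)+u(c)=(5\tau'(b)-4)-1=5(\tau'(b)-1)\le 0$. Because the integrand is $\le 0$ everywhere and strictly negative on a nonempty open set, any full-support perturbed summary $\ehat{\tau}$ satisfies $u(b,\ehat{\tau})+u(c,\ehat{\tau})<0$; as $u(a,\cdot)\equiv 0$, \emph{at most one} of $b,c$ can fail to be strictly dominated by $a$ under any single full-support perturbation.

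To finish, I would suppose toward a contradiction that $f$ is robust perfect, witnessed by $\varepsilon_n$-robust perfect equilibria $g^n$ with perturbed player spaces $I^n$ and perturbation functions $\varphi^n$, and set $\tau^n=\int_I\bP(g_i^n)^{-1}\dif\lambda$ and $\ehat{\tau^n}=\varphi^n(\tau^n)$ (the relevant summary, by the law of large numbers). Put $B=\{n:u(b,\ehat{\tau^n})<0\}$ and $C=\{n:u(c,\ehat{\tau^n})<0\}$; by the displayed inequality $B\cup C=\bZ_+$. For $n\in C$ the action $c$ is strictly dominated by $a$, so \eqref{eq:e-mixPE} forces $\bP(g_i^n)^{-1}(c)\le\varepsilon_n$ for every rational $i\in I^n$; bounding the $c$-mass of the non-rational players (of measure $<\varepsilon_n$) by $1$ yields $\tau^n(c)\le 2\varepsilon_n$, and symmetrically $\tau^n(b)\le 2\varepsilon_n$ for $n\in B$. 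But Condition~(iii) of robust perfection gives $\tau^n\to\tau^f$, so $\tau^n(b)\to\tfrac13$ and $\tau^n(c)\to\tfrac13$; since $\varepsilon_n\to0$, each of $B$ and $C$ can contain only finitely many indices, contradicting $B\cup C=\bZ_+$.

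\textbf{Main obstacle.} The subtle point is not any isolated computation but identifying the operative structural fact and seeing precisely why it separates the two notions. Aggregate robustness permits the $b$- and $c$-players each to choose a private perturbation supporting their own action, whereas a robust perfect equilibrium forces \emph{all} rational players to best-respond to one common summary $\ehat{\tau^n}$ at each stage. The inequality $u(b,\cdot)+u(c,\cdot)\le 0$ shows that a single full-support summary can never keep both $b$ and $c$ undominated, while Condition~(iii) insists that the aggregate retain a third of the mass on each of $b$ and $c$ in the limit; these demands are irreconcilable. Pinning down this payoff inequality as the mechanism, and coupling it with the dominated-action mass bound $\tau^n(\cdot)\le 2\varepsilon_n$, is where the real work lies—the convergence bookkeeping afterward is routine.
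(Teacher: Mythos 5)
Your proposal is correct, and for the positive properties (Nash, admissibility via the vertices of $\Delta$, aggregate robustness via player-specific perturbations tilted toward each player's own assigned action) it matches the paper's proof essentially step for step. Where you genuinely diverge is in the central mechanism for ruling out robust perfection. The paper argues by contradiction: assuming all three actions are best responses to a common full-support summary $\ehat{\tau^n}$, it derives the two incompatible strict inequalities $\int_{\{\tau'(b)>0.8\}}u(b,\tau')\dif\ehat{\tau^n}>\int_{\{\tau'(c)>0.8\}}u(c,\tau')\dif\ehat{\tau^n}$ and its mirror image, through a chain of integral comparisons exploiting the full support on the region $\{0.2\le\tau'(b)<0.3\}$. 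Your pointwise inequality $u(b,\tau')+u(c,\tau')\le 0$, strict on an open set, gets to the same place in one line and buys something extra: it identifies that the failing action must be $b$ or $c$ rather than merely ``some action,'' which makes your endgame (the sets $B$, $C$ with $B\cup C=\bZ_+$, each forced to be finite by $\tau^n(b),\tau^n(c)\to\tfrac13$) marginally cleaner than the paper's ``every limit point of $\{\tau^n\}$ lies on the boundary of $\Delta$.'' The concluding mass bound $\tau^n(\cdot)\le 2\varepsilon_n$ (rational players contribute at most $\varepsilon_n$, irrational players at most measure $\varepsilon_n$) is the same in both arguments. One small correction: your parenthetical claim that equality in $u(b,\tau')+u(c,\tau')\le 0$ holds \emph{only} at the vertices $(0,1,0)$ and $(0,0,1)$ is false --- equality also holds whenever both $\tau'(b)$ and $\tau'(c)$ lie in $[0.3,0.8]$, e.g.\ at $(0.2,0.4,0.4)$ --- but this is harmless, since your argument only uses that the sum is nonpositive everywhere and strictly negative on a nonempty open set (a neighborhood of $(1,0,0)$ suffices).
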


\section{Applications: Congestion games and potential games}
\label{sec:application}

In a congestion game, the payoff of each player depends on the path she chooses and the proportion of players choosing the same or some overlapping paths. Congestion games are a special class of potential games. Both congestion games and potential games have been widely studied and found applications in many areas.\footnote{See \cite{Wardrop1952}, \cite{Rosenthal1973}, \cite{MS1996}, \cite{Milchtaich2000, Milchtaich2004}, \cite{AV2001}, \cite{RT2002}, \cite{AO2007}, \cite{MW2009}, \cite{CM2014}, \cite{FILRS2016}, and \cite{AMMO2018} for example.} In this section, we study robust perfect equilibrium in congestion games and potential games with many players. In particular, Subsection~\ref{sec:appli:congestion} shows the existence of pure strategy robust perfect equilibria in atomless congestion games, and also the uniqueness of symmetric robust perfect equilibrium when the cost functions are strictly increasing. An optimization method is proposed in Subsection~\ref{sec:appli:algorithm} for finding $\varepsilon$-robust perfect equilibria. It is shown in Subsection~\ref{sec:appli:congestion-NEPE} that an admissible Nash equilibrium must be robust perfect in any two-path congestion game. The final subsection considers large potential games.

\subsection{Atomless congestion games}
\label{sec:appli:congestion}

An atomless congestion game can be described by a network $N = (V, E)$, where $V$ represents the set of nodes and $E$ is the set of edges. Each path between an origin node and a destination node is a sequence of distinct edges. We focus on single origin-destination congestion games here. The player space is modeled by an atomless probability space $(I, \cI, \lambda)$ as in Subsection~\ref{sec:mixPE:game}. Let $\cP$ denote the set of all paths between the origin and the destination. So $\cP$ is the set of available choices/actions to each player. In this section, we slightly abuse the notation by using $\Delta$ to denote the set of probability distributions on $\cP$. Given an action profile, for each path $p \in \cP$, we use $\tau(p)$ to denote the mass of players on path $p$. Each edge $e \in E$ is associated with a cost function $C_e \bigl( \tau(e) \bigr)$, where
$$ \tau(e) = \sum_{\{p \in \cP \mid e \in p\}} \tau(p) $$
denotes the mass of players passing through the edge $e$ and $\{p \in \cP \mid e \in p\}$ denotes the set of all the paths that traverse the edge $e$. Each cost function $C_e$ is assumed to be nonnegative, continuous, and increasing. Thus, given an action distribution $\tau = \bigl( \tau(p) \bigr)_{p \in \cP}$, the cost function of path $p \in \cP$ is given by
$$ C_p(\tau)  =  \sum_{\{e \in E \mid e \in p\}} C_e \bigl( \tau(e) \bigr)  =  \sum_{\{e \in E \mid e \in p\}} C_e \biggl( \sum_{\{p' \in \cP \mid e \in p'\}} \tau(p') \biggr). $$
Note that each player's cost (and payoff) depends only on her own choice and the action distribution of all players, and hence a congestion game is a large game. Based on Theorem~\ref{thm:exist}, we have the following existence result whose proof is in Subsection~\ref{sec:appen:application}.

\begin{prop}
\label{prop:congestion-PE-exist}
Every atomless congestion game has a symmetric mixed strategy robust perfect equilibrium and a pure strategy robust perfect equilibrium. Moreover, if every cost function $C_e$ is strictly increasing, then there is a unique symmetric robust perfect equilibrium and the traffic flow induced by any (possibly non-symmetric) robust perfect equilibria is unique.
\end{prop}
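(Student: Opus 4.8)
The plan is to read off the two existence claims from the general theory and then to separate the uniqueness claim into a flow part, handled by convexity, and a strategy part, handled by the refinement. First I would observe that an atomless congestion game is a large game in the sense of Subsection~\ref{sec:mixPE:game}: the action set is the finite set $\cP$ of paths, and since every edge cost $C_e$ is continuous, the common payoff $u(p,\tau)=-C_p(\tau)$ is a continuous function on $\cP\times\Delta$ (the map $\tau\mapsto\tau(e)$ is linear, and $C_p$ is a finite sum of continuous edge costs). Hence $G\colon I\to\cU$ is (trivially) measurable and Theorem~\ref{thm:exist} applies verbatim, producing both a symmetric mixed strategy robust perfect equilibrium and a pure strategy robust perfect equilibrium. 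This settles the first sentence.

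For the uniqueness of the traffic flow when each $C_e$ is strictly increasing, I would first use Theorem~\ref{thm:robust}(2) to reduce to a statement about Nash (equivalently Wardrop) equilibria, so that the flow induced by any robust perfect equilibrium is a Wardrop flow. I would then introduce the Beckmann potential
\begin{equation*}
\Phi(\tau)=\sum_{e\in E}\int_0^{\tau(e)}C_e(s)\,\dif s, \qquad \text{where } \tau(e)=\sum_{\{p\in\cP\mid e\in p\}}\tau(p),
\end{equation*}
and recall the standard variational characterization (\citet[Section~3.1.3]{BMW1956}) that a feasible flow is a Wardrop equilibrium if and only if it minimizes $\Phi$ over the flow polytope. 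Since each $C_e$ is strictly increasing, $s\mapsto\int_0^{s}C_e$ is strictly convex, so $\Phi$ is strictly convex in the vector of edge flows $\bigl(\tau(e)\bigr)_{e\in E}$. A midpoint argument then finishes it: if two equilibrium flows disagreed on some edge, the potential would be strictly smaller at their midpoint, contradicting minimality; hence all robust perfect equilibria induce the same traffic flow.

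Finally, for the uniqueness of the symmetric robust perfect equilibrium I would use that a symmetric profile has every player playing a common $\sigma\in\Delta$, so its induced flow is exactly $\sigma$; the previous step then fixes this flow, and it remains to pin down $\sigma$ itself. Here I would invoke admissibility and aggregate robustness from Theorem~\ref{thm:robust}: any path whose cost strictly exceeds the equilibrium cost at the unique equilibrium flow is weakly dominated, and, by the full-support perturbation argument already used to eliminate Path~$b$ in the three-path game, can carry no positive weight, so $\supp\sigma$ is confined to the robustly optimal paths. I expect the genuine obstacle to lie precisely at this last step. Strict convexity of $\Phi$ controls only the edge flows, not their decomposition into path weights, so edge-flow uniqueness does not by itself force a unique $\sigma$ (distinct path decompositions can realize identical edge flows). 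The crux is therefore to show that the full-support trembles built into the robust perfect refinement—rather than the bare Wardrop/variational structure—select a single admissible and aggregate robust $\sigma$, i.e.\ that the perturbation functions $\varphi^n$ witnessing aggregate robustness cannot simultaneously sustain two distinct symmetric strategies sharing the unique equilibrium flow. This is exactly the content that goes beyond the classical uniqueness of Wardrop equilibrium and where the perturbation analysis must be pushed through in detail.
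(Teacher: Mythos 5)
Your treatment of the existence claims and of the reduction to Wardrop flows matches the paper: existence is read off Theorem~\ref{thm:exist}, and Theorem~\ref{thm:robust} together with the exact law of large numbers reduces flow uniqueness to the classical result of \citet[Section~3.1.3]{BMW1956}. (The paper routes this through ex post robust perfection---almost every realization $g_\omega$ is a pure strategy Nash equilibrium, hence $s(g_\omega)=\tau^0$, and the ELLN then gives $\int_I\bP g_i^{-1}\dif\lambda(i)=\tau^0$---but your direct use of Theorem~\ref{thm:robust}(2) accomplishes the same reduction.)

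The genuine gap is exactly where you place it: you never complete the uniqueness of the symmetric equilibrium. The paper's proof involves no perturbation analysis at this point. It takes from \cite{BMW1956} that the equilibrium traffic flow $\tau^0$---understood as the distribution over \emph{paths}, i.e., the element of $\Delta$ the whole paper works with---is unique, and then observes that in a congestion game all players share one payoff function, so a symmetric profile has $\bP g_i^{-1}=\sigma$ for every $i$, whence $\sigma=\int_I\bP g_i^{-1}\dif\lambda(i)=\tau^0$; uniqueness of $\sigma$ is then immediate. Your midpoint/strict-convexity argument, by contrast, only delivers uniqueness of the edge flows $\bigl(\tau(e)\bigr)_{e\in E}$, and, as you correctly note, that does not determine the path weights. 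Your proposed escape route---that the full-support trembles of the refinement break ties among path decompositions sharing the same edge flows---is not what the paper does and is doubtful on its face: two paths whose edges carry identical perturbed costs remain exactly indifferent under a perturbation of the form $\varphi(\tau)=(1-\varepsilon)\delta_\tau+\varepsilon\eta$, so the trembles select nothing among such decompositions. To close the argument in the spirit of the paper you must either cite or prove uniqueness of the Wardrop flow at the level of path distributions for the networks under consideration; once that is in hand the symmetric case is a one-line consequence, and the refinement machinery plays no role in this step.
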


When the cost functions are strictly increasing in a congestion game, \citet[Section~3.1.3, p.~3.13]{BMW1956} showed the uniqueness of Wardrop equilibrium. Proposition~\ref{prop:congestion-PE-exist} indicates that the unique Wardrop equilibrium satisfies not only the individual optimality condition as in the literature, but also the additional important properties of admissibility and aggregate robustness as considered in this paper.

Given a distribution $\tau$ of all players' actions, the social cost is given by
$$ C(\tau)  =  \sum_{e \in E} C_e \bigl( \tau(e) \bigr) \tau(e). $$
In congestion games, the price of anarchy measures the inefficiency of equilibria. One version of price of anarchy is defined as the ratio between the highest social cost in Nash equilibria and the minimum-possible social cost. \citet[Section~3.1.3]{BMW1956} showed that in any atomless congestion game, all Nash equilibria lead to the same social cost. Since a robust perfect equilibrium is a Nash equilibrium, the social cost in a robust perfect equilibrium is the same as the social cost in a Nash equilibrium. Therefore, the price of anarchy is equal to the ratio between the highest social cost in robust perfect equilibria and the minimum-possible social cost.

\subsection{Finding $\varepsilon$-robust perfect equilibria in congestion games}
\label{sec:appli:algorithm}

Although Proposition~\ref{prop:congestion-PE-exist} guarantees the existence of a robust perfect equilibrium in a congestion game, it does not seem easy to identify a robust perfect equilibrium based on its definition. Each robust perfect equilibrium is defined as the limit of a sequence of $\varepsilon$-robust perfect equilibria. So to obtain a robust perfect equilibrium, it suffices to find a sequence of $\varepsilon_n$-robust perfect equilibrium with $\lim\limits_{n \to \infty} \varepsilon_n = 0$. It is followed that an $\varepsilon$-robust perfect equilibrium exists via fixed-point arguments and related approaches. Below we provide a direct method to find an $\varepsilon$-robust perfect equilibrium by solving the corresponding optimization problem.

For any $\varepsilon \in (0, \tfrac{1}{|\cP|})$\footnote{We use $|\cP|$ to denote the cardinality of the set $\cP$.} and for any action distribution $\tau$, we define a perturbed cost function for each edge $e \in E$:
$$ C_e^{\varepsilon}\bigl(\tau(e)\bigr) = (1 - \varepsilon) C_e\bigl(\tau(e)\bigr) + \varepsilon \int_\Delta C_e\bigl(\tau'(e)\bigr) \dif \eta(\tau'), $$
where $\Delta$ is the $|\cP|$-dimensional unit simplex and $\eta$ is the uniform distribution on $\Delta$. Consider the following optimization problem:
\begin{equation}
\label{eq:optimization}
\tag{$*$}
\begin{aligned}
& \underset{\tau = \left( \tau(p) \right) \in \Delta}{\text{minimize}}	& & \sum_{e \in E} \int_0^{\tau(e)} C_e^{\varepsilon}(x) \dif x \\
& \text{subject to}								& & \sum\limits_{\{p \in \cP | e \in p\}} \tau(p) = \tau(e) \text{ for all } e \in E,   \\
&                                               & & \sum\limits_{p \in \cP} \tau(p) = 1 \text{ and } \tau(p) \ge \varepsilon \text{ for all } p \in \cP.
\end{aligned}
\end{equation}

\begin{prop}
\label{prop:algorithm}
Suppose that $\tau^\varepsilon = \bigl( \tau^\varepsilon(p) \bigr)$ is a solution to the optimization problem~\eqref{eq:optimization}. Let $g^\varepsilon$ be a strategy profile where each player uses the same (mixed) strategy and chooses path $p$ with probability $\tau^\varepsilon(p)$. Then, $g^\varepsilon$ is an $\varepsilon$-robust perfect equilibrium.
\end{prop}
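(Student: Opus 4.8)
The plan is to recognize the perturbed edge costs $C_e^\varepsilon$ as arising from the single perturbation function $\varphi^\varepsilon(\tau) = (1-\varepsilon)\delta_\tau + \varepsilon\eta$, and then to read off the required best-response structure from the first-order (KKT) optimality conditions of the optimization problem~\eqref{eq:optimization}. Writing the payoff of a path as the negative of its cost, $u(p,\tau) = -C_p(\tau) = -\sum_{e\in p} C_e(\tau(e))$, a direct computation gives, for every path $p$,
$$ u\bigl(p, \varphi^\varepsilon(\tau)\bigr) = (1-\varepsilon)\,u(p,\tau) + \varepsilon\int_\Delta u(p,\tau')\dif\eta(\tau') = -\sum_{e\in p} C_e^\varepsilon\bigl(\tau(e)\bigr) =: -C_p^\varepsilon(\tau). $$
Thus the perturbed payoff of a path is exactly the negative of its perturbed cost $C_p^\varepsilon$, while $\varphi^\varepsilon$ places weight $1-\varepsilon$ on $\delta_\tau$ and has full support (since $\eta$ is uniform on $\Delta$). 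I would take the perturbed player space to be $I^\varepsilon = I$, so $\lambda(I^\varepsilon) = 1 > 1-\varepsilon$ holds trivially. Because every player uses the same strategy $\tau^\varepsilon$, the exact law of large numbers (Lemma~\ref{lem:ELLN}) yields $s(g_\omega^\varepsilon) = \tau^\varepsilon$ for $\bP$-almost all $\omega$, so the integral in \eqref{eq:e-mixPE} collapses to $\int_\Omega u\bigl(p, \varphi^\varepsilon(s(g_\omega^\varepsilon))\bigr)\dif\bP(\omega) = -C_p^\varepsilon(\tau^\varepsilon)$, independent of $\omega$.

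Condition~(i) of Definition~\ref{defn:e-mixPE} is then immediate from the constraint $\tau^\varepsilon(p) \ge \varepsilon > 0$ for every $p \in \cP$, which forces $\bP(g_i^\varepsilon)^{-1} = \tau^\varepsilon$ to have full support on $\cP$. It remains to verify the implication in \eqref{eq:e-mixPE}, which in view of the previous paragraph reads: whenever $C_{a_m}^\varepsilon(\tau^\varepsilon) > C_{a_\ell}^\varepsilon(\tau^\varepsilon)$ for some $a_\ell$—so that $a_m$ fails to be a path of minimal perturbed cost—then $\tau^\varepsilon(a_m) \le \varepsilon$.

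This last point is the crux of the proof and is where the real work lies. Substituting the edge-flow identity $\tau(e) = \sum_{\{p\mid e\in p\}}\tau(p)$ into the objective, problem~\eqref{eq:optimization} minimizes the $C^1$ (and, since each $C_e^\varepsilon$ is increasing, convex) Beckmann-type potential $\Phi^\varepsilon(\tau) = \sum_{e\in E}\int_0^{\tau(e)} C_e^\varepsilon(x)\dif x$ over the polytope $\{\tau \mid \sum_p\tau(p)=1,\ \tau(p)\ge\varepsilon\}$, whose constraints are affine (so a constraint qualification holds automatically and the KKT conditions are necessary at a minimizer). The chain rule gives the classical identity
$$ \frac{\partial \Phi^\varepsilon}{\partial \tau(p)} = \sum_{e\in p} C_e^\varepsilon\bigl(\tau(e)\bigr) = C_p^\varepsilon(\tau). $$
At the minimizer $\tau^\varepsilon$, the KKT conditions then supply a multiplier $\mu$ for the equality $\sum_p\tau(p)=1$ and multipliers $\nu_p\ge 0$ for the bounds $\tau(p)\ge\varepsilon$, with $C_p^\varepsilon(\tau^\varepsilon) = \mu + \nu_p$ and complementary slackness $\nu_p\bigl(\tau^\varepsilon(p)-\varepsilon\bigr) = 0$. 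Since $|\cP|\varepsilon < 1$, not all paths can be at the lower bound, so at least one path has $\tau^\varepsilon(p) > \varepsilon$, whence $\nu_p = 0$ and $\mu = \min_q C_q^\varepsilon(\tau^\varepsilon)$; and any path $a_m$ with $C_{a_m}^\varepsilon(\tau^\varepsilon) > \mu$ must have $\nu_{a_m} > 0$, forcing $\tau^\varepsilon(a_m) = \varepsilon$ by complementary slackness. This is precisely the implication needed, so $g^\varepsilon$ satisfies Definition~\ref{defn:e-mixPE} and is an $\varepsilon$-robust perfect equilibrium.

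The only genuine obstacle is this optimality analysis: one must justify the applicability of the KKT conditions (differentiability of $\Phi^\varepsilon$ and the affine constraint qualification) and establish the gradient identity through the path-to-edge aggregation. Everything else—the identification of the perturbation function $\varphi^\varepsilon$, the reduction of the integrals via the exact law of large numbers, and the full-support requirement—is routine once the potential-function characterization is in hand.
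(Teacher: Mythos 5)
Your proposal is correct and follows essentially the same route as the paper: identify the perturbation function $\varphi^\varepsilon(\tau) = (1-\varepsilon)\delta_\tau + \varepsilon\eta$ so that perturbed payoffs equal negative perturbed path costs, apply the KKT conditions of the convex Beckmann-type potential, and use complementary slackness together with $\varepsilon < 1/\abs{\cP}$ to show that any path failing to minimize the perturbed cost must sit at the lower bound $\tau^\varepsilon(p) = \varepsilon$. The only cosmetic difference is that the paper verifies the conditions for the induced randomized strategy profile and then invokes the equivalence result (Proposition~\ref{prop:equivalence}), whereas you check Definition~\ref{defn:e-mixPE} directly via the exact law of large numbers; both are valid.
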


\begin{rmk}
\label{rmk:algorithm}
\rm
Since $C_e^\varepsilon \bigl( \tau(e) \bigr)$ is an increasing function of $\tau(e)$, the objective function is convex, and hence the optimization problem is a convex problem. Thus, this problem can be solved by several well-established methods, including the simplex method, the penalty and augmented Lagrangian method, and interior-point method; see \cite{NW2006} for more details.

Pick a sequence $\{\varepsilon_n\}_{n \in \bZ_+}$ with $\varepsilon_n \to 0$. For each $\varepsilon_n$, we get a solution $\tau^n \in \Delta$ for the corresponding optimization problem. Since $\Delta$ is compact, the sequence $\{\tau^n\}_{n \in \bZ_+}$ has a limit point $\tau^*$. Then one can prove that every pure strategy profile with the action distribution $\tau^*$ is a robust perfect equilibrium; the details are given in Subsection~\ref{sec:appen:application}. Therefore, Proposition~\ref{prop:algorithm} provides an alternative proof of the existence of a pure strategy robust perfect equilibrium in congestion games.
\end{rmk}

Now we apply the optimization method above to identify a robust perfect equilibrium for the three-path game in Section~\ref{sec:intro}. Given a sequence $\{\varepsilon_n\}_{n \in \bZ_+}$ in the interval $(0, \tfrac{1}{10})$ with $\varepsilon_n \to 0$, the perturbed cost functions are:
\begin{align*}
C_a^{\varepsilon_n}\bigl( \tau(a) \bigr)	& = (1 - \varepsilon_n) C_a\bigl( \tau(a) \bigr) + \varepsilon_n \int_\Delta C_a \bigl( \tau'(a) \bigr) \dif \eta(\tau') = \tfrac{1}{2},	\\
C_b^{\varepsilon_n}\bigl( \tau(b) \bigr)	& = (1 - \varepsilon_n) C_c\bigl( \tau(b) \bigr) + \varepsilon_n \int_\Delta C_b \bigl( \tau'(b) \bigr) \dif \eta(\tau') > (1 - \varepsilon_n)\tfrac{1}{2} + \tfrac{\varepsilon_n}{2} = \tfrac{1}{2}, \\
C_c^{\varepsilon_n}\bigl( \tau(c) \bigr)	& = (1 - \varepsilon_n) C_c\bigl( \tau(c) \bigr) + \varepsilon_n \int_\Delta C_c \bigl( \tau'(c) \bigr) \dif \eta(\tau') \\ &= (1 - \varepsilon_n)\bigl(\tau(c) + \tfrac{1}{3}\bigr) + \varepsilon_n \int_\Delta \bigl(\tau'(c)+ \tfrac{1}{3}\bigr) \dif \eta(\tau') = (1-\varepsilon_n)\tau(c) + \tfrac{1+\varepsilon_n}{3}.
\end{align*}

We solve the following optimization problem:
$$\begin{aligned}
& \underset{\tau \in \Delta}{\text{minimize}}	& & \int_0^{\tau(a)} \tfrac{1}{2} \dif x + \int_0^{\tau(b)} C_b^{\varepsilon_n}( x ) \dif x + \int_0^{\tau(c)} \Bigl[ (1-\varepsilon_n)x + \tfrac{1+\varepsilon_n}{3} \Bigr] \dif x \\
& \text{subject to}								& & \tau(a) + \tau(b) + \tau(c) = 1, \ \tau(a) \ge \varepsilon_n, \ \tau(b) \ge \varepsilon_n, \ \tau(c) \ge \varepsilon_n.
\end{aligned}$$
Suppose that $\tau^n = \bigl( \tau^n(a), \tau^n(b), \tau^n(c) \bigr)$ is a solution to this optimization problem. We claim that $\tau^n(b) = \varepsilon_n$. Otherwise, we assume $\tau^n(b) > \varepsilon_n$. Now consider another distribution $\bigl(\tau^n(a) + \tau^n(b) -\varepsilon_n, \varepsilon_n, \tau^n(c)\bigr)$. Since $C_b^{\varepsilon_n}(x) > \frac{1}{2} = C_a^{\varepsilon_n}(x)$, this distribution further reduces the total cost, which leads to a contradiction. Thus, the optimization problem can be simplified to:
$$\begin{aligned}
& \underset{\tau \in \Delta}{\text{minimize}}	& & \int_0^{\tau(a)} \tfrac{1}{2} \dif x + \int_0^{\tau(c)} \Bigl[ (1-\varepsilon_n)x + \tfrac{1+\varepsilon_n}{3} \Bigr] \dif x \\
& \text{subject to}								& & \tau(a) + \tau(c) = 1 - \varepsilon_n, \ \tau(a) \ge \varepsilon_n, \ \tau(c) \ge \varepsilon_n.
\end{aligned}$$
By simple calculation, the objective function becomes
\begin{align*}
\tfrac{\tau(a)}{2} + \tfrac{1-\varepsilon_n}{2} \tau(c)^2 + \tfrac{1+\varepsilon_n}{3}\tau(c)	& = \tfrac{1-\varepsilon_n-\tau(c)}{2} + \tfrac{1-\varepsilon_n}{2} \tau(c)^2 + \tfrac{1+\varepsilon_n}{3}\tau(c)	\\
& = \tfrac{1-\varepsilon_n}{2} \tau(c)^2 - \tfrac{1-2\varepsilon_n}{6}\tau(c) + \tfrac{1-\varepsilon_n}{2}.
\end{align*}
It is easy to see that the objective function attains the minimum at $\tau(c) = \tfrac{1-2\varepsilon_n}{6-6\varepsilon_n} \in [\varepsilon_n, 1-2\varepsilon_n]$. Thus, the unique solution is $\tau^n = \bigl( \tau^n(a), \tau^n(b), \tau^n(c) \bigr) = \bigl(\tfrac{5-10\varepsilon_n+6\varepsilon_n^2}{6-6\varepsilon_n}, \varepsilon_n, \tfrac{1-2\varepsilon_n}{6-6\varepsilon_n}\bigr)$, which is induced by an $\varepsilon_n$-robust perfect equilibrium. Letting $\varepsilon_n \to 0$, we conclude that $\tau^* = (\tfrac{5}{6}, 0, \tfrac{1}{6})$ is an action distribution of a robust perfect equilibrium, which coincides with the result in Subsection~\ref{sec:example:revisit}.

\subsection{Robust perfect equilibria and admissible Nash equilibria in two-path congestion games}
\label{sec:appli:congestion-NEPE}

The example in Subsection~\ref{sec:example:non-robust} shows that the concept of robust perfect equilibrium is stronger than the notion of admissible Nash equilibrium even for a large game with three actions. Nevertheless, we show in the following proposition that every admissible Nash equilibrium in a two-path congestion game is always a robust perfect equilibrium.

\begin{prop}
\label{prop:congestion-NEPE}
In any atomless congestion game with two paths, every admissible Nash equilibrium is a robust perfect equilibrium.
\end{prop}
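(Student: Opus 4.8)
The plan is to exploit the one-dimensional structure of a two-path game. Write the two paths as $a$ and $b$ and parameterize $\Delta$ by $t = \tau(b) \in [0,1]$, so $\tau = (1-t, t)$. Since every $C_e$ is increasing, the edges lying only on path $b$ carry mass $t$, those lying only on path $a$ carry mass $1-t$, and the edges shared by both paths carry the constant mass $1$, the path cost $C_b$ is nondecreasing in $t$ while $C_a$ is nonincreasing in $t$. Hence the payoff gap $\phi(t) := u(a,\tau) - u(b,\tau) = C_b(\tau) - C_a(\tau)$ is continuous and nondecreasing. First I would record how admissibility interacts with $\phi$: in the two-action case, $a$ is weakly dominated iff $\phi \le 0$ on $[0,1]$ with strict inequality somewhere, and symmetrically for $b$; by monotonicity of $\phi$ these reduce to the sign conditions $\phi(1) \le 0$ and $\phi(0) \ge 0$, respectively.

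Let $\tau^\ast$ (with $t^\ast = \tau^\ast(b)$) be the flow of the given admissible Nash equilibrium $g$, and set $s_i := \bP g_i^{-1}(b)$, so $\int_I s_i \dif\lambda = t^\ast$. I would split into boundary and interior cases. If $t^\ast = 0$, the Nash condition forces $\phi(0) \ge 0$, and monotonicity gives $\phi \ge 0$ on all of $[0,1]$ (the case $t^\ast = 1$ is symmetric). Here I take the symmetric perturbation $\bP(g_i^\varepsilon)^{-1}(b) = \varepsilon$ and the perturbation function $\varphi^\varepsilon(\tau) = (1-\varepsilon)\delta_\tau + \varepsilon\eta$, where $\eta$ is uniform on $\Delta$. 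Because $\phi \ge 0$ everywhere, $a$ remains a best response against $\varphi^\varepsilon(\tau^\varepsilon)$, the only non-best action $b$ receives weight exactly $\varepsilon$, and Definition~\ref{defn:e-mixPE} is satisfied with $I^\varepsilon = I$; letting $\varepsilon \to 0$ verifies Definition~\ref{defn:mixPE}.

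The substantive case is $t^\ast \in (0,1)$. Here a positive mass of players puts positive probability on each of $a$ and $b$, so the Nash condition gives $\phi(t^\ast) = 0$, and admissibility forces that neither $a$ nor $b$ is weakly dominated; by the monotonicity of $\phi$ this leaves exactly two possibilities: either $\phi \equiv 0$, or $\phi(0) < 0 < \phi(1)$. The construction pins the aggregate at $\tau^\ast$: set $\bP(g_i^\varepsilon)^{-1}(b) = (1-\varepsilon)s_i + \varepsilon t^\ast \in (0,1)$, which has full support, converges to $s_i$, and averages to exactly $t^\ast$, so the perturbed flow equals $\tau^\ast$ for every $\varepsilon$. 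It then suffices to find a full-support $\nu \in \ehat{\cM}(\Delta)$ with $\int_\Delta \phi(\tau'(b)) \dif\nu(\tau') = 0$ and to set $\varphi^\varepsilon(\tau) = (1-\varepsilon)\delta_\tau + \varepsilon\nu$: since $\phi(t^\ast) = 0$, the two actions earn exactly equal expected payoff against $\varphi^\varepsilon(\tau^\ast)$, so no action is strictly inferior and every full-support strategy satisfies \eqref{eq:e-mixPE} with $I^\varepsilon = I$. When $\phi \equiv 0$ any full-support $\nu$ (e.g.\ $\eta$) works; when $\phi(0) < 0 < \phi(1)$, a full-support measure concentrated near $t = 0$ yields a negative integral and one concentrated near $t = 1$ a positive integral, and since $\nu \mapsto \int_\Delta \phi(\tau'(b)) \dif\nu(\tau')$ is affine, interpolating along the segment between these two measures produces a full-support $\nu$ with integral zero.

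Finally I would let $\varepsilon_n \to 0$ and check the three clauses of Definition~\ref{defn:mixPE}: each $g^{\varepsilon_n}$ is an $\varepsilon_n$-robust perfect equilibrium with $I^{\varepsilon_n} = I$, so $i \in \bigcap_n I^{\varepsilon_n}$ trivially; $\bP(g_i^{\varepsilon_n})^{-1} \to \bP g_i^{-1}$ by construction; and the averages converge (indeed are constant in the interior case), giving clause (iii). Measurability of $i \mapsto \bP(g_i^\varepsilon)^{-1}$ and the existence of an essentially pairwise independent profile with these marginals follow from the rich Fubini extension. The main obstacle is the interior case, and specifically the two intertwined points that make it work: pinning the perturbed aggregate exactly at $\tau^\ast$ (so the $\delta$-part of $\varphi^\varepsilon$ contributes no payoff gap) via the ``shrink toward the mean'' perturbation $\bP(g_i^\varepsilon)^{-1}(b) = (1-\varepsilon)s_i + \varepsilon t^\ast$, and producing a full-support perturbation measure that exactly equalizes the two actions. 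The latter is possible precisely because admissibility excludes the degenerate configuration in which $\phi$ merely touches $0$ at $t^\ast$ from one side, which would render the played action weakly dominated and force $\int_\Delta \phi \dif\nu$ to keep a fixed sign for every full-support $\nu$; this is the conceptual heart of the argument.
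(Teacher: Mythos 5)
Your proof is correct, and its skeleton matches the paper's: split on whether the equilibrium flow $\tau^\ast$ is interior or on the boundary of $\Delta$, pin the perturbed aggregate, and build a full-support perturbation measure under which the played action remains a best response. Your interior case is essentially the paper's Case~1 verbatim --- the ``shrink toward the mean'' strategies $(1-\varepsilon)s_i + \varepsilon t^\ast$ and the equalizing full-support measure obtained by affine interpolation between measures concentrated where $\phi<0$ and where $\phi>0$ are exactly the paper's $h^\varepsilon = (1-\varepsilon)h + \varepsilon\tau^\ast$ and its measure $\eta$ built from $\tau^1,\tau^2$ with opposite strict inequalities. The genuine difference is in the boundary case and in what drives it: you exploit the congestion structure to show that $\phi(t) = C_b(\tau)-C_a(\tau)$ is monotone in $t=\tau(b)$, so the Nash condition $\phi(0)\ge 0$ at $t^\ast=0$ propagates to $\phi\ge 0$ on all of $[0,1]$, and the uniform perturbation then works with no further effort (admissibility is not even needed there). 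The paper never uses monotonicity: its Case~2 only knows that $a$ is undominated, i.e.\ $u(a,\tau_3)>u(b,\tau_3)$ at \emph{some} $\tau_3$, so it must concentrate the perturbation measure $\zeta$ near $\tau_3$ to secure a positive margin $\rho$, and then use continuity to choose $\varepsilon'<\varepsilon$ so that the payoff gap at the perturbed flow $(1-\varepsilon')\delta_a+\varepsilon'\delta_b$ does not eat up that margin. What each route buys: yours is shorter and makes transparent why the two-path congestion structure matters (a one-dimensional monotone payoff gap), while the paper's argument never invokes the cost functions and therefore actually proves the statement for any two-action large game with a common continuous payoff, of which two-path congestion games are a special case. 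Two cosmetic points: your reduction of weak dominance of $a$ to ``$\phi(1)\le 0$'' drops the strictness requirement $\phi(0)<0$, though you recover it correctly when you derive the dichotomy $\phi\equiv 0$ versus $\phi(0)<0<\phi(1)$; and in the interior case you should note explicitly (as you implicitly do) that positive masses of players place weight on each action, so that the common payoff function makes ``neither action is weakly dominated'' a consequence of admissibility for the whole population.
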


The proof is collected in Subsection~\ref{sec:appen:application}. This result implies that in several well-known examples, each admissible Nash equilibrium is a robust perfect equilibrium. In particular, for the Pigou's example illustrated in Figure~\ref{fig:Pigou}, the unique admissible Nash equilibrium $f(i) \equiv a$ is robust perfect, and for Braess's paradox in Figure~\ref{fig:Braess}, each admissible Nash equilibrium, where half of drivers choose the upper path and the rest choose the lower path, is also robust perfect.

\begin{figure}[htb!]
\centering
\subfigure[Pigou's example]{\label{fig:Pigou}\raisebox{11.5pt}[0pt][0pt]{\makebox[0.35\textwidth][c]{\includegraphics[scale=0.65]{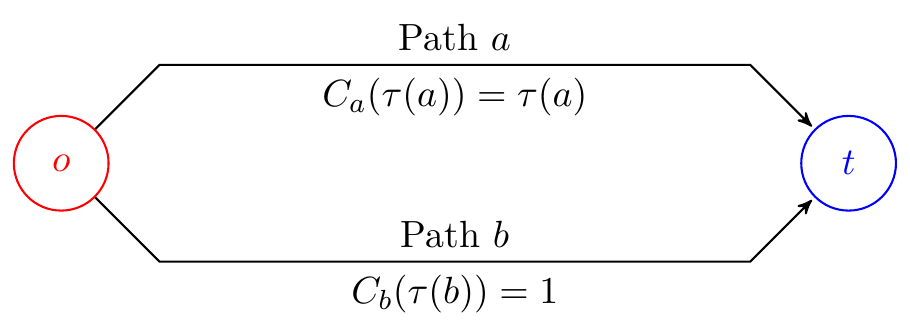}}}}
\qquad
\subfigure[Braess's paradox]{\label{fig:Braess}\includegraphics[width=0.45\textwidth]{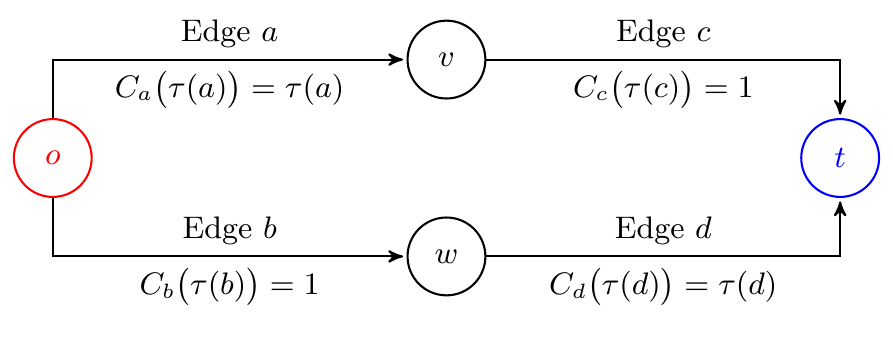}}
\caption{Pigou's example and Braess's paradox}
\label{fig:Pigou-Braess}
\end{figure}
\FloatBarrier

\subsection{Large potential games}
\label{sec:appli:potential}

We end this section by a brief discussion on large potential games. A strategic game with finite players is said to be a potential game if the incentives of all players to change their strategies can be expressed by one global function, which is called the potential function. The potential function is a useful tool to analyze equilibrium properties of games, since the incentives of all players are mapped into one function, and the set of pure strategy Nash equilibria can be found by locating the local optima of the potential function. For finite-player potential games, \cite{CM2014} consider the equilibrium refinement and show the existence of pure strategy trembling-hand perfect equilibria. In the following, we study large potential games and robust perfect equilibria therein.

\begin{defn}[Large potential game]
\label{defn:potential}
\rm
Given a large game $G$, a function $P \colon A \times \Delta \to \bR$ is a \emph{potential} for $G$ if for $\lambda$-almost all player $i \in I$, for all actions $a_m$ and $a_\ell$ in A, and for every $\tau \in \Delta$,
$$ u_i(a_m, \tau) - u_i(a_\ell, \tau)  =  P(a_m, \tau) - P(a_\ell, \tau). $$
A large game is a \emph{large potential game} if it admits a potential.
\end{defn}

It is easy to see that atomless congestion games are a special case of large potential games. Similarly, as an application of Theorem~\ref{thm:exist}, we can also establish the existence of pure strategy robust perfect equilibria in large potential games.

\begin{prop}
\label{prop:potential}
Each large potential game possesses a pure strategy robust perfect equilibrium.
\end{prop}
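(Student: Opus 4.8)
The plan is to obtain Proposition~\ref{prop:potential} as a direct specialization of Theorem~\ref{thm:exist}(2). By Definition~\ref{defn:potential}, a large potential game is nothing more than a large game $G \colon (I, \cI, \lambda) \to \cU$ that happens to admit a potential; in particular it is a large game in the sense of Subsection~\ref{sec:mixPE:game}, with payoffs lying in $\cU$ and hence continuous on $A \times \Delta$. Therefore Theorem~\ref{thm:exist}(2), which asserts that \emph{every} large game has a pure strategy robust perfect equilibrium, applies verbatim and yields the desired equilibrium. This is exactly the route flagged in the text preceding the statement (``as an application of Theorem~\ref{thm:exist}''), so the first and essentially only step is to verify that the hypotheses of Theorem~\ref{thm:exist} are met, which they are by construction.

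It is worth recording a structural simplification that the potential brings, should one prefer a more self-contained argument rather than an appeal to the general theorem. Fixing an aggregate $\tau \in \Delta$, the defining identity $u_i(a_m, \tau) - u_i(a_\ell, \tau) = P(a_m, \tau) - P(a_\ell, \tau)$ shows that the payoff differences, and hence the best-response sets $\argmax_{a \in A} u_i(a, \tau) = \argmax_{a \in A} P(a, \tau)$, coincide for $\lambda$-almost all players $i$. Integrating against any perturbation measure $\varphi^\varepsilon(\tau) \in \ehat{\cM}(\Delta)$ preserves this, so the inequality in \eqref{eq:e-mixPE} depends on $i$ only through the common potential $P$. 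Consequently one may restrict attention to a \emph{symmetric} $\varepsilon$-robust perfect equilibrium in which every player uses a single full-support mixed strategy $\sigma^\varepsilon \in \Delta$, with the exact law of large numbers (Lemma~\ref{lem:ELLN}) identifying the induced societal summary with $\sigma^\varepsilon$ itself; the pure strategy profile is then recovered by ex post robust perfection as in Theorem~\ref{thm:robust}(1).

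A tempting second route, mirroring the optimization method of Proposition~\ref{prop:algorithm} for congestion games, would be to minimize (the negative of) a perturbed aggregate potential over the simplex subject to $\tau(a_k) \ge \varepsilon$, read off an $\varepsilon$-robust perfect equilibrium from the first-order conditions, and pass to the limit $\varepsilon \to 0$. The main obstacle is that a general potential $P(a, \tau)$ need not arise as the gradient of a scalar function on $\Delta$: the requisite integrability (symmetry) condition $\partial P(a_k, \tau)/\partial \tau_\ell = \partial P(a_\ell, \tau)/\partial \tau_k$, automatic in the congestion setting thanks to the potential $\sum_{e \in E} \int_0^{\tau(e)} C_e(x)\,\dif x$, can fail for an abstract large potential game. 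For this reason I would not attempt to generalize the optimization construction, and would instead simply invoke Theorem~\ref{thm:exist}(2), which is the cleanest fully rigorous tool and already delivers the pure strategy robust perfect equilibrium claimed.
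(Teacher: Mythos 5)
Your proposal is correct and matches the paper's own treatment: since a large potential game is by Definition~\ref{defn:potential} just a large game that happens to admit a potential, the paper obtains Proposition~\ref{prop:potential} as an immediate application of Theorem~\ref{thm:exist}(2), exactly as you do. The additional observations about symmetry of best responses and the limits of the optimization route are fine but not needed.
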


\section{Technical preparation and proofs}
\label{sec:appen}

\subsection{Some technical background}

For the convenience of readers, here we state the definition of a rich Fubini extension as in \cite{Sun2006}. Let probability spaces $(I, \cI, \lambda)$ and $(\Omega, \cF, \bP)$ be the index and sample spaces, respectively. Let $(I \times \Omega, \cI \otimes \cF, \lambda \otimes \bP)$ be the usual product probability space. Given any function $g$ on $I \times \Omega$ and any $(i, \omega) \in I \times \Omega$, let $g_i$ be the function $g(i, \cdot)$ on $\Omega$ and $g_\omega$ the function $g(\cdot, \omega)$ on $I$.

\begin{defn}[Rich Fubini extension]
\label{defn:fubini}
\rm
A probability space $(I \times \Omega, \cW, \bQ)$ extending the usual product probability space $(I \times \Omega, \cI \otimes \cF, \lambda \otimes \bP)$ is said to be a \emph{Fubini extension} of $(I \times \Omega, \cI \otimes \cF, \lambda \otimes \bP)$ if for any real-valued $\bQ$-integrable function $g$ on $(I \times \Omega, \cW)$,
\begin{enumerate}[label=(\roman*)]
\item the two functions $g_i$ and $g_\omega$ are integrable, respectively, on $(\Omega, \cF, \bP)$ for $\lambda$-almost all $i \in I$, and on $(I, \cI, \lambda)$ for $\bP$-almost all $\omega \in \Omega$;
\item $\int_\Omega g_i \dif\bP$ and $\int_I g_\omega \dif\lambda$ are integrable, respectively, on $(I, \cI, \lambda)$ and $(\Omega, \cF, \bP)$, with $\int_{I \times \Omega} g \dif\bQ = \int_I \bigl( \int_\Omega g_i \dif\bP \bigr) \dif\lambda = \int_\Omega \bigl( \int_I g_\omega \dif\lambda \bigr) \dif\bP$.
\end{enumerate}
To reflect the fact that the probability space $(I \times \Omega, \cW, \bQ)$ has $(I, \cI, \lambda)$ and $(\Omega, \cF, \bP)$ as its marginal spaces, as required by the Fubini property, it will be denoted by $(I \times \Omega, \cI \boxtimes \cF, \lambda \boxtimes \bP)$.

A Fubini extension is \emph{rich} if there is an $\cI \boxtimes \cF$-measurable process $g$ from $I \times \Omega$ to $[0, 1]$ such that the random variables $g_i(\cdot) = g(i, \cdot)$ have the uniform distribution on the unit interval $[0, 1]$, and are essentially pairwise independent in the sense that for $\lambda$-almost all $i \in I$, $g_i$ and $g_j$ are independent for $\lambda$-almost all $j \in I$.
\end{defn}

We state two lemmas for rich Fubini extensions, which will be frequently used in the proofs. The first, which is on the universality property of a rich Fubini extension, is taken from \citet[Proposition 5.3]{Sun2006}. It says that one can construct processes on a rich Fubini extension with essentially pairwise independent random variables that take any given variety of distributions.

\begin{lem}[Universality]
\label{lem:universality}
Let $(I \times \Omega, \cI \boxtimes \cF, \lambda \boxtimes \bP)$ be a rich Fubini extension, $X$ a Polish space (i.e., a complete separable metric space), and $h$ a measurable mapping from $(I, \cI, \lambda)$ to the space $\cM(X)$ of all Borel probability measures on $X$ endowed with the topology of weak convergence of measures. Then, there exists an $\cI \boxtimes \cF$-measurable process $g \colon I \times \Omega \to X$ such that the process $g$ is essentially pairwise independent and $h(i)$ is the induced distribution of $g_i$, for $\lambda$-almost all $i \in I$.
\end{lem}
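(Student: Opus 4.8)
The plan is to build $g$ by feeding the built-in i.i.d.\ uniform randomization of the rich Fubini extension through a single jointly measurable ``quantile'' parametrization of the target laws $h(i)$. By richness, the extension supplies an $\cI \boxtimes \cF$-measurable, essentially pairwise independent process which I denote $\zeta \colon I \times \Omega \to [0,1]$ (renamed from the $g$ in Definition~\ref{defn:fubini} to avoid clashing with the target process) whose sections $\zeta_i$ are uniform on $[0,1]$. The core of the argument is to produce one Borel measurable map $\Psi \colon [0,1] \times \cM(X) \to X$ such that for every $\mu \in \cM(X)$ the pushforward of Lebesgue measure under $\Psi(\cdot,\mu)$ equals $\mu$; granting this, I will simply set $g(i,\omega) = \Psi\bigl(\zeta(i,\omega), h(i)\bigr)$.

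First I would construct $\Psi$, beginning with the scalar case $X=[0,1]$, where I take the generalized inverse of the distribution function, $\Psi(u,\mu) = \inf\{\, t \in [0,1] \mid \mu([0,t]) \ge u \,\}$. The map $(t,\mu) \mapsto \mu([0,t])$ is jointly Borel measurable — measurability in $\mu$ holds because the Borel $\sigma$-algebra on $\cM([0,1])$ is generated by the set-evaluation functionals $\mu \mapsto \mu(B)$ — and the usual $\inf$-representation then makes $\Psi$ jointly Borel measurable, while the probability-integral-transform identity gives $\Psi(\cdot,\mu)_*\mathrm{Leb} = \mu$. To reach a general Polish $X$ I would invoke the Borel isomorphism theorem to obtain a Borel isomorphism $\theta \colon X \to B$ onto a Borel set $B \subseteq [0,1]$. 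Pushforward $\mu \mapsto \theta_*\mu$ is a Borel measurable map $\cM(X) \to \cM([0,1])$ (again because it evaluates Borel sets measurably), so I define $\Psi_X(u,\mu) = \theta^{-1}\bigl(\Psi(u,\theta_*\mu)\bigr)$ when $\Psi(u,\theta_*\mu) \in B$ and $\Psi_X(u,\mu) = x_0$ otherwise, for a fixed $x_0 \in X$. Since $\Psi(\cdot,\theta_*\mu)_*\mathrm{Leb} = \theta_*\mu$ is concentrated on $B$, the exceptional set is Lebesgue-null, so $\Psi_X$ is jointly Borel measurable and $\Psi_X(\cdot,\mu)_*\mathrm{Leb} = \mu$ for every $\mu$.

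Then I would set $g(i,\omega) = \Psi_X\bigl(\zeta(i,\omega), h(i)\bigr)$ and check the three requirements. Joint measurability is immediate: $(i,\omega) \mapsto \bigl(\zeta(i,\omega), h(i)\bigr)$ is $\cI \boxtimes \cF$-measurable since $\zeta$ is $\cI \boxtimes \cF$-measurable and $h$ is $\cI$-measurable, the product Borel $\sigma$-algebra on $[0,1] \times \cM(X)$ coincides with the Borel $\sigma$-algebra of the (Polish) product, and $\Psi_X$ is Borel, so the composite $g$ is $\cI \boxtimes \cF$-measurable into $X$. For the marginal law, fix $i$ with $\zeta_i$ uniform; then $g_i = \Psi_X(\cdot,h(i)) \circ \zeta_i$ has distribution $\Psi_X(\cdot,h(i))_*\mathrm{Leb} = h(i)$, which is the law condition for $\lambda$-almost all $i$. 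For essential pairwise independence, note that for each fixed $i$ the measure $h(i)$ is a constant, so $g_i$ is a Borel function of $\zeta_i$ alone; since $\zeta_i$ and $\zeta_j$ are independent for $\lambda$-almost all pairs, their images $g_i$ and $g_j$ under the respective Borel maps are independent, which is precisely essential pairwise independence of $g$.

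I expect the main obstacle to be the construction and joint measurability of $\Psi$, and in particular the reduction from a general Polish space to $[0,1]$. The delicate point is that the Borel isomorphism $\theta$ need not be continuous, so the induced map $\mu \mapsto \theta_*\mu$ on spaces of probability measures need not be weakly continuous; its measurability must instead be argued directly from the generation of the Borel $\sigma$-algebra on $\cM(\cdot)$ by set-evaluation functionals, together with the fact that $\theta^{-1}$ is itself Borel and so carries the relevant structure back to $X$. Once $\Psi$ is in hand, the remaining verifications — joint measurability of $g$, the law of each $g_i$, and the transfer of independence from $\zeta$ to $g$ — are routine.
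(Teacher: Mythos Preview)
The paper does not prove Lemma~\ref{lem:universality}; it is quoted from \cite{Sun2006} (Proposition~5.3) without argument. Your construction---composing the essentially pairwise independent uniform process supplied by richness with a jointly Borel inverse-CDF map, and passing to a general Polish $X$ via a Borel isomorphism onto a Borel subset of $[0,1]$---is correct and is essentially the standard proof of that proposition.
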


The following lemma states a version of the exact law of large games (ELLN) in the framework of the Fubini extension, which is taken from \citet[Corollary~2.9]{Sun2006}.

\begin{lem}[Exact law of large numbers or ELLN]
\label{lem:ELLN}
Let $(I \times \Omega, \cI \boxtimes \cF, \lambda \boxtimes \bP)$ be a Fubini extension, $X$ a Polish space, and $g$ a process from $(I \times \Omega, \cI \boxtimes \cF, \lambda \boxtimes \bP)$ to $X$. If $g$ is essentially pairwise independent, then the sample distribution $\lambda g_\omega^{-1}$ is the same as the distribution $(\lambda \boxtimes \bP) g^{-1}$ for $\bP$-almost all $\omega \in \Omega$.
\end{lem}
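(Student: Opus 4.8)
The plan is to reduce the measure-valued assertion to a countable family of scalar ``sample-mean equals population-mean'' statements, and then to prove the scalar statement by a second-moment computation. Since $X$ is Polish, $\cM(X)$ is metrizable and admits a countable \emph{measure-determining} family $\{h_m\}_{m\in\bZ_+}\subseteq C_b(X)$ of bounded continuous functions, meaning that two Borel probability measures on $X$ coincide whenever they assign equal integrals to every $h_m$. For each $m$, the composition $f^m := h_m\circ g$ is an $\cI\boxtimes\cF$-measurable, bounded real-valued process, and it inherits essential pairwise independence from $g$ (functions of independent random variables are independent). Thus it suffices to show that for each $m$ and for $\bP$-almost all $\omega$,
$$ \int_I f^m(i,\omega)\,\dif\lambda(i) = c_m := \int_{I\times\Omega} f^m \,\dif(\lambda\boxtimes\bP), $$
because $\int_X h_m \,\dif(\lambda g_\omega^{-1}) = \int_I f^m_\omega\,\dif\lambda$ and $\int_X h_m\,\dif\bigl((\lambda\boxtimes\bP)g^{-1}\bigr)=c_m$ by change of variables.

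The core is therefore the following scalar claim: if $f$ is a bounded, essentially pairwise independent, $\cI\boxtimes\cF$-measurable real process with population mean $c=\int_{I\times\Omega}f\,\dif(\lambda\boxtimes\bP)$, then the sample mean $\psi(\omega):=\int_I f_\omega\,\dif\lambda$ equals $c$ for $\bP$-almost all $\omega$. By the Fubini property, $\psi$ is $\cF$-measurable with $\int_\Omega\psi\,\dif\bP=c$, so it is enough to prove the variance bound $\int_\Omega(\psi-c)^2\,\dif\bP=0$, equivalently $\int_\Omega\psi^2\,\dif\bP=c^2$. I would expand $\psi(\omega)^2=\int_I\bigl(\int_I f(i,\omega)f(j,\omega)\,\dif\lambda(i)\bigr)\dif\lambda(j)$ (classical for each fixed $\omega$) and apply the Fubini property twice, each time freezing the ``other'' index so that the integrand is a genuine $\cI\boxtimes\cF$-measurable function of $(\,\cdot\,,\omega)$; this exchanges the $\Omega$-integral with the $\lambda$-integrals and yields
$$ \int_\Omega\psi^2\,\dif\bP = \int_I\!\int_I \Bigl(\int_\Omega f(i,\omega)f(j,\omega)\,\dif\bP(\omega)\Bigr)\dif\lambda(i)\,\dif\lambda(j). $$
Essential pairwise independence gives $\int_\Omega f_i f_j\,\dif\bP=\bar f(i)\bar f(j)$, where $\bar f(i)=\int_\Omega f_i\,\dif\bP$, for $(\lambda\times\lambda)$-almost all pairs $(i,j)$; the diagonal $\{i=j\}$ is $(\lambda\times\lambda)$-null because $\lambda$ is atomless. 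Hence the right-hand side equals $\bigl(\int_I\bar f\,\dif\lambda\bigr)^2=c^2$, completing the scalar claim.

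Finally I would assemble the pieces: for each $m$ the scalar claim supplies a $\bP$-null set off which $\int_X h_m\,\dif(\lambda g_\omega^{-1})=c_m$; taking the (countable) union of these null sets gives a single $\bP$-null set off which the equality holds for all $m$ simultaneously, and the measure-determining property of $\{h_m\}$ then forces $\lambda g_\omega^{-1}=(\lambda\boxtimes\bP)g^{-1}$ for $\bP$-almost all $\omega$. The main obstacle is the interchange of integration in the second-moment computation: because a continuum of independent variables is generically not jointly measurable for the ordinary product $\sigma$-algebra, classical Fubini does not apply, and one must instead invoke the Fubini property of the extension and carefully verify that each integrand remains $\cI\boxtimes\cF$-measurable after freezing an index. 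A secondary point requiring care is the bookkeeping of the exceptional null sets arising from essential pairwise independence together with the null diagonal; reducing to \emph{bounded} test functions $h_m\circ g$ is precisely what makes these integrands automatically square-integrable, so that no separate integrability hypothesis on $g$ itself is needed.
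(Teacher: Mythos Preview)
The paper does not give its own proof of this lemma; it simply cites it as Corollary~2.9 of \cite{Sun2006}. Your argument is correct and is essentially the one given there: reduce the measure-valued statement to a countable family of bounded scalar processes via a measure-determining class on the Polish space $X$, then show for each such process that the sample mean equals the population mean $\bP$-almost surely by computing that the variance vanishes, using the Fubini property of the extension (applied twice, freezing one index at a time) to justify the interchange of integrals and essential pairwise independence to factor the cross-moments.

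One minor remark: your appeal to atomlessness of $\lambda$ to dispose of the diagonal is unnecessary and, strictly speaking, not licensed by the lemma as stated. The definition of essential pairwise independence (for $\lambda$-almost all $i$, $g_i$ and $g_j$ are independent for $\lambda$-almost all $j$) already gives, by ordinary Fubini on $I\times I$, that $\int_\Omega f_i f_j\,\dif\bP = \bar f(i)\bar f(j)$ for $(\lambda\times\lambda)$-almost all pairs $(i,j)$, with no separate treatment of the diagonal required. The lemma itself does not assume $\lambda$ is atomless, even though the paper's ambient model does.
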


\subsection{Robust perfect equilibria in randomized strategies}
\label{sec:appen:randomized}

In this subsection, we consider randomized strategy robust perfect equilibrium, which is a counterpart of mixed strategy robust perfect equilibrium. The equivalence result in Proposition \ref{prop:equivalence} below allows us to use a randomized strategy robust perfect equilibrium as an intermediate for obtaining a mixed strategy robust perfect equilibrium in the proof of Theorem \ref{thm:exist}.

A randomized strategy for a player is a probability distribution $\mu \in \Delta$. A \emph{randomized strategy profile} $h$ is a measurable function from $I$ to $\Delta$. Note that every pure strategy profile $f$ naturally corresponds to the randomized strategy profile $h^f$ where $h^f(i) = \delta_{f(i)}$ for each $i \in I$. Given a randomized strategy profile $h$, we model the societal summary $s(h)$ as the average action distribution of all the players $\int_I h(i) \dif\lambda(i) \in \Delta$. Also note that, when $f$ is a pure strategy profile, $\int_I f(i) \dif\lambda(i)$ reduces to $\lambda f^{-1}$, which is the societal action distribution induced by $f$.

\begin{defn}[Randomized strategy Nash equilibrium]
\label{defn:randNE}
\rm
A randomized strategy profile $h$ is said to be a \emph{randomized strategy Nash equilibrium} if for $\lambda$-almost all $i \in I$,
$$ \int_A u_i \bigl( a, s(h) \bigr) h(i; \dif a)  \ge  \int_A u_i \bigl( a, s(h) \bigr) \dif\mu(a) \text{ for all } \mu \in \Delta. $$
\end{defn}

A randomized strategy profile $h \colon I \to \Delta$ is said to \emph{have a full support} on $A$ if for $\lambda$-almost all $i \in I$, the strategy $h(i)$ assigns strictly positive probability to each action $a \in A$.

Similar to robust perfect equilibrium in mixed strategies, we also consider perturbations of societal summaries. Recall that a perturbation function $\varphi$ is a measurable mapping from $\Delta$ to $\ehat{\cM}(\Delta)$. Given a perturbed societal summary $\ehat{\tau} \in \ehat{\cM}(\Delta)$, we abuse the notion by using $u_i(a, \ehat{\tau})$ to denote player $i$'s expected payoff $\int_{\Delta} u_i(a, \tau') \dif\ehat{\tau}(\tau')$.

\begin{defn}[$\varepsilon$-robust perfect equilibrium in randomized strategies]
\label{defn:e-rPE}
\rm
A full-support randomized strategy profile $h^\varepsilon \colon I \to \Delta$ is said to be an \emph{$\varepsilon$-robust perfect equilibrium in randomized strategies} if there exist a measurable set of players $I^\varepsilon \subseteq I$ and a perturbation function $\varphi^\varepsilon$ such that
\begin{itemize}
\item $\lambda(I^\varepsilon) > 1-\varepsilon$ and $\varphi^\varepsilon(\tau)$ puts at least $(1-\varepsilon)$-weight on $\delta_\tau$ for each $\tau \in \Delta$,
\item for all $i \in I^\varepsilon$, for all $a_m$ and $a_\ell$ in $A$,
    \begin{equation}
    \label{eq:e-rPE}
    u_i \Bigl( a_m, \varphi^\varepsilon \bigl( s(h^\varepsilon) \bigr) \Bigr)  <  u_i \Bigl( a_\ell, \varphi^\varepsilon \bigl( s(h^\varepsilon) \bigr) \Bigr)  \Rightarrow  h^\varepsilon(i; a_m) \le \varepsilon.
    \end{equation}
\end{itemize}
\end{defn}

Here $I^\varepsilon$ also denotes the set of rational players. To guarantee a full-support randomized strategy profile to be an $\varepsilon$-robust perfect equilibrium, it is required that most players are rational---each rational player $i \in I^\varepsilon$ should assign relatively small probability to any non-best-response action with respect to a perturbation of the equilibrium societal summary.

Now we define robust perfect equilibrium in randomized strategies.

\begin{defn}[Robust perfect equilibrium in randomized strategies]
\label{defn:rPE}
\rm
A randomized strategy profile $h \colon I \to \Delta$ is said to be a \emph{randomized strategy robust perfect equilibrium} if there exist a sequence of randomized strategy profiles $\{h^n\}_{n \in \bZ_+}$ and a sequence of positive constants $\{\varepsilon_n\}_{n \in \bZ_+}$ such that
\begin{enumerate}[label=(\roman*)]
\item each $h^n$ is an $\varepsilon_n$-robust perfect equilibrium associated with the perturbed player space $I^n$ and a perturbation function $\varphi^n$, and $\varepsilon_n \to 0$ as $n$ goes to infinity,
\item for $\lambda$-almost all $i \in I$, there exists a subsequence $\{h^{n_k}\}_{k=1}^\infty$ such that $i \in \cap_{k=1}^\infty I^{n_k}$ and $\lim\limits_{k \to \infty} h^{n_k}(i) = h(i)$,
\item $\lim\limits_{n \to \infty} \int_I h^n(i) \dif\lambda(i) = \int_I h(i) \dif\lambda(i)$.
\end{enumerate}
\end{defn}

We present an equivalence result between a randomized strategy robust perfect equilibrium and a mixed strategy robust perfect equilibrium in the following proposition. Its proof is given in Subsection~\ref{sec:appen:equivalence}.

\begin{prop}
\label{prop:equivalence}
For a large game $G$, we have the following results.
\begin{enumerate}[label=(\arabic*)]
\item For any mixed strategy robust perfect equilibrium $g \colon I \times \Omega \to A$, the randomized strategy profile $h$ as defined by $h(i) = \bP g_i^{-1}$ for $i \in I$ is a robust perfect equilibrium in randomized strategy.
\item For any randomized strategy robust perfect equilibrium $h \colon I \to \Delta$, there is a mixed strategy robust perfect equilibrium $g$ with $\bP g_i^{-1} = h(i)$ for any $i \in I$.
\end{enumerate}
\end{prop}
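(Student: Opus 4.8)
The plan is to exploit the exact law of large numbers (Lemma~\ref{lem:ELLN}) to show that the two $\varepsilon$-equilibrium notions agree term by term, and then to use the universality of the rich Fubini extension (Lemma~\ref{lem:universality}) to move between mixed and randomized profiles. The single linking fact I would establish first is the following: if $g\colon I\times\Omega\to A$ is any mixed strategy profile and $h(i)=\bP g_i^{-1}$ is the associated randomized profile, then for $\bP$-almost all $\omega$ the societal summary $s(g_\omega)=\lambda g_\omega^{-1}$ equals the deterministic average $\int_I \bP g_i^{-1}\dif\lambda(i)=s(h)$. Indeed, by the Fubini property the distribution $(\lambda\boxtimes\bP)g^{-1}$ equals $\int_I \bP g_i^{-1}\dif\lambda(i)$, and by Lemma~\ref{lem:ELLN} the sample distribution $\lambda g_\omega^{-1}$ coincides with $(\lambda\boxtimes\bP)g^{-1}$ for $\bP$-almost every $\omega$. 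Consequently, for every action $a\in A$ and every perturbation function $\varphi$ the integrand is $\bP$-almost surely constant in $\omega$, so $\int_\Omega u_i\bigl(a,\varphi(s(g_\omega))\bigr)\dif\bP(\omega)=u_i\bigl(a,\varphi(s(h))\bigr)$. This identity makes the best-response implication \eqref{eq:e-mixPE} in Definition~\ref{defn:e-mixPE} literally the same as \eqref{eq:e-rPE} in Definition~\ref{defn:e-rPE}, once we note $\bP g_i^{-1}(a_m)=h(i;a_m)$ and that the full-support requirement on $\bP g_i^{-1}$ is exactly the full-support requirement on $h(i)$.

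For part (1), I would take the witnessing sequence $\{g^n\}$, constants $\{\varepsilon_n\}$, player spaces $\{I^n\}$ and perturbation functions $\{\varphi^n\}$ certifying that $g$ is a mixed strategy robust perfect equilibrium, and set $h^n(i)=\bP(g_i^n)^{-1}$. The linking identity, applied to each $g^n$ with the \emph{same} $I^n$ and $\varphi^n$, shows immediately that $h^n$ is an $\varepsilon_n$-robust perfect equilibrium in randomized strategies in the sense of Definition~\ref{defn:e-rPE}. Conditions (ii) and (iii) of Definition~\ref{defn:rPE} then follow verbatim from conditions (ii) and (iii) of Definition~\ref{defn:mixPE}, because $h^n(i)=\bP(g_i^n)^{-1}$, $h(i)=\bP g_i^{-1}$, and the averaged distributions $\int_I h^n(i)\dif\lambda(i)$ agree with $\int_I \bP(g_i^n)^{-1}\dif\lambda(i)$. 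Hence $h$ is a randomized strategy robust perfect equilibrium.

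For part (2), I would run the argument in reverse, using Lemma~\ref{lem:universality} to realize the prescribed distributions. Let $\{h^n\}$, $\{\varepsilon_n\}$, $\{I^n\}$, $\{\varphi^n\}$ witness that $h$ is a randomized strategy robust perfect equilibrium. Viewing $h\colon I\to\Delta=\cM(A)$ as a measurable assignment of distributions, Lemma~\ref{lem:universality} produces an essentially pairwise independent process $g\colon I\times\Omega\to A$ with $\bP g_i^{-1}=h(i)$ for $\lambda$-almost all $i$; applying the lemma to each $h^n$ likewise yields essentially pairwise independent $g^n$ with $\bP(g_i^n)^{-1}=h^n(i)$. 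These are bona fide mixed strategy profiles. Full support of $h^n(i)$ gives Condition (i) of Definition~\ref{defn:e-mixPE} for $g^n$, and the linking identity (again with the same $I^n$, $\varphi^n$) turns \eqref{eq:e-rPE} into \eqref{eq:e-mixPE}, so each $g^n$ is a mixed strategy $\varepsilon_n$-robust perfect equilibrium. Since $\bP(g_i^n)^{-1}=h^n(i)$ and $\bP g_i^{-1}=h(i)$, conditions (ii) and (iii) of Definition~\ref{defn:mixPE} are inherited from those of Definition~\ref{defn:rPE}, so $g$ is a mixed strategy robust perfect equilibrium with $\bP g_i^{-1}=h(i)$, as required.

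I expect the only genuinely substantive steps to be the linking identity and, in part (2), the appeal to universality; everything else is a direct transcription between the two definitions. The point requiring the most care is that the realized societal summary is random a priori, yet the equilibrium conditions are stated through a single perturbation $\varphi^n(s(\cdot))$ of it; the exact law of large numbers is precisely what collapses this randomness to the deterministic average $s(h^n)$, and without it the two equilibrium notions would not align. Measurability of the constructed profiles and joint measurability on the Fubini extension are handled by Lemma~\ref{lem:universality}, and no cross-$n$ independence is needed since the definitions constrain only the individual marginals $\bP(g_i^n)^{-1}$.
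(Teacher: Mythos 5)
Your proposal is correct and follows essentially the same route as the paper's proof: the ELLN-based identification of $s(g_\omega)$ with $s(h)$ for $\bP$-almost every $\omega$ (collapsing the two $\varepsilon$-equilibrium conditions into one another), followed by an appeal to universality in part (2). The only detail the paper adds is that Lemma~\ref{lem:universality} delivers $\bP g_i^{-1}=h(i)$ only for $\lambda$-almost all $i$, so it redefines $g_i$ on the exceptional null set as an arbitrary random variable with distribution $h(i)$ to obtain the equality for \emph{every} $i\in I$, as the statement of part (2) requires.
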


When $h$ is a pure strategy profile, Proposition~\ref{prop:equivalence} implies that Definition~\ref{defn:rPE} coincides with the definition of pure strategy robust perfect equilibrium below Definition~\ref{defn:mixPE}.

The following remark discusses some relevant papers on various equilibrium concepts.

\begin{rmk}
\label{rmk:equivalence}
\rm
An equivalence result between a randomized strategy Nash equilibrium and a mixed strategy Nash equilibrium in large games was shown in \cite{KRSY2015}. Ex post Nash equilibrium in large games was considered in \citet[Section 11]{KS2002} and \cite{KRSY2015} while approximate ex post Nash equilibrium was studied in \cite{Kalai2004}.

For a large game with finite actions and general continuous payoffs, \cite{Schmeidler1973} and \cite{Rath1994,Rath1998} proved the existence of pure strategy Nash equilibria and pure strategy perfect equilibria, respectively. Unlike our concept of robust perfect equilibrium with three types of perturbations on the agent space, societal summaries, and actions by individual players, \cite{Rath1994,Rath1998} considered only perturbations on the actions by individual players. As noted in \citet[Section 5]{Rath1998}, such a pure strategy perfect equilibrium is not admissible in general. \cite{SZ2020} showed the existence of some version of randomized strategy perfect equilibrium without perturbations on the agent space. However, the definition of a pure strategy perfect equilibrium as in \citet[Definition 4]{SZ2020} is inconsistent with their definition of a randomized strategy perfect equilibrium when pure strategies are viewed as degenerate randomized strategies.\footnote{For a randomized strategy perfect equilibrium $g$ as in \citet[Definition 4]{SZ2020}, Part (2) of the definition requires that for almost all player $i$, $g(i)$ is a limit point of $g^n(i)$, where $\{g^n\}_{n \in \bZ_+}$ is a sequence $\epsilon_n$-perfect equilibrium in their sense. However, a pure strategy perfect equilibrium $g$ as in \citet[Definition 4]{SZ2020} does not require the Dirac measure $\delta_{g(i)}$ to be a limit point of $g^n(i)$; the pure strategy $g(i)$ over there was only required to be in the support of a limit point of $g^n(i)$.} None of those papers considered any version of mixed strategy perfect equilibrium, or aggregate robustness, or ex post robust perfection.
\end{rmk}

\subsection{Proof of Proposition~\ref{prop:equivalence}}
\label{sec:appen:equivalence}

The proof is divided into two parts.

\paragraph{Part~1.} Let $g$ be a mixed strategy robust perfect equilibrium and $h$ a randomized strategy profile such that $h(i) = \bP g_i^{-1}$ for each $i \in I$. We want to show that $h$ is a randomized strategy robust perfect equilibrium.

By Definition~\ref{defn:mixPE}, there exist a sequence of mixed strategy profiles $\{g^n\}_{n \in \bZ_+}$ and a sequence of positive constants $\{\varepsilon_n\}_{n \in \bZ_+}$ such that
\begin{enumerate}[label=(\roman*)]
\item each $g^n$ is an $\varepsilon_n$-robust perfect equilibrium associated with the perturbed player space $I^n$ and perturbation function $\varphi^n$, and $\varepsilon_n \to 0$ as $n$ goes to infinity,
\item for $\lambda$-almost all $i \in I$, there exists a subsequence $\{g^{n_k}\}_{k=1}^{\infty}$ such that $i \in \cap_{k=1}^\infty I^{n_k}$ and $\lim\limits_{k \to \infty} \bP (g^{n_k}_i)^{-1} = \bP g_i^{-1}$,
\item $\lim\limits_{n \to \infty} \int_I \bP (g^n_i)^{-1} \dif\lambda(i) = \int_I \bP g_i^{-1} \dif\lambda(i)$.
\end{enumerate}

For each $n \in \bZ_+$, we define a randomized strategy profile $h^n$ such that $h^n(i) = \bP (g_i^n)^{-1}$ for each $i \in I$. We claim that each $h^n$ is a randomized strategy $\varepsilon_n$-robust perfect equilibrium associated with the perturbed player space $I^n$ and perturbation function $\varphi^n$. By ELLN in Lemma~\ref{lem:ELLN}, there exists $\Omega^0 \subseteq \Omega$ with $\bP(\Omega^0) = 1$ such that for each $\omega \in \Omega^0$,
$$ s(h^n) = \int_I h^n(i) \dif\lambda(i) = \int_I \bP (g_i^n)^{-1} \dif\lambda(i) \xlongequal{\text{ELLN}} \lambda (g_\omega^n)^{-1} = s(g_\omega^n). $$

Consider two different actions $a_m$ and $a_\ell$ that satisfy $u_i \bigl( a_m, \varphi^n(s(h^n)) \bigr) < u_i \bigl( a_\ell, \varphi^n(s(h^n)) \bigr)$. Then we have that
$$ \int_\Omega u_i \bigl( a_m, \varphi^n(s(g_\omega^n)) \bigr) \dif \bP(\omega) < \int_\Omega u_i \bigl( a_\ell, \varphi^n(s(g_\omega^n)) \bigr) \dif \bP(\omega). $$
Since $g^n$ is an $\varepsilon_n$-robust perfect equilibrium, we have that for each $i \in I^n$,
$$h^n(i; a_m) = \bP (g_i^n)^{-1} (a_m) \le \varepsilon_n.$$
Moreover, since each $\bP(g^n_i)^{-1}$ is a full-support probability measure, so is $h^n$. Therefore, $h^n$ is an $\varepsilon_n$-robust perfect equilibrium with the perturbed player space $I^n$ and perturbation function $\varphi^n$.

Furthermore, for $\lambda$-almost all $i \in I$, recall that $\{g^{n_k}\}_{k=1}^\infty$ is the subsequence in Condition (ii). We consider the corresponding subsequence $\{h^{n_k}\}_{k=1}^\infty$. Then, $i \in \cap_{k=1}^\infty I^{n_k}$ and $\lim\limits_{k \to \infty} h^{n_k}(i) =\lim\limits_{k \to \infty} \bP (g^{n_k}_i)^{-1} = \bP g_i^{-1} = h(i)$. We also have $\lim\limits_{n \to \infty} \int_I h^n(i) \dif\lambda(i) = \lim\limits_{n \to \infty} \int_I \bP (g^n_i)^{-1} \dif\lambda(i) = \int_I \bP g_i^{-1} = \int_I h(i) \dif\lambda(i)$.

Therefore, $h$ is a randomized strategy robust perfect equilibrium.

\paragraph{Part~2.} Let $h$ be a randomized strategy robust perfect equilibrium. By Definition~\ref{defn:rPE}, there exist a sequence of randomized strategy profiles $\{h^n\}_{n \in \bZ_+}$ and a sequence of positive constants $\{\varepsilon_n\}_{n \in \bZ_+}$ such that
\begin{enumerate}[label=(\roman*)]
\item each $h^n$ is an $\varepsilon_n$-robust perfect equilibrium associated with the perturbed player space $I^n$ and perturbation function $\varphi^n$, and $\varepsilon_n \to 0$ as $n$ goes to infinity,
\item for $\lambda$-almost all $i \in I$, there exists a subsequence $\{h^{n_k}\}_{k=1}^\infty$ such that $i \in \cap_{k=1}^\infty I^{n_k}$ and $\lim\limits_{k \to \infty} h^{n_k}(i) = h(i)$,
\item $\lim\limits_{n \to \infty} \int_I h^n(i) \dif\lambda(i) = \int_I h(i) \dif\lambda(i)$.
\end{enumerate}

Note that $(I \times \Omega, \cI \boxtimes \cF, \lambda \boxtimes \bP)$ is a rich Fubini extension and $h$ is a measurable function from $I$ to $\Delta$. By Lemma~\ref{lem:universality}, there exists an $\cI \boxtimes \cF$-measurable function $f$ from $I \times \Omega$ to $A$ and a set $I' \in {\cal I}$ with $\lambda(I')=1$ such that $f$ is essentially pairwise independent and for any $i \in I'$, $\bP f_i^{-1} = h(i)$.
For any $i \in I \backslash I'$, let $f'_i: \Omega \to A$ be a random variable with distribution $h(i)$.
For any $(i, \omega) \in I \times \Omega$, let
$$g(i, \omega)=
\begin{cases}
f(i, \omega) & \text{ if }	i \in I',\\
f'(i, \omega) & \text{ if } i \notin I'.
\end{cases}
$$ It is clear that $g$ is also $\cI \boxtimes \cF$-measurable and essentially pairwise independent, and for any $i \in I$, $\bP g_i^{-1} = h(i)$.
Similarly, we obtain a sequence of $\cI \boxtimes \cF$-measurable functions $\{g^n\}_{n \in \bZ_+}$ such that for each $n \in \bZ_+$, $g^n \colon I \times \Omega \to A$ is essentially pairwise independent, and for every $i \in I$, $\bP (g_i^n)^{-1} = h^n(i)$.

By the similar arguments in Step 1, we can apply the ELLN (see Lemma~\ref{lem:ELLN}) to show that each $g^n$ is a mixed strategy $\varepsilon_n$-robust perfect equilibrium. The remaining proof is the same as Part~1 and hence $g$ is a mixed strategy robust perfect equilibrium.

\subsection{Proof of Theorem~\ref{thm:robust}}
\label{sec:appen:robust}

Let $g \colon I \times \Omega \to A$ be a mixed strategy robust perfect equilibrium. Then there exist a sequence of mixed strategy profiles $\{g^n\}_{n \in \bZ_+}$ and a sequence of positive constants $\{\varepsilon_n\}_{n \in \bZ_+}$ such that
\begin{enumerate}[label=(\roman*)]
\item each $g^n$ is an $\varepsilon_n$-robust perfect equilibrium associated with the perturbed player space $I^n$ and perturbation function $\varphi^n$, and $\varepsilon_n \to 0$ as $n$ goes to infinity,
\item there exists $I^0 \in {\cal I}$ such that $\lambda(I^0)=1$ and for any $i \in I^0$, there exists a subsequence $\{g^{n_k}\}_{k=1}^{\infty}$ such that $i \in \cap_{k=1}^\infty I^{n_k}$ and $\lim\limits_{k \to \infty} \bP (g^{n_k}_i)^{-1} = \bP g_i^{-1}$,
\item $\lim\limits_{n \to \infty} \int_I \bP (g^n_i)^{-1} \dif\lambda(i) = \int_I \bP g_i^{-1} \dif\lambda(i)$.
\end{enumerate}

We first prove the properties of aggregate robustness and admissibility. Then we prove the ex post robust perfection. Lastly, we show the refinement result.

\paragraph{Aggregate robustness.} Fix a player $i \in I^0$. We will construct a sequence of perturbation functions $\phi^t \colon \Delta \to \ehat{\cM}(\Delta)$ such that the two conditions in Definition~\ref{defn:agg-robust} are satisfied. Recall $\{g^{n_k}\}_{k=1}^{\infty}$ is the corresponding subsequence in Condition (ii). For each $k \in \bZ_+$, the ELLN (Lemma~\ref{lem:ELLN}) implies that for $\bP$-almost all $\omega \in \Omega$,
$$ s(g^{n_k}_\omega)  =  \lambda (g^{n_k}_\omega)^{-1}  \xlongequal{\text{ELLN}}  \int_I \bP (g^{n_k}_i)^{-1} \dif\lambda(i). $$
For each $k \in \bZ_+$, denote $\int_I \bP (g^{n_k}_i)^{-1} \dif\lambda(i)$ by $s'(g^{n_k})$. Similarly, we have $s(g_\omega) = \lambda g_\omega^{-1} = \int_I \bP g_i^{-1} \dif\lambda(i)$ for $\bP$-almost all $\omega \in \Omega$. Denote $\int_I \bP g_i^{-1} \dif\lambda(i)$ by $s'(g)$.

Fix an action $a \in \supp \bP g_i^{-1}$. Then we have $\bP g_i^{-1}(\{a\}) > 0$. Since $\lim\limits_{n \to \infty} \bP (g^{n_k}_i)^{-1}(\{a\}) = \bP g_i^{-1}(\{a\})$ and $\lim\limits_{k \to \infty} \varepsilon_{n_k} = 0$, there exists a positive integer $M_a$ such that for any $k > M_a$, $\bP (g^{n_k}_i)^{-1}(\{a\}) > \varepsilon_{n_k}$. Note that $g^{n_k}$ is an $\varepsilon_{n_k}$-robust perfect equilibrium with the perturbed player space $I^{n_k}$ and $i \in I^{n_k}$. Therefore, for each $k > M_a$ and for each $a' \in A$, we have
$$ \int_\Omega  u_i \Bigl( a, \ehat{s(g^{n_k}_\omega)} \Bigr) \dif\bP(\omega)  \ge  \int_\Omega  u_i \Bigl( a', \ehat{s(g^{n_k}_\omega)} \Bigr) \dif\bP(\omega), $$
where we use $\ehat{s(g^{n_k}_\omega)}$ to denote $\varphi^{n_k} \bigl(s(g^{n_k}_\omega)\bigr)$ for notational simplicity. Since $s(g^{n_k}_\omega) = s'(g^{n_k})$ for $\bP$-almost all $\omega$, we obtain that for each $k > M_a$ and for each $a' \in A$,
$$ u_i \Bigl( a, \ehat{s'(g^{n_k})} \Bigr)  \ge  u_i \Bigl( a', \ehat{s'(g^{n_k})} \Bigr). $$
Let $M = \max \{ M_a \mid a \in \supp \bP g_i^{-1} \}$. It is clear that for each $k > M$, for each $a \in \supp \bP g_i^{-1}$, and for each $a' \in A$,
$$ u_i \Bigl( a, \ehat{s'(g^{n_k})} \Bigr)  \ge  u_i \Bigl( a', \ehat{s'(g^{n_k})} \Bigr). $$

Therefore, for each $k > M$, for each $\xi \colon \Omega \to A$, and for $\bP$-almost all $\omega \in \Omega$,
$$ u_i \Bigl( g_i(\omega), \ehat{s'(g^{n_k})} \Bigr)  \ge  \int_\Omega u_i \Bigl( \xi(\omega), \ehat{s'(g^{n_k})} \Bigr) \dif\bP(\omega). $$
Taking integration on $\omega$, we have that for each $k > M$ and for each $\xi \colon \Omega \to A$,
$$ \int_\Omega u_i \Bigl( g_i(\omega), \ehat{s'(g^{n_k})} \Bigr) \dif\bP(\omega)  \ge  \int_\Omega u_i \Bigl( \xi(\omega), \ehat{s'(g^{n_k})} \Bigr) \dif\bP(\omega). $$

For any integer $k > M$, let $\phi^k(s'(g)) = \varphi^{n_k}(s'(g^{n_k})) = \ehat{s'(g^{n_k})}$ and $\phi^k(\tau) = \varphi^{n_k}(\tau)$ for each $\tau \ne s'(g)$. Then for each $k > M$ and for each $\xi \colon \Omega \to A$,
$$ \int_\Omega u_i \Bigl( g_i(\omega), \phi^k \bigl(s'(g)\bigr) \Bigr) \dif\bP(\omega)  \ge  \int_\Omega u_i \Bigl( \xi(\omega), \phi^k \bigl(s'(g)\bigr) \Bigr) \dif\bP(\omega). $$
Since $s'(g) = s(g_\omega)$ for $\bP$-almost all $\omega \in \Omega$, we have that for each $k > M$ and each $\xi \colon \Omega \to A$,
$$ \int_\Omega u_i \Bigl( g_i(\omega), \phi^k \bigl(s(g_\omega)\bigr) \Bigr) \dif \bP(\omega) \ge \int_\Omega u_i \Bigl( \xi(\omega), \phi^k \bigl(s(g_\omega)\bigr) \Bigr) \dif \bP(\omega). $$
According to the construction of $\phi^k$, we also have $\lim\limits_{k \to \infty} \phi^k(\tau) = \delta_\tau$ for each $\tau \in \Delta$. Therefore, $g$ is aggregate robust.

\paragraph{Admissibility.} Instead of proving this property directly, we show that admissibility is a consequence of aggregate robustness.

Fix a player $i \in I^0$. Assume that for this player, $a$ is weakly dominated by a mixed strategy $\xi \colon \Omega \to A$,
$$ u_i(a, \tau) \le  u_i(\xi, \tau) \text{ for all $\tau \in \Delta$, and } u_i(a, \tau')  <  u_i(\xi, \tau') \text{ for some $\tau' \in \Delta$,} $$
where $u_i(\xi ,\tau) = \int_\Omega u_i \bigl( \xi(\omega), \tau \bigr) \dif\bP(\omega)$.

For each $n \in \bZ_+$ and for each $\omega \in \Omega$, since $\ehat{s(g_\omega)} = \varphi^{n}(s(g_\omega))$ is a full-support probability distribution on $\Delta$ and $u_i$ is continuous on $A \times \Delta$, we have that
$$ u_i \Bigl( a, \ehat{s(g_\omega)} \Bigr)  =  \int_\Delta u_i(a, \tau) \dif \ehat{s(g_\omega)}(\tau)  <  \int_\Delta u_i(\xi, \tau) \dif \ehat{s(g_\omega)}(\tau)  =  u_i \Bigl( \xi, \ehat{s(g_\omega)} \Bigr). $$
By taking integration for $\omega$, we have that
$$ \int_\Omega u_i \Bigl(a, \ehat{s(g_\omega)} \Bigr) \dif\bP(\omega)  <  \int_\Omega u_i \Bigl( \xi, \ehat{s(g_\omega)} \Bigr) \dif\bP(\omega). $$
The above equation means that $a$ is not a best response to the perturbed societal summary $\ehat{s(g_\omega)}$. Since $g$ is aggregate robust, we conclude that $\bP g_i^{-1} (a) = 0$.

\paragraph{Ex post robust perfection.} Next we show that $g$ has the ex post robust perfection property, i.e., $g_\omega$ is a pure strategy robust perfect equilibrium for $\bP$-almost all $\omega \in \Omega$.

For each $n \in \bZ_+$ and $i \in I$, let $h^n(i) = \bP (g_i^n)^{-1}$. According to the proof of Proposition~\ref{prop:equivalence}, $h^n$ is an $\varepsilon_n$-robust perfect equilibrium associated with $I^n$. For simplicity, we assume that $\varepsilon_n < 1$ for any $n \in \bZ_+$. Based on Lemma~\ref{lem:universality}, we can assume that for $\lambda$-almost all $i \in I$, the sequence of strategies $\{g_i^n\}_{n \in \bZ_+} \colon \Omega \to A$ are independent.\footnote{One can construct an independent sequence $\{g^n_i\}_{n \in \bZ_+}$ such that induced distribution of each $g^n_i$ is $h^n(i)$ as follows. Let $H$ be a mapping from $I$ to $\Delta(A^{\bZ_+})$ defined by $H(i) = h^1(i) \otimes h^2(i) \otimes \cdots \otimes h^n(i) \cdots$. Then by Lemma~\ref{lem:universality}, there exists a mapping $(g^1, g^2, \cdots, g^n, \cdots)$ from $I \times \Omega$ to $A^{\bZ_+}$ such that its induced distribution is $H(i)$. Thus, we obtain the required sequence $\{g^n\}_{n \in \bZ_+}$.} We divide the proof into two steps. In Step 1, we construct a sequence of randomized strategy profiles $\{f_\omega^n\}_{n=1}^\infty$ such that each $f_\omega^n$ is a $2K\varepsilon_n$-robust perfect equilibrium for $\bP$-almost all $\omega \in \Omega$. In Step 2, for $\bP$-almost all $\omega \in \Omega$, the pure strategy profile $g_\omega$ is shown to be the ``limit'' of the sequence $\{f^n_\omega\}_{n \in \bZ_+}$.

\subparagraph{Step 1.}
Fix any $n \in \bZ_+$. For each $\omega \in \Omega$, we define a strategy profile $f_\omega^n$ such that
$$ f_\omega^n(i) = (1-\varepsilon_n) \delta_{g^n_\omega(i)} + \varepsilon_n h^n(i) \text{ for each $i \in I$}. $$

Since each $h^n$ is an $\varepsilon_n$-robust perfect equilibrium associated with $I^n$, for each $i \in I^n$, for all $a_m$, $a_\ell \in A$,
$$ u_i \bigl( a_m, \ehat{s(h^n)} \bigr) < u_i \bigl( a_\ell, \ehat{s(h^n)} \bigr)  \Rightarrow  h^n(i;a_m) \le \varepsilon_n. $$
It is clear that
$$ h^n \Bigl( i; \Br_i \bigl( \ehat{s(h^n)} \bigr) \Bigr) \ge 1 - K\varepsilon_n, $$
where $\Br_i(\ehat{s(h^n)})$ is the set of player $i$'s best responses given the perturbed societal summary $\ehat{s(h^n)}$.

By the ELLN (see Lemma~\ref{lem:ELLN}), there exists $\Omega^1 \subseteq \Omega$ with $\bP(\Omega^1)=1$ such that for any $\omega \in \Omega^1$,
$$ \lambda (g_\omega^n)^{-1}  \xlongequal{\text{ELLN}}  \int_I \bP (g^n_i)^{-1} \dif\lambda(i)  =  \int_I h^n(i) \dif\lambda(i). $$
Therefore, for any $\omega \in \Omega^1$,
$$ \lambda \Bigl( \Bigl\{ i \in I^n \;\Big\vert\; g_\omega^n(i) \in \Br_i \bigl( \ehat{s(h^n)} \bigr) \Bigr\} \Bigr)  =  \int_{I^n} h^n \Bigl( i; \Br_i \bigl( \ehat{s(h^n)} \bigr) \Bigr) \dif\lambda(i)  \ge  (1-K\varepsilon_n)^2, $$
where the inequality is due to that $h^n \bigl( i; \Br_i \bigl( \ehat{s(h^n)} \bigr) \bigr)  \ge  1 - K\varepsilon_n$ and $\lambda(I^n) \ge 1 - \varepsilon_n \ge 1 - K\varepsilon_n$. For any $\omega \in \Omega$, let $I^n_\omega = \{i \in I^n \mid g_\omega^n(i) \in \Br_i(\ehat{s(h^n)})\}$. Then $\lambda(I^n_\omega) \ge (1 - K\varepsilon_n)^2 \ge 1 - 2K\varepsilon_n$ for any $\omega \in \Omega^1$.

It is easy to see that each $f_\omega^n$ has a full support on $A$ by its construction. Moreover, the ELLN (see Lemma~\ref{lem:ELLN}) implies that for any $\omega \in \Omega^1$,
$$ s(f_\omega^n)  =  \int_I f_\omega^n(i) \dif\lambda(i)  =  (1-\varepsilon_n) \int_I \delta_{g^n_\omega(i)} \dif\lambda(i) + \varepsilon_n \int_I h^n(i) \dif\lambda(i)  \xlongequal{\text{ELLN}}  \int_I h^n(i) \dif\lambda(i)  =  s(h^n). $$
Thus, for each $\omega \in \Omega^1$ and $i \in I^n_\omega$, we know that
\begin{align*}
f_\omega^n \Bigl( i; \Br_i \bigl( \ehat{s(f_\omega^n)} \bigr) \Bigr)
    & = f_\omega^n \Bigl(i; \Br_i \bigl( \ehat{s(h^n)} \bigr) \Bigr) = (1-\varepsilon_n) \delta_{g^n_\omega(i)} \Bigl( \Br_i \bigl( \ehat{s(h^n)} \bigr) \Bigr) + \varepsilon_n h^n\Bigl(i; \Br_i \bigl( \ehat{s(h^n)} \bigr) \Bigr)   \\
    & \ge (1-\varepsilon_n) + \varepsilon_n (1-K\varepsilon_n)  =  1 - K\varepsilon^2_n \ge 1 - 2K\varepsilon_n.
\end{align*}
It is clear that for any $\omega \in \Omega^1$, $\lambda(I^n_\omega) \ge 1 - 2K\varepsilon_n$ and for any $i \in I^n_\omega$, $a_m$ and $a_\ell$ in $A$, if $u_i \bigl( a_m, \ehat{s(f_\omega^n)} \bigr) < u_i \bigl( a_\ell, \ehat{s(f_\omega^n)} \bigr)$, then $f_\omega^n(i; a_m) \le 2K\varepsilon_n$. Therefore, for any $\omega \in \Omega^1$, $f_\omega^n$ is a $2K\varepsilon_n$-robust perfect equilibrium associated with $I^n_\omega$. By taking countable intersections, we can find $\Omega^2 \subseteq \Omega$ with $\bP(\Omega^2)=1$ such that for any $\omega \in \Omega^2$, $f_\omega^n$ is a $2K\varepsilon_n$-robust perfect equilibrium associated with $I^n_\omega$ for any $n \in \bZ_+$.

\subparagraph{Step 2.}
Let $h$ be a randomized strategy profile such that $h(i) = \bP g_i^{-1}$ for each $i \in I$. Recall that $I^0$ is the set of players such that $\lambda(I^0) = 1$ and for each $i \in I^0$, there exists a subsequence $\{h^{n_k}\}_{k=1}^\infty$ such that $i \in \cap_{k=1}^\infty I^{n_k}$ and
$$ \lim\limits_{k \to \infty} h^{n_k}(i)  =  \lim\limits_{k \to \infty} \bP (g^{n_k}_i)^{-1}  =  \bP g_i^{-1}  =  h(i). $$

Fix any $i \in I^0$ and $a \in \supp h(i)$. Since $\lim\limits_{k \to \infty} h^{n_k}(i) = h(i)$, $h^{n_k} \bigl( i; \Br_i(\ehat{s( h^{n_k})}) \bigr)  \ge  1 - K\varepsilon_{n_k}$ and $\lim\limits_{k \to \infty} \varepsilon_{n_k} = 0$, there exists $M \in \bZ_+$ such that for $k \ge M$,
\begin{align*}
\bP (g^{n_k}_i)^{-1} \Bigl( \{a\} \cap \Br_i \bigl( \ehat{s(h^{n_k})} \bigr) \Bigr) & =  h^{n_k} \Bigl( i; \{a\} \cap \Br_i \bigl( \ehat{s(h^{n_k})} \bigr) \Bigr)   \\
& \ge h^{n_k}(i; \{a\}) - h^{n_k} \Bigl( i; A \setminus \Br_i \bigl( \ehat{s(h^{n_k})} \bigr) \Bigr) >  \tfrac{1}{2} h(i; \{a\})  >  0.
\end{align*}
For each $k \in \bZ_+$, let $\Omega^{n_k} = \bigl\{ \omega \in \Omega \mid g_i^{n_k}(\omega) \in \{a\} \cap \Br_i \bigl( \ehat{s(h^{n_k})} \bigr) \bigr\}$. Since $\{g_i^n\}_{n \in \bZ_+}$ are independent, we have that the events $\{ \Omega^{n_k} \}_{k \ge M}$ are independent. Furthermore, $\sum_{k \ge M} \bP(\Omega^{n_k}) = \infty$. Then the second Borel-Cantelli lemma (see Theorem 2.3.6 in \cite{Durrett2010}) implies that for $\bP$-almost all $\omega \in \Omega$,
$$ g_i^{n_k}(\omega) \in \{a\} \cap \Br_i \bigl( \ehat{s(h^{n_k})} \bigr) \text{ infinitely many times.} $$
Therefore, for $\bP$-almost all $\omega \in \Omega$, there exists a subsequence $\{n_{k_t}\}_{t=1}^\infty$ such that for each $t \in \bZ_+$, $g_i^{n_{k_t}}(\omega) = a$ and $g_i^{n_{k_t}}(\omega) \in \Br_i \bigl( \ehat{s(h^{n_{k_t}})} \bigr)$. Since $h(i) = \bP g_i^{-1}$, it is clear that for $\bP$-almost all $\omega \in \Omega$, $g_i(\omega) \in \supp h(i)$. Therefore, there exists $\Omega_i \subseteq \Omega$ with $\bP(\Omega_i)=1$ such that for any $\omega \in \Omega_i$, $g_i(\omega) \in \supp h(i)$ and
 there exists a subsequence $\{n_l\}_{l=1}^\infty$ such that for each $l \in \bZ_+$, $g_i^{n_l}(\omega) = g_i(\omega)$, $i \in I^{n_l}$ and $g_i^{n_l}(\omega) \in \Br_i \bigl( \ehat{s(h^{n_l})} \bigr)$, which also lead to $i \in I_\omega^{n_l}$.

Let $e$ be a mapping from $I \times \Omega$ to $A$ such that
$$ e(i, \omega) = \begin{cases}
a_1,    & \text{if } g(i, \omega) \ne a_1,  \\
a_2,    & \text{if } g(i, \omega) = a_1.
\end{cases} $$
It is clear that $e$ is $\cI \boxtimes \cF$-measurable and $e(i, \omega) \ne g(i,\omega)$ for any $(i,\omega) \in I \times \Omega$. For any $n \in \bZ_+$, let $e^n$ be an $\cI \boxtimes \cF$-measurable mapping from $I \times \Omega$ to $A$ such that
$$ e^n(i, \omega) = \begin{cases}
g^n(i, \omega), & \text{if } i \in I^n_\omega,  \\
e(i, \omega),   & \text{if } i \notin I^n_\omega.
\end{cases} $$
We use $E$ to denote the $\cI \boxtimes \cF$-measurable set $\bigl\{ (i, \omega) \in I \times \Omega \mid g(i, \omega) \in \cap_{k=1}^{\infty} \cup_{r=k}^{\infty} \{e^r(i, \omega)\} \bigr\}$. By the result in end of the last paragraph, it is clear that for any $i \in I^0$, $\Omega_i \subseteq E_i$, which implies that $\bP(E_i)=1$. Note that $\lambda(I^0)=1$. By the Fubini property, we know that there exists $\Omega^3 \subseteq \Omega$ with $\bP(\Omega^3)=1$ such that for any $\omega \in \Omega^3$, $\lambda(E_\omega)=1$. By the construction of $e$ and $e^n$, for any $\omega \in \Omega$ and any $i \in E_\omega$, it is impossible to find $T \in \bZ_+$ such that $i \notin I^n_\omega$ for each $n \ge T$; otherwise, $e^n(i, \omega) = e(i, \omega)$ for each $n \ge T$, and hence $g(i, \omega) \notin \cap_{k=1}^\infty \cup_{r=k}^\infty \{e^r(i, \omega)\}$, which is a contradiction. Thus, for any $\omega \in \Omega^3$ and any $i \in E_\omega$, there exists a subsequence $\{n_r\}_{r=1}^\infty$ such that for each $r \in \bZ_+$, $i \in I^{n_r}_\omega$ and $\lim\limits_{r \to \infty} g_i^{n_r}(\omega) = g_i(\omega)$.

Since $f_\omega^n(i) = (1-\varepsilon_n) \delta_{g^n_\omega(i)} + \varepsilon_n h^n(i)$, it is clear that
$$ \lim_{r \to \infty} f_\omega^{n_r}(i)  =  \lim_{r \to \infty} \delta_{g^{n_r}_\omega(i)}  =  \delta_{g_\omega(i)}. $$
By the ELLN (see Lemma~\ref{lem:ELLN}), there exists $\Omega^4 \subseteq \Omega$ with $\bP(\Omega^4)=1$ such that for any $\omega \in \Omega^4$,
$$ \lim\limits_{n \to \infty} \int_I f_\omega^n(i) \dif\lambda  =  \lim_{n \to \infty} \int_I h^n(i) \dif\lambda  =  \lim_{n \to \infty} \int_I \bP (g^n_i)^{-1} \dif\lambda  =  \int_I \bP g_i^{-1} \dif\lambda  \xlongequal{\text{ELLN}}  \int_I \delta_{g_\omega(i)} \dif\lambda. $$
Note that for any $\omega \in \Omega^2$, $f_\omega^n$ is a $2K\varepsilon_n$-robust perfect equilibrium associated with $I^n_\omega$ for any $n \in \bZ_+$. Hence, for any $\omega \in \Omega^2 \cap \Omega^3 \cap \Omega^4$, Conditions (i), (ii) and (iii) in Definition~\ref{defn:mixPE} hold for $g_\omega$. It is clear that $\bP(\Omega^2 \cap \Omega^3 \cap \Omega^4) = 1$. Therefore, for $\bP$-almost all $\omega \in \Omega$, the ex post strategy profile $g_\omega$ is a pure strategy robust perfect equilibrium.

\paragraph{Refinement.} Finally we show that $g$ is a mixed strategy Nash equilibrium. Fix a player $i \in I^0$ and an action $a \in A$ with $\bP g^{-1}_i(a) > 0$. Recall $\{g^{n_k}\}_{k=1}^{\infty}$ is the corresponding subsequence in Condition (ii). Then we know that $\lim\limits_{k \to \infty} \bP (g^{n_k}_i)^{-1}(a)  =  \bP g_i^{-1}(a) > 0$.
Hence, there exists $L \in \bZ_+$ such that for any $k \geq L$, $\bP (g^{n_k}_i)^{-1}(a) > \varepsilon_{n_k}$. By Condition (ii) in Definition~\ref{defn:e-mixPE},
$$ \int_\Omega u_i \Bigl( a, \ehat{s(g_\omega^{n_k})} \Bigr) \dif \bP(\omega)  \ge  \int_\Omega u_i \Bigl( a', \ehat{s(g_\omega^{n_k})} \Bigr) \dif \bP(\omega) \text{ for any } a'\in A. $$
Note that for $\bP$-almost all $\omega \in \Omega$, $s(g^{n_k}_\omega) = s'(g^{n_k})$. It is clear that $u_i \Bigl( a, \ehat{s'(g^{n_k})} \Bigr) \ge u_i \Bigl( a', \ehat{s'(g^{n_k})} \Bigr)$ for any $a' \in A$ and any $k \ge L$. Since $u_i$ is continuous on $A \times \Delta$ and $\lim\limits_{k \to \infty} \ehat{s'(g^{n_k})} = \delta_{s'(g)}$,
\begin{align*}
\int_\Omega u_i \bigl( a, s(g_\omega) \bigr) \dif \bP(\omega)
    & = u_i \Bigl( a, \delta_{s'(g)} \Bigr) = \lim\limits_{k \to \infty} u_i \Bigl( a, \ehat{s'(g^{n_k})} \Bigr)   \\
    & \ge \lim\limits_{k \to \infty} u_i \Bigl( a', \ehat{s'(g^{n_k})} \Bigr) = u_i \Bigl( a', \delta_{s'(g)}) \Bigr) = \int_\Omega u_i \bigl( a', s(g_\omega) \bigr) \dif \bP(\omega),
\end{align*}
for any $a' \in A$. Since $a$ is arbitrarily fixed in the support of $\bP g^{-1}_i$, we know that $g$ is a mixed strategy Nash equilibrium.

\subsection{Proof of Theorem~\ref{thm:exist}}
\label{sec:appen:exist}

\paragraph{(1)} To prove the existence of a symmetric robust perfect equilibrium, we consider a new large game. The new player space $(\bar{I}, \bar{\cI}, \bar{\lambda})$ defined as follows: we identify $i$ and $i'$ in $I$ by the equivalence relation $i \sim i'$ if and only if $u_i = u_{i'}$, let $\bar{I}$ denote the quotient space $I/\sim$, and let the quotient map be $\pi \colon I \to \bar{I}$. The measure structure on $\bar{I}$ is defined as the pushforward under $\pi$ of the measure structure on $I$, i.e., a subset $J$ of $\bar{I}$ is in $\bar{\cI}$ if $\pi^{-1}(J)$ is in $\cI$, and $\bar{\lambda}(J) = \lambda\bigl(\pi^{-1}(J)\bigr)$ for any $J \in \bar{\cI}$. For any $\bar{i} \in \bar{I}$, $u_{\bar{i}}$ is defined to be the payoff function of any player in $\pi^{-1}(\bar{i})$. According to the definition of $\pi$, it is clear that $u_{\bar{i}}$ is well defined.

Fix any $\varepsilon \in (0, 1)$. Let $N^\varepsilon$ be the set of measurable functions from $\bar{I}$ to $\Delta$ such that for each $h \in N^\varepsilon$, we have $h(\bar{i}; a)\ge \frac{\varepsilon}{K}$ for $\bar{\lambda}$-almost all $\bar{i} \in \bar{I}$ and every $a \in A$. Let $\Delta^\varepsilon = \{\int_{\bar{I}} h(\bar{i})\dif\bar{\lambda}(\bar{i})\mid h\in N^\varepsilon\} =   \{(x_1, x_2, \ldots, x_K) \mid x_1 + x_2 + \cdots + x_K = 1, x_k \ge \frac{\varepsilon}{K}, \forall k = 1, 2, \dots, K\}.$ Clearly, $\Delta^\varepsilon$ is a nonempty convex and compact subset of $\Delta$. We define a perturbation function $\varphi(\tau) = (1-\varepsilon)\delta_\tau+\varepsilon\eta$ for any $\tau \in \Delta$, where $\eta$ is the uniform distribution on $\Delta$. For simplicity, $\varphi(\tau)$ is written as $\ehat{\tau}$. For each $\bar{i} \in \bar{I}$ and $\tau \in \Delta^\varepsilon$, let
$$ D(\bar{i}, \tau) = (1-\varepsilon)\argmax\limits_{\mu \in \Delta} u_{\bar{i}}(\mu,\ehat{\tau})+\varepsilon \nu := \Bigl\{ (1-\varepsilon) \mu' + \varepsilon \nu \;\Big\vert\; \mu' \in \argmax_{\mu \in \Delta} u_{\bar{i}}(\mu, \ehat{\tau}) \Bigr\}, $$
where $\nu \in \Delta$ and each component of $\nu$ is $\frac{1}{K}$, and $u_{\bar{i}}(\mu,\ehat{\tau}) = \int_\Delta\int_A u_{\bar{i}}(a, \tau')\dif\mu(a)\dif\ehat{\tau}(\tau')$. Clearly, the correspondent $D$ is convex-valued. Moreover, Berge's Maximum Theorem implies that $D(\bar{i}, \cdot) \colon \Delta^\varepsilon \twoheadrightarrow \Delta^\varepsilon$ is nonempty, compact-valued, and upper hemicontinuous for each $\bar{i} \in \bar{I}$; $D(\cdot, \tau)\colon \bar{I} \twoheadrightarrow \Delta^\varepsilon$ is measurable for each $\tau$ due to the Measurable Maximum Theorem. Define a correspondence $B \colon \Delta^\varepsilon \twoheadrightarrow \Delta^\varepsilon$ as follows:
$$ B(\tau) = \int_{\bar{I}} D(\bar{i}, \tau) \dif\bar{\lambda}(\bar{i}). $$
Since $D$ is convex-valued, we have that $B(\tau)$ is convex-valued. Moreover, $B \colon \Delta^\varepsilon \twoheadrightarrow \Delta^\varepsilon$ is upper hemicontinuous; see \citet[p.~68]{Hildenbrand1974}. By Kakutani fixed point theorem, $B$ has a fixed point $\tau^\varepsilon$. That is, there is an $h^\varepsilon \in N^\varepsilon$ such that $\int_{\bar{I}} h^\varepsilon \dif\bar\lambda = \tau^\varepsilon$ and $h^\varepsilon(\bar{i}) \in D(\bar{i}, \tau^\varepsilon)$ for each $\bar{i} \in \bar{I}$. According to the construction of $(\bar{I}, \bar{\cI}, \bar{\lambda})$, we have that
$$ \int_I h^\varepsilon\circ\pi\dif\lambda = \int_{\bar{I}} h^\varepsilon\dif\bar{\lambda} =\tau^\varepsilon. $$
For any $i \in I$, we know that
$$ h^\varepsilon \bigl( \pi(i) \bigr) \in D \bigl( \pi(i), \tau^\varepsilon \bigr) = (1-\varepsilon)\argmax\limits_{\mu \in \Delta} u_i(\mu,\ehat{\tau^\varepsilon})+\varepsilon \nu. $$
Hence, $h^\varepsilon\circ\pi$ is an $\varepsilon$-robust perfect equilibrium associated with the perturbed player space $I$ and perturbation function $\varphi$.

Let $\{\varepsilon_n\}_{n\in \bZ_+}$ be a sequence such that $\varepsilon_n \in (0, 1)$ and $\varepsilon_n \to 0$. For each $\varepsilon_n$, there exists a measurable function $h^n$ from $\bar{I}$ to $\Delta$ such that $h^n \circ \pi$ is an $\varepsilon_n$-robust perfect equilibrium associated with the perturbed player space $I$. By Lemma 3 in \citet[p.~69]{Hildenbrand1974}, there is a randomized strategy profile $h$ such that $\lim\limits_{n \to \infty} \int_{\bar{I}} h^n \dif\bar\lambda = \int_{\bar{I}} h \dif\bar\lambda$ and for $\bar{\lambda}$-almost all $\bar{i} \in \bar{I}$, $h(\bar{i}) = \lim\limits_{k \to \infty}h^{n_k}(\bar{i})$ for a subsequence $\{h^{n_k}\}_{k=1}^\infty$. Since $\lim\limits_{n \to \infty}\int_{I} h^n\circ\pi\dif\lambda = \lim\limits_{n \to \infty}\int_{\bar{I}} h^n\dif\bar{\lambda} = \int_{\bar{I}} h\dif\bar{\lambda} = \int_I h\circ\pi\dif\lambda$, and for $\lambda$-almost all $i \in I$, $h\circ\pi(i) = h\bigl(\pi(i)\bigr) = \lim\limits_{k \to \infty}h^{n_k}\bigl(\pi(i)\bigr) = \lim\limits_{k \to \infty}h^{n_k}\circ\pi(i)$ for a subsequence $\{h^{n_k}\}_{k=1}^\infty$, we conclude that $h\circ\pi$ is a symmetric randomized robust perfect equilibrium. Finally, Proposition~\ref{prop:equivalence} implies that the symmetric randomized strategy robust perfect equilibrium $h\circ\pi$ can be lifted to a symmetric mixed strategy robust perfect equilibrium.

\paragraph{(2)} From part (1), we have a symmetric mixed strategy robust perfect equilibrium $g$. Then Theorem~\ref{thm:robust} implies that (almost) every realization $g_\omega$ of $g$ is a pure strategy robust perfect equilibrium. This concludes the proof.

\subsection{Proof of Claim~\ref{claim:example:non-robust}}
\label{sec:appen:example}

First we show that none of $\{a, b, c\}$ is weakly dominated. For an arbitrary action $p \in \{a, b, c\}$, and any mixed strategy $\xi$ that is not the pure strategy $p$, let $\tau$ be a distribution such that $\tau(p) = 1$. Then we have $u(p, \tau) > u(\xi, \tau)$, which implies that $p$ is not weakly dominated by $\xi$. Moreover, given the strategy profile $f$ and its induced action distribution $\tau^* = (\frac{1}{3}, \frac{1}{3}, \frac{1}{3})$, $a$, $b$ and $c$ are indifferent for each player. Thus, $f$ is an admissible Nash equilibrium.

Next we prove that $f$ satisfies the aggregate robustness. Fix a player $i$, suppose $f(i) = p$. We need to construct a sequence of perturbation functions $\{\varphi^n\}_{n \in \bZ_+}$ such that $p$ is always a best response to those perturbed societal summaries $\{\varphi^n(\tau^*)\}_{n \in \bZ_+}$. Pick a sequence $\{\varepsilon_n\}_{n \in \bZ_+}$ in $(0, \tfrac{1}{3})$ such that $\lim\limits_{n \to \infty} \varepsilon_n = 0$. We define each $\varphi^n$ as follows:
$$\varphi^n(\tau) = (1 - \varepsilon_n) \delta_\tau + (\varepsilon_n - \varepsilon_n^2) \delta_p + \varepsilon_n^2\eta,$$
where $\eta$ is the uniform distribution on $\Delta$. Clearly, $\varphi^n(\tau)$ weakly converges to $\delta_\tau$. Then we show that $p$ is a best response. For any action $p'\ne p$, we have
\begin{align*}
u\bigl(p, \varphi^n(\tau^*)\bigr) - u\bigl(p', \varphi^n(\tau^*)\bigr) ={} &(1 - \varepsilon_n) u(p, \tau^*) + (\varepsilon_n - \varepsilon_n^2) u(p, \delta_p) + \varepsilon_n^2 \int_\Delta u(p, \tau') \dif \eta(\tau') \\
&- (1 - \varepsilon_n) u(p', \tau^*) - (\varepsilon_n - \varepsilon_n^2) u(p', \delta_p) - \varepsilon_n^2 \int_\Delta u(p', \tau') \dif \eta(\tau').
\end{align*}
It is easy to see that $u(p, \tau^*) = u(p', \tau^*)$. In addition, according to the construction of $u$ we know that $u(p, \delta_p) - u(p', \delta_p) \ge 1$. Moreover, since $-1 \le u(p, \tau') \le 1$ for any $\tau'$, we conclude that $\int_\Delta u(p, \tau') \dif \eta(\tau') - \int_\Delta u(p', \tau') \dif \eta(\tau') \ge -2$. Therefore, we have
\begin{align*}
u\bigl(p, \varphi^n(\tau^*)\bigr) - u\bigl(p', \varphi^n(\tau^*)\bigr) \ge (\varepsilon_n - \varepsilon_n^2) - 2 \varepsilon_n^2 = \varepsilon_n - 3 \varepsilon_n^2 > 0.
\end{align*}
Hence $p$ is a best response to each perturbed societal summary $\varphi^n(\tau^*)$, and $f$ satisfies the aggregate robustness.

Finally we show that $f$ is not a robust perfect equilibrium. Otherwise suppose that $f$ is a robust perfect equilibrium. Then according to Proposition~\ref{prop:equivalence}, there exists a sequence of randomized strategy profiles $\{h^n\}_{n \in \bZ_+}$ and a sequence of positive constants $\{\varepsilon_n\}_{n \in \bZ_+}$ such that
\begin{enumerate}[label=(\roman*)]
\item each $h^n$ is an $\varepsilon_n$-robust perfect equilibrium (each $h^n$ is associated with a perturbed player space $I_n$) with $\varepsilon_n \to 0$ as $n$ goes to infinity,
\item for $\lambda$-almost all $i \in I$, there exists a subsequence $\{h^{n_k}\}_{k=1}^\infty$ (each $h^{n_k}$ is associated with $I_{n_k}$) such that $i \in \cap_{k=1}^\infty I_{n_k}$ and $\lim\limits_{k \to \infty} h^{n_k}(i) = f(i)$,
\item $\lim\limits_{n \to \infty} \int_I h^n(i) \dif\lambda(i) = \int_I f(i) \dif\lambda(i) = \lambda f^{-1}$.
\end{enumerate}
For each $n \in \bZ_+$, let $\tau^n = \int_I h^n(i) \dif\lambda(i)$. Then $\lim\limits_{n \to \infty} \tau^n = \tau^* = (\frac{1}{3}, \frac{1}{3}, \frac{1}{3})$.

Fix an $\varepsilon_n$-robust perfect equilibrium $h^n$ and its induced action distribution $\tau^n$. We claim that there exists an action which is not a best response for all players with respect to $\ehat{\tau^n}$. If all three actions are best responses, then we have
$$ \int_\Delta u(a, \tau') \dif\ehat{\tau^n}(\tau')  =  \int_\Delta u(b, \tau') \dif\ehat{\tau^n}(\tau')  =  \int_\Delta u(c, \tau') \dif\ehat{\tau^n}(\tau'). $$

Since $u(a, \tau) = 0$ for each $\tau \in \Delta$, we have $0  =  \int_\Delta u(a, \tau') \dif\ehat{\tau^n}(\tau')  =  \int_\Delta u(b, \tau') \dif\ehat{\tau^n}(\tau')$. Notice that $u(b, \tau) = 0$ when $\tau(b) \in [0.3, 0.8]$. Thus, we have that
$$ \int_{\{\tau' \mid \tau'(b) < 0.3\}} u(b, \tau') \dif\ehat{\tau^n}(\tau')  +  \int_{\{\tau' \mid \tau'(b) > 0.8\}} u(b, \tau') \dif\ehat{\tau^n}(\tau')  =  0. $$
Therefore,
\begin{align*}
\int_{\{\tau' \mid \tau'(b) > 0.8\}} u(b, \tau') \dif\ehat{\tau^n}(\tau')
& = \int_{\{\tau' \mid \tau'(b) < 0.3\}} - u(b, \tau') \dif\ehat{\tau^n}(\tau')   \\
& = \int_{\{\tau' \mid \tau'(b) < 0.2\}} - u(b, \tau') \dif\ehat{\tau^n}(\tau') + \int_{\{\tau' \mid 0.2 \le \tau'(b) < 0.3\}} - u(b, \tau') \dif\ehat{\tau^n}(\tau')\\
& > \int_{\{\tau' \mid \tau'(b) < 0.2\}} - u(b, \tau') \dif\ehat{\tau^n}(\tau')
=  \int_{\{\tau' \mid \tau'(b) < 0.2\}} 1 \dif\ehat{\tau^n}(\tau')   \\
& \ge \int_{\{\tau' \mid \tau'(c) > 0.8\}} 1 \dif\ehat{\tau^n}(\tau')
\ge  \int_{\{\tau' \mid \tau'(c) > 0.8\}} u(c, \tau') \dif\ehat{\tau^n}(\tau'),
\end{align*}
where the first inequality is because $\ehat{\tau^n}$ has a full support on $\Delta$ and $- u(b, \tau')$ is positive on $\{\tau' \mid 0.2 \le \tau'(b) < 0.3\}$, and the second inequality is because the set $\{\tau' \mid \tau'(c) \ge 0.8\}$ is a subset of $\{\tau' \mid \tau'(b) \le 0.2\}$. So we conclude that
$$ \int_{\{\tau' \mid \tau'(b) > 0.8\}} u(b, \tau') \dif\ehat{\tau^n}(\tau')
>  \int_{\{\tau' \mid \tau'(c) > 0.8\}} u(c, \tau') \dif\ehat{\tau^n}(\tau'). $$

Similarly, we have that $0  =  \int_\Delta u(a, \tau') \dif\ehat{\tau^n}(\tau')  =  \int_\Delta u(c, \tau') \dif\ehat{\tau^n}(\tau')$, and hence
$$ \int_{\{\tau' \mid \tau'(c) > 0.8\}} u(c, \tau') \dif\ehat{\tau^n}(\tau')  >  \int_{\{\tau' \mid \tau'(b) > 0.8\}} u(b, \tau') \dif\ehat{\tau^n}(\tau'). $$
So we have a contradiction.

According to the definition of $\varepsilon_n$-robust perfect equilibrium, the probability of that non-best-response action is less than $2\varepsilon_n$. Thus, each limit point of $\{\tau^n\}_{n \in \bZ_+}$ should lie on the boundary of $\Delta$, and cannot be $\tau^*$. Therefore, $f$ satisfies the admissibility and aggregate robustness, but is not a robust perfect equilibrium.

\subsection{Proofs of the results in Section~\ref{sec:application}}
\label{sec:appen:application}

\begin{proof}[Proof of Proposition~\ref{prop:congestion-PE-exist}]
The existence of a symmetric mixed strategy robust perfect equilibrium and a pure strategy robust perfect equilibrium is shown in Theorem~\ref{thm:exist}. Let $G$ be an atomless congestion game with strictly increasing cost functions. Since every cost function is strictly increasing, there exists a unique traffic flow $\tau^0$ induced by pure strategy Nash equilibria; see \citet[Section~3.1.3, p.~3.13]{BMW1956} for more details. Let $g$ be a mixed strategy robust perfect equilibrium. By Theorem~\ref{thm:robust}, for $\bP$-almost all $\omega \in \Omega$, $g_{\omega}$ is a pure strategy robust perfect equilibrium, and hence a Nash equilibrium as well. Note that $\tau^0$ is the unique traffic flow induced by Nash equilibria. Hence, for $\bP$-almost all $\omega \in \Omega$, $s(g_{\omega})=\tau^0$. Therefore, by the exact law of large numbers, $\int_I \bP g_i^{-1} \dif\lambda(i) = \lambda g_\omega^{-1} = s(g_\omega) = \tau^0$ for $\bP$-almost all $\omega \in \Omega$, that is, the traffic flow induced by robust perfect equilibria is unique. Furthermore, if $g$ is symmetric, then we have $\bP g_i^{-1} = \int_I \bP g_i^{-1} \dif\lambda(i) = \tau^0$ for every driver $i$, which implies that $G$ has a unique symmetric robust perfect equilibrium.
\end{proof}

\begin{proof}[Proof of Proposition~\ref{prop:algorithm}]
Let $h^\varepsilon$ be the randomized strategy profile induced by $g^\varepsilon$. Then, $h^\varepsilon(i) = \sum\limits_{p \in \cP} \tau^\varepsilon(p) \delta_p$ for each player $i \in I$. The societal summary induced by $h^\varepsilon$ is $\int_I h^\varepsilon(i) \dif\lambda(i) = \tau^\varepsilon$. Let $\varphi$ be a perturbation function such that $\varphi(\tau) = (1-\varepsilon)\delta_\tau + \varepsilon \eta$ for each $\tau$, where $\eta$ is the uniform distribution on $\Delta$. In the following, we prove that $h^\varepsilon$ is an $\varepsilon$-robust perfect equilibrium (with the perturbed player space being $I$).

The optimization problem~\eqref{eq:optimization} can be rewritten as:
\begin{equation}
\label{eq:optimization'}
\tag{$**$}
\begin{aligned}
& \underset{\tau \in \Delta}{\text{minimize}}	& & \sum_{e \in E} \int_0^{\sum\limits_{e \in p}\tau(p)} C_e^{\varepsilon}(x) \dif x \\
& \text{subject to}								& & \sum\limits_{p \in \cP} \tau(p) = 1 \text{ and } \tau(p) \ge \varepsilon \text{ for all } p \in \cP.
\end{aligned}
\end{equation}
Since each cost function $C_e$ is increasing, each perturbed cost function $C_e^\varepsilon$ is also increasing, and hence the optimization problem is a convex problem. Thus, the Karush–Kuhn–Tucker conditions are necessary and sufficient. The Lagrangian function is
$$ L(\tau, \lambda, \mu) = \sum_{e \in E} \int_0^{\sum\limits_{e \in p}\tau(p)} C_e^\varepsilon(x) \dif x + \lambda \Bigl( \sum\limits_{p \in \cP} \tau(p) - 1 \Bigr) + \sum\limits_{p \in \cP} \mu_p \bigl( \varepsilon - \tau(p) \bigr), $$
where $\lambda$ and $\mu_p$ are the corresponding Lagrangian coefficients.

Since $\tau^\varepsilon = \bigl( \tau^\varepsilon(p) \bigr)_{p \in \cP}$ is a solution to the optimization problem~\eqref{eq:optimization}, it is also a solution to the optimization problem~\eqref{eq:optimization'}. Thus, it satisfies the Karush–Kuhn–Tucker conditions; that is, for each $p \in \cP$,
$$ \frac{\partial L}{\partial \tau(p)}(\tau^\varepsilon, \lambda, \mu)  =  \sum\limits_{e \in p} C_e^\varepsilon \bigl (\tau^\varepsilon(e) \bigr) + \lambda - \mu_p  =  0 $$
with the complementary slackness
$$ \mu_p \bigl( \varepsilon - \tau^\varepsilon(p) \bigr) = 0 \text{ and } \mu_p \ge 0. $$
Thus, for each $p \in \cP$, $\sum\limits_{e \in p} C_e^\varepsilon \bigl( \tau^\varepsilon(e) \bigr) \ge -\lambda$ and the equality holds when $\tau^\varepsilon(p) > \varepsilon$.

There exists a path $p \in \cP$ such that $\sum\limits_{e \in p} C_e^\varepsilon \bigl( \tau^\varepsilon(e) \bigr) = -\lambda$. Otherwise, $\tau^\varepsilon(p) = \varepsilon$ for each $p \in \cP$, and hence $1 = \sum_{p \in \cP} \tau^\varepsilon(p) = \sum_{p \in \cP} \varepsilon < \sum_{p \in \cP} |\cP| = 1$, which is a contradiction. Note that the path $p$ with $\sum\limits_{e \in p} C_e^\varepsilon \bigl( \tau^\varepsilon(e) \bigr) = -\lambda$ is a path with lowest perturbed cost. If the path $p$ is not a path with lowest perturbed cost, then $\sum\limits_{e \in p} C_e^\varepsilon \bigl( \tau^\varepsilon(e) \bigr) > -\lambda$, and hence $\mu_p > 0$ and $\tau^\varepsilon(p) = \varepsilon$. That is, each player $i$ chooses the path $p$ with probability $\tau^\varepsilon(p) = \varepsilon$ in the strategy $h^\varepsilon(i) = \sum\limits_{p \in \cP} \tau^\varepsilon(p) \delta_p$. So we conclude that $h^\varepsilon$ is an $\varepsilon$-robust perfect equilibrium in randomized strategies (with the perturbed player space $I$). By the same logic in the proof of Proposition~\ref{prop:equivalence}, $g^\varepsilon$ is an $\varepsilon$-robust perfect equilibrium with the perturbed player space being $I$.
\end{proof}

\begin{proof}[Proof of the statement in Remark~\ref{rmk:algorithm}]
Suppose that $f$ is a pure strategy profile with the action distribution $\tau^*$. We want to show that $f$ is a robust perfect equilibrium. Consider the randomized strategy profile $h$ with $h(i) = \sum\limits_{p \in \cP} \tau^* (p) \delta_p$ for each $i \in I$. According to the construction of $\tau^*$, we know that $\tau^*$ is a limit point of $ \{\tau^n\}_{n=1}^\infty$. Without loss of generality, we assume $\tau^n \to \tau^*$. For each $n \in \bZ_+$, let $g^n$ be the $\varepsilon_n$-robust perfect equilibrium obtained in Proposition~\ref{prop:algorithm}, and $h^n$ be the equivalent randomized strategy $\varepsilon_n$-robust perfect equilibrium. That is, $h^n(i) = \sum_{p \in \cP} \tau^n(p) \delta_p$ for each $i \in I$.

So we have $\lim\limits_{n \to \infty} h^n(i)= h(i)$ for each $i \in I$ and $\lim\limits_{n \to \infty} \int_I h^n(i) \dif\lambda(i) = \int_I h(i) \dif\lambda(i)$. Hence, $h$ is a (randomized strategy) robust perfect equilibrium. By Proposition~\ref{prop:equivalence}, $h$ can be lifted to a mixed strategy robust perfect equilibrium $g \colon I \times \Omega \to \cP$. Due to the ex post robust perfection property of $g$, we conclude that for $\bP$-almost all $\omega \in \Omega$, $g_\omega$ is a pure strategy robust perfect equilibrium. By the ELLN, the distribution induced by $g_\omega$ is $\int_I h(i) \dif\lambda(i) = \tau^*$. That is, $g_\omega$ and $f$ have the same distribution. Since all the players have the same action set and payoff function, $f$ is also a pure strategy robust perfect equilibrium.
\end{proof}

\begin{proof}[Proof of Proposition~\ref{prop:congestion-NEPE}]
Suppose that there are two paths $a$ and $b$, and $g$ is a mixed strategy admissible Nash equilibrium. We show that $g$ is a robust perfect equilibrium. Let $h$ be a randomized strategy profile induced by $g$, i.e., $h(i) = \bP g_i^{-1}$ for almost all $i$. According to Proposition~\ref{prop:equivalence}, we only need to show that $h$ is a robust perfect equilibrium. It is easy to check that $h$ is an admissible Nash equilibrium. Let $\tau^* = (\tau_a^*, \tau_b^*)$ be the societal summary of $h$, that is, $\tau^* = \int_I h(i) \dif\lambda(i)$.

If $u(a, \tau) = u(b, \tau)$ for each $\tau$, then $a$ and $b$ are indifferent for each player no matter what the distribution is. Thus, it is trivial that $h$ is a randomized strategy robust perfect equilibrium. In the following, we assume that the functions $u(a, \tau)$ and $u(b, \tau)$ are different. Our analysis will be divided into two cases.

\paragraph{Case 1.} If $\tau_a^*$ and $\tau_b^*$ are positive, then neither $a$ nor $b$ is a weakly dominated strategy, and we have $u(a, \tau^*) = u(b, \tau^*)$. Thus, there exist two distributions $\tau^1$ and $\tau^2$ in $\Delta$ such that $u(a, \tau^1) > u(b, \tau^1)$ and $u(a, \tau^2) < u(b, \tau^2)$. Therefore, there exists a full-support measure $\eta$ on $\Delta$ such that
$$ \int_\Delta u(a, \tau') \dif\eta(\tau')  =  \int_\Delta u(b, \tau') \dif\eta(\tau'). $$
For each $\varepsilon>0$, we define the perturbation function $\varphi(\tau) = (1-\varepsilon)\delta_{\tau} + \varepsilon\eta$ for each $\tau \in \Delta$. For simplicity, we write $\varphi(\tau)$ as $\ehat{\tau}$. Let $h^\varepsilon = (1-\varepsilon)h + \varepsilon\tau^*$, hence $\int_I h^\varepsilon(i) \dif\lambda(i) = \tau^*$, and
\begin{align*}
\int_\Delta u(a, \tau') \dif\ehat{\tau^*}(\tau')
    & = (1-\varepsilon)u(a, \tau^*) + \varepsilon \int_\Delta u(a, \tau')\dif\eta(\tau')    \\
    & = (1-\varepsilon)u(b, \tau^*) + \varepsilon \int_\Delta u(b, \tau')\dif\eta(\tau') = \int_\Delta u(b, \tau') \dif\ehat{\tau^*}(\tau').
\end{align*}
Then we can easily see that $h^\varepsilon$ is an $\varepsilon$-robust perfect equilibrium (associated with the perturbed player space $I$), and hence by letting $\varepsilon \to 0$, we can see that $h$ is a robust perfect equilibrium.

\paragraph{Case 2.} If $\tau_a^*$ or $\tau_b^*$ is zero, then without loss of generality we assume $\tau_b^* = 0$. Since $a$ is not weakly dominated by $b$, there exists a distribution $\tau_3$ such that $u(a, \tau_3) > u(b, \tau_3)$. Therefore, we can find a full-support distribution $\zeta$ on $\Delta$, which assigns most of the weight at $\tau_3$, such that
$$ \int_\Delta u(a, \tau') \dif\zeta(\tau')  >  \int_\Delta u(b,\tau') \dif\zeta(\tau'). $$
Let $\rho = \int_\Delta u(a, \tau') \dif\zeta(\tau') - \int_\Delta u(b,\tau') \dif\zeta(\tau') > 0$. For each $\varepsilon>0$, since $u(a, \delta_a) - u(b, \delta_a) \ge 0$ and $u$ is continuous on $\Delta$, there exists $\varepsilon' \in (0, \varepsilon)$ such that $u \bigl( a, (1-\varepsilon')\delta_a + \varepsilon'\delta_b \bigr) - u \bigl( b, (1-\varepsilon')\delta_a + \varepsilon'\delta_b \bigr) > -\frac{\varepsilon}{1-\varepsilon} \rho$.

Let $h^\varepsilon(i) \equiv (1-\varepsilon') \delta_a + \varepsilon' \delta_b$. Then induced distribution $\tau^\varepsilon = (1-\varepsilon') \delta_a + \varepsilon' \delta_b$. Let the perturbation function $\varphi(\tau) = (1-\varepsilon)\delta_{\tau} + \varepsilon\eta$ for each $\tau \in \Delta$, and we write $\varphi(\tau)$ as $\ehat{\tau}$. It is clear that
\begin{align*}
\int_\Delta u(a, \tau') \dif\ehat{\tau^\varepsilon}(\tau')
    & = (1-\varepsilon) u\bigl(a, (1-\varepsilon')\delta_a + \varepsilon'\delta_b \bigr) + \varepsilon \int_\Delta u(a, \tau') \dif\zeta(\tau')    \\
    & > (1-\varepsilon) u\bigl(b, (1-\varepsilon')\delta_a + \varepsilon'\delta_b \bigr) - \varepsilon\rho + \varepsilon \int_\Delta u(b, \tau') \dif\zeta(\tau') + \varepsilon\rho  \\
    & =  \int_\Delta u(b, \tau') \dif\ehat{\tau^\varepsilon}(\tau').
\end{align*}
While $h^\varepsilon(i)$ assigns the probability $(1-\varepsilon')$ on $a$ and $1-\varepsilon' > 1-\varepsilon$. Then $h^\varepsilon$ is an $\varepsilon$-robust perfect equilibrium (associated with the perturbed player space $I$), and hence by letting $\varepsilon \to 0$, we can see that $h$ is a robust perfect equilibrium.
\end{proof}

\small
\singlespacing

\end{document}